\newcommand{\BA}[1]{\textcolor{blue}{BA: #1}}
\newcommand{\PN}[1]{\textcolor{purple}{PN: #1}}
\newcommand{\pinkduplicates}{1}
\newcommand{\dup}[1]{%
  \if\pinkduplicates1
    \textcolor{magenta}{#1}%
  \else
    #1%
  \fi
}
\newcommand{\LongOrShort}[2]{#2}
\newcommand{\LongOrShortPoPl}[2]{#2}
\newcommand{\BigOrSmallRules}[2]{#2}
\newcommand{\TablesOrNot}[2]{#2}
\newcommand{\emphasize}[1]{\textbf{#1}}
\newcommand{\type}{\ensuremath{\mathsf{~type}}}
\newcommand{\ctx}{\ensuremath{\mathsf{~ctx}}}
\newcommand{\set}{\ensuremath{\mathsf{Set}}}
\newcommand{\Set}{\set}
\newcommand{\id}{\ensuremath{\mathsf{Id}}}
\newcommand{\emptyctx}{\diamond}
\newcommand{\iso}{\cong}
\newcommand{\dom}{\text{dom}}
\newcommand{\cod}{\text{cod}}
\newcommand{\Tau}{\ensuremath{\mathcal{T}}}
\newcommand{\TT}{\Tau}
\newcommand{\CC}{\ensuremath{\mathcal{C}}}
\newcommand{\DD}{\ensuremath{\mathcal{D}}}
\newcommand{\EE}{\ensuremath{\mathcal{E}}}
\newcommand{\app}{\ensuremath{{\mathsf{app}}}}
\newcommand{\Cat}{\ensuremath{{\mathsf{Cat}}}}
\newcommand{\EM}{\ensuremath{{\mathsf{EM}}}}
\newcommand{\AWFS}{AWFS\xspace}
\newcommand{\AWFSs}{AWFSs\xspace}
\newcommand{\PiT}[2]{\ensuremath{\Pi(#1,#2)}}
\newcommand{\SigmaT}[2]{\ensuremath{\Sigma(#1,#2)}}
\newcommand{\compiso}[3]{\ensuremath{i^{\mathsf{comp}}_{#1, #2, #3}}}
\newcommand{\compisoinv}[3]{{\ensuremath{i^{\mathsf{comp}^{-1}}_{#1, #2, #3}}}}
\newcommand{\idiso}[1]{\ensuremath{i^{\mathsf{id}}_{#1}}}
\newcommand{\idisoinv}[1]{\ensuremath{i^{\mathsf{id}^{-1}}_{#1}}}
\newcommand{\subiso}[3]{\ensuremath{i^\mathsf{sub}_{#1, #2, #3}}}
\newcommand{\subisoinv}[3]{{\ensuremath{i^{\mathsf{sub}^{-1}}_{#1, #2, #3}}}}
\newcommand{\pairmor}[2]{\ensuremath{\mathsf{pair}_{\SigmaT{#1}{#2}}}}
\newcommand{\projmor}[2]{\ensuremath{\mathsf{proj}_{\SigmaT{#1}{#2}}}}
\newcommand{\appmor}[2]{\ensuremath{\app_{{\PiT{#1}{#2}}}}}
\newcommand{\lambdamor}[2]{\ensuremath{\lambda_{\PiT{#1}{#2}}}}
\newcommand{\reflmor}[1]{\ensuremath{r_{#1}}}
\newcommand{\jmor}[3]{\ensuremath{j_{#1, #2, #3}}}
\newcommand{\subtypesigma}[2]{{\ensuremath{\Sigma(#1,#2)}}}
\newcommand{\subtypepi}[2]{{\ensuremath{\Pi(#1,#2)}}}
\newcommand{\subtypeid}[1]{{\ensuremath{\id_{#1}}}}
\newcommand{\interp}[1]{\ensuremath{{\llbracket {#1} \rrbracket}}}
\newcommand{\CTT}{\textsf{CCTT}\xspace}
\newcommand{\emphCTT}{\emphasize{\CTT}\xspace}
\newcommand{\CTTsplit}{\CTT_{\textsf{split}}\xspace}
\newcommand{\MLTTcoe}{\(\text{MLTT}_{\text{coe}}\)\xspace}
\newcommand{\NatModDO}{\(\text{NatMod}_{\text{DO}}\)\xspace}
\newcommand{\vdashiso}{%
\>
\mathbin{\vbox{\offinterlineskip\ialign{%
\hfil##\hfil\cr
$\scriptstyle\sim$\cr
$\vdash$\cr
}}}
\>
}
\newcommand{\proofskip}{0.5em}
\newenvironment{bproof}
{\leavevmode\hbox\bgroup}
{\DisplayProof\egroup}
\newcommand{\unaryRule}[3]{\begin{bproof}
\AxiomC{\ensuremath{#1}}
\RightLabel{\rulelabel{#3}}
\UnaryInfC{\ensuremath{#2}}
\end{bproof}
\BigOrSmallRules{
\vspace{\proofskip}
}
{}
}
\newcommand{\unaryRuleBinaryConcl}[4]{\begin{bproof}
\AxiomC{\ensuremath{#1}}
\RightLabel{\rulelabel{#4}}
\UnaryInfC{\ensuremath{#2}}
\def\extraVskip{0.5pt}
\noLine
\UnaryInfC{\ensuremath{#3}}
\end{bproof}
\BigOrSmallRules{
\vspace{\proofskip}
}
{}
}
\newcommand{\binaryRule}[4]{\begin{bproof}
\AxiomC{\ensuremath{#1}}
\AxiomC{\ensuremath{#2}}
\RightLabel{\rulelabel{#4}}
\BinaryInfC{\ensuremath{#3}}
\end{bproof}
\BigOrSmallRules{
\vspace{\proofskip}
}
{}
}
\newcommand{\ternaryRule}[5]{\begin{bproof}
\AxiomC{\ensuremath{#1}}
\AxiomC{\ensuremath{#2}}
\AxiomC{\ensuremath{#3}}
\RightLabel{\rulelabel{#5}}
\TrinaryInfC{\ensuremath{#4}}
\end{bproof}
\BigOrSmallRules{
\vspace{\proofskip}
}
{}
}
\newcommand{\quadRule}[6]{\begin{bproof}
\AxiomC{\ensuremath{#1}}
\AxiomC{\ensuremath{#2}}
\AxiomC{\ensuremath{#3}}
\AxiomC{\ensuremath{#4}}
\RightLabel{\rulelabel{#6}}
\QuaternaryInfC{\ensuremath{#5}}
\end{bproof}
\BigOrSmallRules{
\vspace{\proofskip}
}
{}
}
\newcommand{\senRule}[8]{\begin{bproof}
\AxiomC{\ensuremath{#1}}
\AxiomC{\ensuremath{#2}}
\AxiomC{\ensuremath{#3}}
\def\extraVskip{0.5pt}
\noLine
\TrinaryInfC{\ensuremath{#4} \quad \ensuremath{#5} \quad \ensuremath{#6}}
\def\extraVskip{2pt}
\RightLabel{\rulelabel{#8}}
\UnaryInfC{\ensuremath{#7}}
\end{bproof}
\BigOrSmallRules{
\vspace{\proofskip}
}
{}
}
\newcommand{\binaryRuleBinaryConcl}[5]{\begin{bproof}
\AxiomC{\ensuremath{#1}}
\AxiomC{\ensuremath{#2}}
\RightLabel{\rulelabel{#5}}
\BinaryInfC{\ensuremath{#3}}
\def\extraVskip{0.5pt}
\noLine
\UnaryInfC{\ensuremath{#4}}
\end{bproof}
\BigOrSmallRules{
\vspace{\proofskip}
}
{}
}
\newcommand{\binaryRuleTernaryConcl}[6]{\begin{bproof}
\AxiomC{\ensuremath{#1}}
\AxiomC{\ensuremath{#2}}
\RightLabel{\rulelabel{#6}}
\BinaryInfC{\ensuremath{#3}}
\def\extraVskip{0.5pt}
\noLine
\UnaryInfC{\ensuremath{#4}}
\def\extraVskip{0.5pt}
\noLine
\UnaryInfC{\ensuremath{#5}}
\end{bproof}
\BigOrSmallRules{
\vspace{\proofskip}
}
{}
}
\newcommand{\ternaryRuleBinaryConcl}[6]{\begin{bproof}
\AxiomC{\ensuremath{#1}}
\AxiomC{\ensuremath{#2}}
\AxiomC{\ensuremath{#3}}
\RightLabel{\rulelabel{#6}}
\TrinaryInfC{\ensuremath{#4}}
\def\extraVskip{0.5pt}
\noLine
\UnaryInfC{\ensuremath{#5}}
\end{bproof}
\BigOrSmallRules{
\vspace{\proofskip}
}
{}
}
\newcommand{\ternaryRuleTernaryConcl}[7]{\begin{bproof}
  \AxiomC{\ensuremath{#1}}
  \AxiomC{\ensuremath{#2}}
  \AxiomC{\ensuremath{#3}}
  \RightLabel{\rulelabel{#7}}
  \TrinaryInfC{\ensuremath{#4}}
  \def\extraVskip{0.5pt}
  \noLine
  \UnaryInfC{\ensuremath{#5}}
  \def\extraVskip{0.5pt}
  \noLine
  \UnaryInfC{\ensuremath{#6}}
  \end{bproof}
\BigOrSmallRules{
\vspace{\proofskip}
}
{}
}
\newcommand{\quadRuleBinaryConcl}[7]{\begin{bproof}
\AxiomC{\ensuremath{#1}}
\AxiomC{\ensuremath{#2}}
\AxiomC{\ensuremath{#3}}
\AxiomC{\ensuremath{#4}}
\RightLabel{\rulelabel{#7}}
\QuaternaryInfC{\ensuremath{#5}}
\def\extraVskip{0.5pt}
\noLine
\UnaryInfC{\ensuremath{#6}}
\end{bproof}
\BigOrSmallRules{
\vspace{\proofskip}
}
{}
}
\newcommand{\quadRuleTernaryConcl}[8]{\begin{bproof}
\AxiomC{\ensuremath{#1}}
\AxiomC{\ensuremath{#2}}
\AxiomC{\ensuremath{#3}}
\AxiomC{\ensuremath{#4}}
\RightLabel{\rulelabel{#8}}
\QuaternaryInfC{\ensuremath{#5}}
\def\extraVskip{0.5pt}
\noLine
\UnaryInfC{\ensuremath{#6}}
\def\extraVskip{0.5pt}
\noLine
\UnaryInfC{\ensuremath{#7}}
\end{bproof}
\BigOrSmallRules{
\vspace{\proofskip}
}
{}
}
\newcommand{\rulelabel}[1]{\hypertarget{#1}{\textsf{\scriptsize #1}}}
\newcommand{\ruleref}[1]{\hyperlink{#1}{\textsf{#1}}}
\crefname{relatedwork}{Related Work}{Related Work}
\theoremstyle{acmdefinition}
\newtheorem{remark}[theorem]{Remark}
\newtheorem{notation}[theorem]{Notation}
\newtheorem{construction}[theorem]{Construction}
\newtheorem{relatedwork}[theorem]{Related Work}
\begin{document}

\title{From Semantics to Syntax: A Type Theory for Comprehension Categories}
\author{Niyousha Najmaei}
\email{najmaei@lix.polytechnique.fr}
\orcid{0009-0001-2892-5537}
\affiliation{%
  \institution{\'Ecole Polytechnique}
  \country{France}
}

\author{Niels van der Weide}
\email{nweide@cs.ru.nl}
\orcid{0000-0003-1146-4161}
\affiliation{%
  \institution{Radboud University Nijmegen}
  \country{Netherlands}}

\author{Benedikt Ahrens}
\email{B.P.Ahrens@tudelft.nl}
\orcid{0000-0002-6786-4538}
\affiliation{%
  \institution{Delft University of Technology}
  \country{Netherlands}
}

\author{Paige Randall North}
\email{p.r.north@uu.nl}
\orcid{0000-0001-7876-0956}
\affiliation{%
 \institution{Utrecht University}
 \country{Netherlands}}


\begin{abstract}
  Recent models of intensional type theory have been constructed in algebraic weak factorization systems (\AWFSs).
  \AWFSs give rise to comprehension categories that feature non-trivial morphisms between types;
  these morphisms are not used in the standard interpretation of Martin-Löf type theory in comprehension categories.

  We develop a type theory that internalizes morphisms between types, reflecting this semantic feature back into syntax.
  Our type theory comes with $\Pi$-, $\Sigma$-, and identity types.
  We discuss how it can be viewed as an extension of Martin-Löf type theory with coercive subtyping, as sketched by Coraglia and Emmenegger.
  We furthermore define semantic structure that interprets our type theory and prove a soundness result.
  Finally, we exhibit many examples of the semantic structure, yielding a plethora of interpretations.

\end{abstract}

\begin{CCSXML}
<ccs2012>
<concept>
<concept_id>10003752.10003790.10011740</concept_id>
<concept_desc>Theory of computation~Type theory</concept_desc>
<concept_significance>500</concept_significance>
</concept>
<concept>
<concept_id>10003752.10010124.10010131.10010137</concept_id>
<concept_desc>Theory of computation~Categorical semantics</concept_desc>
<concept_significance>500</concept_significance>
</concept>
</ccs2012>
\end{CCSXML}

\ccsdesc[500]{Theory of computation~Type theory}
\ccsdesc[500]{Theory of computation~Categorical semantics}

\keywords{categorical semantics, comprehension categories, subtyping, type theory}

\maketitle

\section{Introduction}

There is a fruitful synergy between programming languages and mathematics, and more precisely between the study of dependent type theories and category theory.
Category theory provides ways to give semantics to type theory, enabling proofs of consistency, normalization, and other desirable properties \cite{Kapulkin2021,DBLP:conf/lics/Gratzer22}.
On the other hand, type theories provide internal languages for categories, paving the way for concise and computer-verifiable reasoning about the mathematics in a category \cite{curien14,cohen:2017,fiore:2022,DBLP:journals/lmcs/GratzerKNB21,north:2019}.
In summary, type theory can distill the salient features of semantic structures, and categorical structure can help in the study and development of type theories.

In this paper, we extend the synergy between type theory and category theory by giving a syntax for comprehension categories~\cite{jacobs93,jacobs99} -- one of the most common categorical structures for interpreting Martin-Löf type theory~\cite{mltt}. This syntax adds structure to Martin-Löf type theory.

Furthermore, we give a full analysis of the categorical structure that underlies subtyping, and we recognize this structure in many intensional models of Martin-Löf type theory.

\subsection{Interpretation of Type  Theories in Comprehension Categories}

We sketch the interpretation of type theories in categorical structures, contrasting that of MLTT with that of other type theories.

\subsubsection{Interpretation of MLTT}

Martin-Löf type theory (MLTT), and variations thereof, are typically given semantics in categories with structure, such as categories with families~\cite{dybjer96}, contextual categories~\cite{cartmell:1978,Cartmell86}, and type categories~\cite{pitts:2000}.
All of these categorical structures can be viewed as a \emph{full (or discrete) comprehension category} \cite{DBLP:conf/aplas/AhrensLN24}. 

A comprehension category consists (among other things) of two categories: one whose objects interpret the contexts of the type theory and one which interprets its types. Thus, at first glance, a comprehension category can express both morphisms between contexts and morphisms between types. In MLTT, however, there is only one such notion in the sense that the morphisms between both contexts and types are generated by the terms of the theory. Hence, morphisms between types can in turn be recovered from the context morphisms. The requirements that a comprehension category is discrete or that it is full then represent the two universal ways of deriving the type morphisms from the context morphisms: discreteness assumes that there are as few type morphisms as possible,
whereas fullness assumes there are as many type morphisms as possible.
Thus, both the requirements of discreteness and fullness are used to `kill off' this `extra dimension' of morphisms.

\subsubsection{Interpretation of Intensional Type Theories}

To give semantics to intensional type theories,
such as homotopy type theory~\cite{hottbook} and cubical type theories~\cite{cohen:2017},
one generally uses more refined ideas, in particular from higher category theory and homotopy theory.
Specifically, such type theories are often given semantics in an \emph{algebraic weak factorization system} (or \emph{\AWFS} for short)~\cite{gambino:2023,hofmann:1998}.
These \AWFSs are usually interesting in their own right,
and they endow the category with higher dimensional structure
that we can understand synthetically via the identity type.
Just like other categorical structures such as categories with families, \AWFSs give rise to comprehension categories.
However, these comprehension categories are typically neither full nor discrete, nor are they split\footnote{A comprehension category is split if the chosen lifts of identities are identities and the chosen lift of any composite is the composition of the individual lifts (see \cref{def:split}).}.
In other words, these comprehension categories have morphisms between types that do \emph{not} arise from context morphisms, and their substitution structure is more intricate than that of Martin-Löf type theory itself. We argue that this extra structure in the semantics should not be ignored, and it is captured by the type theory presented here.


\subsection{New Type Theories: Reflecting Semantic Features Into the Syntax}

We develop a type theory that allows for synthetic reasoning about the comprehension categories arising from \AWFSs --- thus, in particular, about morphisms of types and about non-split substitution.
To this end, we develop a new type theory, which we call \emph{comprehension category type theory} or \CTT for short.
This type theory is obtained by ``reflecting'' some semantic features back into traditional type theory.

We are not the first to develop type theory by reflecting semantic features back into the syntax.
For instance, a type theory for non-split substitution has been developed by \citet{curien14}, who developed a type theory with explicit substitutions to reflect the non-split substitution structure of comprehension categories into type theory.
Similarly, \citet{coraglia_et_al:LIPIcs.TYPES.2023.3} studied generalized categories with families for the semantics of type theory; they discovered that, syntactically, such generalized structures give rise to a notion of coercive subtyping.
In our work, we encounter both of these syntactic features, and, in particular, give a systematic and extensive account of the syntax and semantics of subtyping sketched by \citet{coraglia_et_al:LIPIcs.TYPES.2023.3}.




Specifically, by not imposing the restrictions of discreteness or fullness, our syntax and semantics can capture coercive subtyping \cite{coraglia_et_al:LIPIcs.TYPES.2023.3}. As argued there, every comprehension category is equipped with a notion of subtyping,
which is given by morphisms between types.
Fullness expresses that coercions and terms are the same, making the subtyping, in one sense, trivial.
Discreteness expresses that all coercions arise from identities of types,
making the subtyping, in another sense, trivial.
To faithfully model coercive subtyping,
one thus need to use comprehension categories that are neither full nor discrete.

Moreover, from a practical point of view, having both terms and type morphisms in a type theory allows for tight control over definitional equalities.
Specifically, when considering semantics of type theory in \AWFSs, types and type morphisms are interpreted as algebras for some monad, where the algebra structure models the transport, $(a = a') \to B(a) \to B(a')$, of structure along a term of an identity type, and morphisms of algebras preserve such transport \emph{strictly}, up to definitional equality; see also \cref{sec:awfs}.
Semantics in \AWFSs justify adding syntactic rules expressing such definitional equalities, which only hold up to \emph{propositional} identity in MLTT.

\subsection{Contributions and Synopsis}

The present paper is organized as follows.
\begin{enumerate}
\item In \cref{sec:from-comp-cat-judgements-rules}, we design judgements and structural rules for \CTT, our language for comprehension categories. We prove soundness for our rules with respect to comprehension categories.
  We show how our rules can be interpreted as the rules for a type theory with subtyping, extending a sketch by Coraglia and Emmenegger \cite{coraglia_et_al:LIPIcs.TYPES.2023.3}.
\item In \cref{sec:type-formers}, we design rules for type formers --- dependent pairs, dependent functions, and identity types --- for our structural rules. We prove soundness for these rules in suitably structured comprehension categories. We show our type formers can be interpreted with respect to subtyping, again extending a sketch by Coraglia and Emmenegger \cite{coraglia_et_al:LIPIcs.TYPES.2023.3}.
\item In \cref{sec:split}, we discuss a variant of \CTT with strictly functorial substitution that integrates \emph{splitness}. This syntax is easier to work with, at the cost of having fewer models.
\item In \cref{sec:rel-work}, we discuss related work.
\end{enumerate}

\section{Review: Comprehension Categories from Algebraic Weak Factorization Systems}
\label{sec:review}

In this section we briefly review the categorical notions used in the remainder of the paper.

\subsection{Fibrations}
\label{subsec:fibrations}

\emph{Fibrations} can model dependent types in contexts, and substitutions.
Intuitively,
a fibration consists of a category $\CC$ of contexts and context morphisms,
a category $\DD$ of types depending on contexts,
and a substitution operation on types.
As a reference on fibrations, see the notes by \citet{streicher18}.


\begin{definition} \label{definition-cartesian-morphism}
  Let $p : \DD \to \CC$ be a functor. A morphism $\varphi : Y \to X$ in $\DD$ is called \emphasize{cartesian} if and only if for all $v : \Theta \to \Delta$ in $\CC$ and $\theta : Z \to X$ with $p(\theta) = p (\varphi) \circ v$ there exists a unique morphism $\psi : Z \to Y$ with $p(\psi) = v$ and $\theta = \varphi \circ \psi$.
  \[\begin{tikzcd}[row sep=0.5em]
  Z \\
  & Y & X & \DD \\
  \Theta \\
  & \Delta & \Gamma & \CC
  \arrow["\psi"', dashed, from=1-1, to=2-2]
  \arrow["\theta", from=1-1, to=2-3,bend left=10]
  \arrow["\varphi"', from=2-2, to=2-3]
  \arrow["p", from=2-4, to=4-4]
  \arrow["v"', from=3-1, to=4-2]
  \arrow["{u \circ v}", from=3-1, to=4-3,bend left=10]
  \arrow["u"', from=4-2, to=4-3]
  \end{tikzcd}\]
  A morphism $\alpha$ is called \emphasize{vertical} if and only if $p(\alpha)$ is an identity morphism in $\CC$. For $\Gamma \in \CC$, we write $\DD_\Gamma$ for the subcategory of $\DD$ consisting of objects $X$ with $p (X) = \Gamma$ and morphisms $\alpha$ with $p(\alpha) = \id_\Gamma$. The category $\DD_\Gamma$ is called the \emphasize{fiber of $p$ over $\Gamma$}.
\end{definition}

\begin{definition} \label{definition-fibration}
  A functor $p : \DD \to \CC$ is called a \emphasize{(cloven) Grothendieck fibration} if and only if for all $u : \Delta \to \Gamma$ in $\CC$ and $X \in \DD_\Gamma$ we have a chosen cartesian arrow $\varphi : Y \to X$ with $p (\varphi) = u$ called a \emphasize{cartesian lifting} of $u$ to $X$.

\end{definition}

The adjective ``cloven'' in \cref{definition-fibration} refers to the fact that we assume the cartesian lifts to be \emph{chosen} rather than merely \emph{existing}, as in some other sources.
 Throughout this paper, all fibrations will be cloven.
For brevity, we omit both ``cloven'' and ``Grothendieck'' when referring to \emph{fibrations}.

\begin{definition} \label{def-split-fibration}
  \label{def:split}
  A fibration $p : \DD \to \CC$ is \emphasize{split} if
      the chosen lifts of identities are identities; and
      the chosen lift of any composite is the composite of the individual lifts.
\end{definition}

\begin{definition}
 Given a category $\CC$, its \emphasize{arrow category} $\CC^{\to}$ has, as objects, morphisms of $\CC$.
 A morphism in $\CC^{\to}$ from $f : a \to b$ to $g : c \to d$ is given by a pair $(k,l)$ of morphisms in $\CC$, with $k : a \to c$ and $l : b \to d$, such that $l \circ f = g \circ k$.
\end{definition}

\begin{example}
 Let $\CC$ be a category.
 \begin{itemize}
  \item The forgetful functor $\dom : \CC^{\to} \to \CC$ sending a morphism to its domain is a fibration.
  \item The forgetful functor $\cod : \CC^{\to} \to \CC$ sending a morphism to its codomain is a fibration if and only if the category $\CC$ has chosen pullbacks.
  \end{itemize}
\end{example}

\subsection{Comprehension Categories}

In this section we recall the definition of comprehension categories.

\begin{definition}[{\cite[Definition~4.1]{jacobs93}}] \label{definition-comprehension-category} \label{definition-full-split-comprehension-category}
A \emphasize{comprehension category} consists of a category $\CC$, a fibration $p : \TT \to \CC$, and a functor $\chi : \TT \to \CC^\to$ preserving cartesian arrows, such that the following diagram commutes.
\[
\begin{tikzcd}
\TT \arrow[rr, "\chi"] \arrow[dr, "p" swap] && \CC^{\to} \arrow[dl, "\cod"] \\
& \CC
\end{tikzcd}
\]
Here, $\chi$ is called the \emphasize{comprehension} and $\dom \circ \chi$ is denoted by $\chi_0$.

A comprehension category is called \emphasize{full} if $\chi: \TT \to \CC^{\to}$ is fully faithful and is called \emphasize{split} if $p : \TT \to \CC$ is a split fibration.     
\end{definition}

\citet{jacobs93,jacobs99} uses full split comprehension categories to model type dependency.
In the following example, we build intuition about the interpretation of dependent type theories in comprehension categories.
\begin{example}[{\cite[Section~10.3]{jacobs99}},{\cite[Example~2.1.2]{lumsdaine2015}}]
  \label{exa:mltt-term-model}
In the syntactic comprehension category built from MLTT, the base category $\CC$ is the category of contexts where objects are contexts and morphisms are substitutions between contexts.
The fiber $\TT_\Gamma$ over a context $\Gamma$ is the category of types in context $\Gamma$.
The morphisms in the fibers are such that $\chi$ is fully faithful, i.e. the comprehension category is \emph{full}.
The cloven fibration $p : \TT \to \CC$ maps each type to its context and the chosen lifts are determined such that reindexing gives type substitution.
In particular, for a context morphism (substitution) $s : \Gamma \to \Delta$ and $ A \in \TT_\Delta$ (type $A$ in context $\Delta$), $s^* A$ is substitution by $s$ in $A$.
Since substitution in MLTT is strictly functorial, the resulting comprehension category is \emph{split}.
The functor $\chi$ maps $ A \in \TT_\Delta$ (type $A$ in context $\Delta$) to the projection from an extended context to the original context $\pi_A : \Gamma.A \to \Gamma$.
In this example, the sections of such projections are the terms of type $A$ in context $\Gamma$.
\end{example}

\subsection{Comprehension Categories from Algebraic Weak Factorization Systems}
\label{sec:awfs}

As discussed in \cref{exa:mltt-term-model}, in a comprehension category,
we have a notion of morphism between types,
namely morphisms in $\TT_\Gamma$,
and we have a notion of terms,
which are sections of projections $\chi(A) : \Gamma.A \to \Gamma$.
Often, the comprehension $\chi$ is fully faithful, and thus (up to equivalence) the inclusion of a full subcategory.
In this case, the objects of $\TT$ can be seen as objects of $\CC^{\to}$, i.e., morphisms of $\CC$, with some property.
This is also the case for \cref{exa:mltt-term-model}. 
One might want, however, to consider a category of morphisms of $\CC$, each equipped with some kind of \emph{structure}, not just property.

That is, one could consider a monad $R$ on $\CC^{\to}$ and take its Eilenberg-Moore category $\EM(R)$ together with the forgetful functor $U: \EM(R) \to \CC^{\to}$.
\[
\begin{tikzcd}
\EM(R) \arrow[rr, "U"] \arrow[dr, "\cod" swap] && \CC^{\to} \arrow[dl, "\cod"] \\
& \CC
\end{tikzcd}
\]
The composition $\EM(R) \xrightarrow{U} \CC^{\to} \xrightarrow{\cod} \CC$ (also denoted by $\cod$ above) is a fibration if $\CC$ has all pullbacks (so that $\CC^{\to} \xrightarrow{\cod} \CC$ is a fibration) and $U$ is a \emph{discrete pullback-fibration} \cite{bourke:2016}. The functor $U$ is a \emph{discrete pullback-fibration} \cite{bourke:2016} if the pullback of the underlying morphism of an $R$-algebra can itself uniquely be given the structure of an $R$-algebra making the pullback square an $R$-algebra morphism. \citet[Prop.~8]{bourke:2016} show that $U$ is a discrete pullback-fibration if and only if $R$ is isomorphic to a monad over $\cod$, meaning that $R$ lifts to a monad on $\CC^{\to} \xrightarrow{\cod} \CC$ in the slice $\Cat / \CC$.

Given such a monad, the components of its unit assemble into an endofunctor on $\CC^{\to}$ with a counit -- in fact it is a copointed endofunctor on $\CC^{\to}$ over $\dom: \CC^{\to} \to \CC$.

There are many examples of such monads coming from an \emph{algebraic weak factorization system} (\AWFS)\footnote{Note that not all non-full comprehension categories arise from \AWFSs. In particular, comprehension categories where $\chi$ is not faithful do not arise from \AWFSs as described above.}.
In brief, an \AWFS~\cite{grandis:2006,garner:2009b} is a monad on $\CC^{\to}$ over $\cod : \CC^{\to} \to \CC$ together with structure making the associated copointed endofunctor on $\CC^{\to}$ a comonad (and sometimes a distributive law of the comonad over the monad).

\AWFSs are fundamental when constructively studying the semantics of intensional type theory. They have been applied to construct models of both homotopy and cubical type theory \cite{AWODEY20181270,swan2018identity,cavallo2020unifying,EMMENEGGER2022103103,gambino:2023,van2022effective,awodey2024equivariant}.
Very roughly, the algebras of the constituent monad of one of these \AWFSs are objects with \emph{transport} and associated structure, in the terminology of homotopy type theory.
Morphisms of algebras, which are morphisms of types in the associated comprehension category, are morphisms that preserve the transport \emph{strictly}, commuting with the transport up to definitional equality.

\begin{example}[{\cite[Ex.~29]{bourke:2016}},~cf.~\cite{hofmann:1998,north:2019}]
\label{exa:grpd}
There is a monad on the category of groupoids whose algebras are groupoids $\Gamma$ together with a split (iso)fibration $T : \EE \rightarrow \Gamma$.
Morphisms from $(\Gamma_1 , T_1)$ to $(\Gamma_2 , T_2)$
consist of functors $F : \Gamma_1 \rightarrow \Gamma_2$ and $G : \EE_1 \rightarrow \EE_2$
such that $T_2 \circ F = G \circ T_1$
and such that $G$ preserves the chosen lifts up to equality.
The comprehension $\chi$ is given by taking the underlying functor.
In the resulting comprehension category, $\CC$ is the category of groupoids, the objects in the fiber over a groupoid $\Gamma$ are split (iso)fibrations, and $\chi$ is an inclusion. 
This comprehension category is not full
because functors do not preserve chosen lifts in general.

We can restrict the \AWFS on categories to one on the full subcategory of groupoids.
Again, the resulting comprehension category is not full.
\end{example}

The groupoid model is fundamental in the semantics of type theory,
because it refutes the uniqueness of identity proofs~\cite{hofmann:1998}.
Currently, the groupoid model is being formalized in Lean~\cite{hua2025hottlean}.



Next we look at a different class of examples,
where types are interpreted as formulas.
These examples are in nature closer to refinement types than to dependent types.
Whereas dependent types generally are used in proof relevant settings,
refinement types are used in proof irrelevant settings.
In our first example, we look at formulas as subobjects.

\begin{example}
\label{exa:topos-subobj}
Let $\EE$ be a topos with subobject classifier $\Omega$. 
Every topos can be equipped with an orthogonal factorization system
whose left class is given by the epimorphisms
and whose right class is given by the monomorphisms.
Since every orthogonal factorization system also is an algebraic factorization system,
we get  a comprehension category where $\CC$ is $\EE$
and where objects of $\TT$ are given by an object $x \in \EE$ with a morphism $p : x \rightarrow \Omega$.
The functor $\chi$ sends a morphism $p : x \rightarrow \Omega$ to the subobject classified by $p$.
The resulting comprehension category is not full in general \cite{coraglia:2024,jacobs99}.
\end{example}

Next we specialize \Cref{exa:topos-subobj} to get a comprehension category whose contexts are sets
and whose types are predicates valued in a given Heyting algebra.

\begin{example}
\label{exa:cha-pred}
Let $H$ be a Heyting algebra.
Note that we have a topos $\EE$ of sheaves over $H$,
which can equivalently be described as partial equivalence relations valued in $H$ \cite{higgs:1984}.
We have a fully faithful functor $F$ from $\Set$ to $\EE$ sending every set $X$ to the partial equivalence relation given by equality.
Objects in the image of $F$ are called discrete.
From this topos we obtain a comprehension category,
which we restrict  to the contexts that are discrete.
In the resulting comprehension category,
$\CC$ is the category of sets
and objects of $\TT$ consist of a set $\Gamma$ together with a map $p : \Gamma \rightarrow H$.
Morphisms in $\TT$ from $(\Gamma_1 , p_1)$ to $(\Gamma_2 , p_2)$ are given by
functions $f : \Gamma_1 \rightarrow \Gamma_2$ such that for each $x \in \Gamma_1$ we have $p_1(x) \le p_2(f(x))$.
The functor $\chi$ maps every $(\Gamma , p)$ to the set $\{ x \in \Gamma \mid \top \le p(x) \}$.
This comprehension category is not full in general.
In fact, if $H$ is the collection of open subsets of some topological space $X$,
then this comprehension category only is full if $X$ is indiscrete.
\end{example}



\section{Syntax from Comprehension Categories} \label{sec:from-comp-cat-judgements-rules}

We present the judgements and structural rules of a type theory with explicit substitution corresponding to a comprehension category. To be clear, we do not impose any requirements of fullness or splitness on the comprehension category. 

\subsection{Judgements} \label{sec:from-comp-cat-judgements}
The judgements of the type theory are as follows:
\begin{enumerate}
    \item $\Gamma \ctx$, which is read as `$\Gamma$ is a context'; \label{judge-one}
    \item $\Gamma \vdash  s : \Delta$, which is read as `$s$ is a substitution from $\Gamma$ to $\Delta$', where $\Gamma, \Delta \ctx$; \label{judge-two}
    \item $\Gamma \vdash s \equiv s' : \Delta$, which is read as `$s$ is equal to $s'$', where $\Gamma \vdash s, s' : \Delta$;
    \item $\Gamma \vdash A \type$, which is read as `$A$ is a type in context $\Gamma$', where $\Gamma \ctx$; \label{judge-four}
    \item $\Gamma ~|~ A \vdash t : B$, which is read as `$t$ is a type morphism from $A$ to $B$ in context $\Gamma$', where $\Gamma \vdash A , B \type$; \label{judge-five}
    \item $\Gamma ~|~ A \vdash t \equiv t' : B $, which reads as `$t$ and $t'$ are equal', where $\Gamma ~|~ A \vdash t, t' : B$.
\end{enumerate}

The judgement $\Gamma \vdash  s : \Delta$ can also be read as `$s$ is a context morphism from $\Gamma$ to $\Delta$', since this judgement is interpreted as a morphism $\interp{s} : \interp{\Gamma} \to \interp{\Delta}$  in the category of contexts $\CC$ (see \cref{sec:from-comp-cat-soundness}). Similarly, the judgement $\Gamma ~|~ A \vdash t : B$ is read as `$t$ is a type morphism from $A$ to $B$', since $t$ is interpreted as a morphism from $\interp{A}$ to $\interp{B}$ in $\TT_{\interp{\Gamma}}$ (see \cref{sec:from-comp-cat-soundness}). 

Unlike Martin-Löf type theory, this type theory has explicit substitution in the sense of \citet{explicit-substitution}. Another difference between this type theory and Martin-Löf type theory is the terms. As we will see in \cref{sec:from-comp-cat-soundness}, the terms of this type theory, which correspond to judgements of the form $\Gamma ~|~ A \vdash t :  B$, are interpreted as morphisms in $\TT$. The terms of Martin-Löf type theory, however, are interpreted as certain morphisms in $\CC$ --- in particular, as sections of the projection context morphisms. We will refer to these sections as \emphasize{MLTT terms}. 
In \cref{notation:terms}, we define a notation for MLTT terms. 

\begin{relatedwork}[\citet{ANW23}] \label{relwork:judgements}
Judgements \ref{judge-one}, \ref{judge-two}, \ref{judge-four} and \ref{judge-five} are the same as the corresponding judgements in the type theory for comprehension \emphasize{bi}categories in the work of \citet{ANW23}. They read Judgement \ref{judge-five} as `$t$ is a term of type $B$ depending on $A$ in context $\Gamma$'. 
In the present work, we read Judgement \ref{judge-five} as `$t$ is a type morphism from $A$ to $B$ in context $\Gamma$' for generality.
\end{relatedwork}

\subsection{Rules for Context and Type Morphisms} \label{sec:from-comp-cat-c-and-t}

We want contexts and substitutions (context morphisms) to form a category --- the category $\CC$ in a comprehension category $(\CC, \TT, p, \chi)$. Hence, the rules of the type theory concerning substitution  follow the usual axioms for a category.

\[
\unaryRule{\Gamma \ctx}{\Gamma \vdash 1_\Gamma: \Gamma}{ctx-mor-id}
\binaryRule
{\Gamma \vdash s : \Delta}{\Delta \vdash s' : \Theta}{\Gamma \vdash s' \circ s : \Theta}{ctx-mor-comp}
\]
\[
\unaryRuleBinaryConcl{\Gamma \vdash s : \Delta}{\Gamma \vdash s \circ 1_\Gamma \equiv s: \Delta}
{\Gamma \vdash 1_\Delta \circ s \equiv s : \Delta}
{ctx-id-unit}
\ternaryRule{\Gamma \vdash s : \Delta}{\Delta \vdash s' : \Theta}{\Theta \vdash s'' : \Phi}{\Gamma \vdash s'' \circ (s' \circ s) \equiv (s'' \circ s') \circ s : \Phi}{ctx-comp-assoc}
\]

Similarly, we want types and type morphisms to form a category --- the category $\TT$ in a comprehension category $(\CC, \TT, p, \chi)$. Hence, we postulate the following rules, which follow the usual axioms for a category.
\[
\unaryRule
{\Gamma \vdash A \type}{\Gamma ~|~ A \vdash 1_A: A}{ty-mor-id}
\binaryRule
{\Gamma ~|~ A \vdash t : B}{\Gamma ~|~B \vdash t' : C}{\Gamma ~|~A  \vdash t' \circ t : C}{ty-mor-comp}
\]
\[
\unaryRuleBinaryConcl
{\Gamma ~|~ A \vdash t : B}{\Gamma ~|~ A \vdash t \circ 1_A \equiv t: B}
{\Gamma ~|~ A \vdash 1_B \circ t \equiv t: B}{ty-id-unit}
\ternaryRule
{\Gamma ~|~ A \vdash t : B}{\Gamma ~|~ B \vdash t' : C}{\Gamma ~|~ C \vdash t'' : D}{\Gamma ~|~ A \vdash t'' \circ (t' \circ t) \equiv (t'' \circ t') \circ t : D}{ty-comp-assoc}
\]

\begin{notation} \label{notation:entiso}
  Similar to \citet[Section~8]{ANW23}, we define the following notations, which each stand for four judgements.
  \vspace{-\baselineskip}
  \begin{multicols}{2}
  \begin{enumerate}
    \item $\Gamma \vdashiso s : \Delta$ stands for the following four judgements.
  \begin{itemize}
      \item $\Gamma \vdash s : \Delta$
      \item $\Delta \vdash s' : \Gamma$
      \item $\Delta \vdash s \circ s' \equiv 1_\Delta: \Delta$
      \item $\Gamma \vdash s' \circ s \equiv 1_\Gamma: \Gamma$
  \end{itemize}
  \end{enumerate}
  \columnbreak
  \begin{enumerate}[resume]
  \item $\Gamma ~|~ A \vdashiso t : B$ stands for the following four judgements.
  \begin{itemize}
      \item $\Gamma ~|~A \vdash t : B$
      \item $\Gamma ~|~ B \vdash t' : A$
      \item $\Gamma ~|~ B \vdash t \circ t' \equiv 1_B: B$
      \item $\Gamma ~|~ A \vdash t' \circ t \equiv 1_A: A$
  \end{itemize} 
  \end{enumerate} 
  \end{multicols}
  \vspace{-\baselineskip}
Given $\Gamma \vdashiso s : \Delta$, we write $\Delta \vdash s^{-1} : \Gamma$ for the inverse of $s$. Similarly, $t^{-1}$ denotes the inverse of $t$.
\end{notation}

\begin{remark} \label{rem:empty-ctx}
We can postulate an empty context and a unique substitution from each context to it, which corresponds to having a terminal object in the category $\CC$. For this, the following rules can be added to the type theory. We do not discuss this further in this paper.
\[
\unaryRule{}{\emptyctx \ctx}{empty-ctx}
\unaryRule{\Gamma \ctx}{\Gamma \vdash \langle\rangle_\Gamma : \emptyctx}{empty-ctx-mor}
\unaryRule
{\Gamma \vdash s : \emptyctx}{\Gamma \vdash s \equiv \langle\rangle_\Gamma : \emptyctx}{empty-ctx-mor-unique}
\]

\end{remark}

\subsection{Rules for Context Extension} \label{sec:from-comp-cat-ext}
The rules for context extension mirror the action of the comprehension functor $\chi : \TT \to \CC^{\to}$ in a comprehension category $(\CC,\TT, p, \chi)$. Particularly, the rules correspond to the functoriality of $\chi$ restricted to $\TT_\Gamma \to \CC/\Gamma$ for each $\Gamma \in \CC$.
\[
\unaryRule
{\Gamma \vdash A \type}{\Gamma.A \ctx}{ext-ty}
\unaryRule
{\Gamma ~|~ A \vdash t : B}{\Gamma . A \vdash \Gamma . t : \Gamma.B }{ext-tm}
\unaryRule
{\Gamma \vdash A \type}{\Gamma . A \vdash \Gamma . 1_A \equiv 1_{\Gamma.A} : \Gamma. A}{ext-id}
\]
\[
\binaryRule
{\Gamma ~|~ A \vdash t : B}{\Gamma ~|~ B \vdash t' : C}{\Gamma . A \vdash \Gamma . (t' \circ t) \equiv \Gamma. t' \circ \Gamma . t : \Gamma . B}{ext-comp}
\unaryRule
{\Gamma \vdash A \type}{\Gamma.A \vdash \pi_A : \Gamma}{ext-proj}
\unaryRule
{\Gamma ~|~ A \vdash t : B}{\Gamma . A \vdash \pi_B \circ \Gamma . t \equiv \pi_A : \Gamma}{ext-c}
\]



Before moving on to the rules for substitution, we introduce the following notation for MLTT terms.
\begin{notation}
  \label{notation:terms}
  We define the notation $\Gamma \vdash a : A$ for MLTT terms, which stands for two judgements $\Gamma \vdash a : \Gamma.A$ and $\Gamma \vdash \pi_A \circ a \equiv 1_\Gamma : \Gamma$.
\end{notation}

\subsection{Rules for Substitution} \label{sec:from-comp-cat-subst}
Unlike standard Martin-Löf type theory, this type theory features explicit substitution in the syntax. By the Grothendieck construction, we know that the fibration $p : \TT \to \CC$, in a comprehension category $(\CC, \TT, p, \chi)$, can be thought of as a pseudofunctor of the form $\CC^{\mathsf{op}} \to \mathsf{Cat}$. The rules regarding substitution will be interpreted by this pseudofunctor. Specifically, the fibration $p : \TT \to \CC$ gives rise to reindexing functors of the form $s^* : \TT_\Delta \to \TT_\Gamma$ for each $s : \Gamma \to \Delta$ in $\CC$, and two natural isomorphisms corresponding to composition of reindexing functors and reindexing along identity morphisms. Namely, for each $s : \Gamma \to \Delta$ and $s' : \Delta \to \Theta$ in $\CC$, we have a natural isomorphism $i^{\mathsf{comp}} : (s' \circ s)^* \iso s^*  \circ s'^* $, and for each $A \in \TT_\Gamma$ we have an isomorphism $\idiso{A} : 1^*_\Gamma A \iso A$.

The rules for substitution are as follows. 
\[
\binaryRule
{\Gamma \vdash s : \Delta}{\Delta \vdash A \type}{\Gamma \vdash A[s] \type}{sub-ty}
\binaryRule
{\Gamma \vdash s : \Delta}{\Delta ~|~ A \vdash t: B}{\Gamma ~|~ A[s] \vdash t[s] : B[s]}{sub-tm}
\]
\[
\binaryRule
{\Gamma \vdash s : \Delta}{\Delta \vdash A \type}{\Gamma ~|~ A[s] \vdash 1_A[s] \equiv 1_{A[s]}: A[s]}{sub-prs-id}
\ternaryRule
{\Gamma \vdash s : \Delta}
{\Delta ~|~A \vdash t : B}{\Delta ~|~ B \vdash t' : C}{\Gamma ~|~ A[s] \vdash (t' \circ t)[s] \equiv t'[s] \circ t[s] : C[s]}{sub-prs-comp}
\]
\[
\unaryRule
{\Gamma \vdash A \type}{\Gamma ~|~ A[1_\Gamma] \vdashiso \idiso{A} : A}{sub-id}
\ternaryRule
{\Gamma \vdash s : \Delta}{\Delta \vdash s' : \Theta}{\Theta \vdash A \type}{\Gamma ~|~ A[s' \circ s] \vdashiso \compiso{A}{s'}{s}  : A[s'][s]}{sub-comp}
\]
\[
\unaryRule
{\Gamma ~|~ A \vdash t: B}{\Gamma ~|~ A[1_\Gamma] \vdash t[1_\Gamma] \equiv \idisoinv{B} \circ t \circ \idiso{A} : B[1_\Gamma]}{sub-tm-id}
\]
\[
\ternaryRule
{\Gamma \vdash s : \Delta}{\Delta \vdash s' : \Theta}{\Theta ~|~ A \vdash t: B}
{\Gamma ~|~ A[s' \circ s] \vdash t[s' \circ s] \equiv \compisoinv{B}{s'}{s} \circ t[s'][s] \circ \compiso{A}{s'}{s} : B[s' \circ s] }{sub-tm-comp}
\]

\begin{remark}\label{rem:split-subst}
 In our rules for substitution, composition and identity of context morphisms are only preserved up to isomorphism rather than up to equality. The reason behind this choice is that in many comprehension categories, in particular those arising from \AWFSs, substitution laws only hold up to isomorphism. To guarantee that our syntax can be interpreted in such comprehension categories, it is necessary to relax the substitution laws.
 In \cref{sec:split} we present a split variant of our syntax, where we consider these rules up to equality.
\end{remark}

In a comprehension category $(\CC, \TT, p, \chi)$, the comprehension of each cartesian lift $s_A : s^* A \to A$ is a pullback square in $\CC$, since $\chi$ preserves cartesian morphisms. This means that for each morphism $s : \Gamma \to \Delta$ in $\CC$, morphisms $s' : \Gamma \to (\chi_0 A)$ in $\CC$ such that $\chi(A) \circ s' = s$ correspond to sections $t$ of $\chi (s^* A)$ such that $\chi_0 (s_A) \circ t = s'$. 
\[\begin{tikzcd}
   [column sep = large, row sep = normal]
	\Gamma \\
	& {\chi_0 (s^* A)} & {\chi_0(A)} \\
	& \Gamma & \Delta
	\arrow["t", dashed, from=1-1, to=2-2]
	\arrow["{s'}", curve={height=-12pt}, from=1-1, to=2-3]
	\arrow[equal, curve={height=12pt}, from=1-1, to=3-2]
	\arrow["{\chi_0 (s_A)}", from=2-2, to=2-3]
	\arrow["{\chi (s^*A)}" description, from=2-2, to=3-2]
	\arrow["\lrcorner"{anchor=center, pos=0.125}, draw=none, from=2-2, to=3-3]
	\arrow["{\chi (A)}" description, from=2-3, to=3-3]
	\arrow["s"', from=3-2, to=3-3]
\end{tikzcd}\]
The following rules capture the idea that context morphisms are built out of MLTT style terms, that is sections of projection context morphisms $\Gamma.A \vdash \pi_A : \Gamma$.
\[
\ternaryRule
{\Delta \vdash A \type}
{\Gamma \vdash s : \Delta}
{\Gamma \vdash t : A[s]}
{\Gamma \vdash (s,t) : \Delta.A}{\scriptsize{sub-ext}}
\unaryRule
{\Gamma \vdash s : \Delta.A}
{\Gamma \vdash p_2(s) : A[\pi_A \circ s]}
{\scriptsize{sub-proj}}
\]
\[
\ternaryRuleBinaryConcl
{\Delta \vdash A \type}{\Gamma \vdash s : \Delta}
{\Gamma \vdash t : A[s]}
{\Gamma \vdash \pi_A \circ (s,t) \equiv s : \Delta}{\Gamma \vdash p_2(s,t) \equiv t : \Gamma.A[s]}{\scriptsize{sub-beta}}
\unaryRule
{\Gamma \vdash s : \Delta.A}{\Gamma \vdash (\pi_A \circ s ,p_2(s)) \equiv s : \Delta.A}{\scriptsize{sub-eta}}
\]
We also have rules for functoriality of $p_2(-)$ (Rules \ruleref{sub-proj-id} and \ruleref{sub-proj-comp} in \cref{sec:structural-rules}).

Before concluding the rules, we discuss the following derived rule which is frequently used in the rest of the paper.
\begin{proposition} \label{lemma:s.A}
From the rules in \cref{fig:from-comp-cat,fig:from-comp-cat-congruence}, we can derive the following rule. 
\[
\binaryRule
{\Delta \vdash A \type}{\Gamma \vdash s : \Delta}{\Gamma.A[s] \vdash s.A : \Delta.A}{\scriptsize{ctx-mor-lift}}
\]
\end{proposition}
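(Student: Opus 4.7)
The plan is to unfold the goal through \ruleref{sub-ext}, which mirrors the universal property of the pullback square that $s.A$ should inhabit semantically. To apply \ruleref{sub-ext} with codomain $\Delta.A$, I need (i) a substitution $\Gamma.A[s] \to \Delta$, and (ii) an MLTT-style term of the type obtained by reindexing $A$ along that substitution. For (i) I simply take $s \circ \pi_{A[s]}$, formed by \ruleref{ext-proj} and \ruleref{ctx-mor-comp}. So the real content is in (ii): producing some $\Gamma.A[s] \vdash u : A[s \circ \pi_{A[s]}]$ in the sense of \cref{notation:terms}.

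First I would produce a ``canonical variable'' of the \emph{displaced} type $A[s][\pi_{A[s]}]$. Writing $B \coloneqq A[s]$, \ruleref{sub-proj} applied to $\Gamma.B \vdash 1_{\Gamma.B} : \Gamma.B$ yields $\Gamma.B \vdash p_2(1_{\Gamma.B}) : B[\pi_B \circ 1_{\Gamma.B}]$, and by \ruleref{ctx-id-unit} (together with the congruence of substitution) this is a term in $B[\pi_B] = A[s][\pi_{A[s]}]$; call it $v$. So $v$ is a context morphism $\Gamma.A[s] \to \Gamma.A[s].A[s][\pi_{A[s]}]$ sectioning the projection.

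Next I would mediate between $A[s][\pi_{A[s]}]$ and $A[s \circ \pi_{A[s]}]$ using the substitution-composition isomorphism. By \ruleref{sub-comp} we have $\Gamma.A[s] \mid A[s \circ \pi_{A[s]}] \vdashiso \compiso{A}{s}{\pi_{A[s]}} : A[s][\pi_{A[s]}]$, so its inverse $\compisoinv{A}{s}{\pi_{A[s]}}$ is a type morphism from $A[s][\pi_{A[s]}]$ to $A[s \circ \pi_{A[s]}]$. Applying \ruleref{ext-tm} lifts this to a context morphism $\Gamma.A[s].A[s][\pi_{A[s]}] \to \Gamma.A[s].A[s \circ \pi_{A[s]}]$. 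Composing with $v$ gives a candidate term $u \coloneqq \bigl(\Gamma.A[s].\compisoinv{A}{s}{\pi_{A[s]}}\bigr) \circ v$; the section condition required by \cref{notation:terms} follows because \ruleref{ext-c} gives $\pi_{A[s \circ \pi_{A[s]}]} \circ \Gamma.A[s].\compisoinv{A}{s}{\pi_{A[s]}} \equiv \pi_{A[s][\pi_{A[s]}]}$, and then associativity and the fact that $v$ already sections $\pi_{A[s][\pi_{A[s]}]}$ close the calculation.

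Finally I would set $s.A \coloneqq \bigl(s \circ \pi_{A[s]},\, u\bigr)$ via \ruleref{sub-ext}, which produces the desired $\Gamma.A[s] \vdash s.A : \Delta.A$. The main obstacle is purely bookkeeping: substitution is only pseudofunctorial here (cf.\ \cref{rem:split-subst}), so $A[s][\pi_{A[s]}]$ and $A[s \circ \pi_{A[s]}]$ are merely isomorphic, and one has to insert $\compisoinv{A}{s}{\pi_{A[s]}}$ at precisely the right place and then check the section equation holds up to the definitional equalities provided by \ruleref{ext-c}, \ruleref{ext-comp}, and the congruence rules of \cref{fig:from-comp-cat-congruence}.
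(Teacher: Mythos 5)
Your proposal is correct and follows essentially the same route as the paper: the paper's one-line proof also defines $s.A$ by applying \ruleref{sub-ext} to the pair consisting of $s \circ \pi_{A[s]}$ and the generic term $p_2(1_{\Gamma.A[s]})$ transported along the (extension of the) coherence isomorphism between $A[s][\pi_{A[s]}]$ and $A[s \circ \pi_{A[s]}]$. Your write-up merely makes explicit the bookkeeping the paper leaves implicit, namely the identification of $A[s][\pi_{A[s]} \circ 1_{\Gamma.A[s]}]$ with $A[s][\pi_{A[s]}]$ via \ruleref{ctx-id-unit} and \ruleref{sub-cong}, and the verification of the section condition via \ruleref{ext-c}.
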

\begin{proof}
  The context morphism $s.A$ is $(s \circ \pi_{A[s]}, \Gamma.(\compiso{A}{s}{\pi_{A[s]}} [\pi_{A[s]}]) \circ p_2(1_{\Gamma.A[s]}))$ and Rule \ruleref{ctx-mor-lift} can be derived using Rules \ruleref{ctx-mor-comp}, \ruleref{ext-proj}, \ruleref{sub-comp}, \ruleref{sub-tm}, \ruleref{ext-tm}, \ruleref{sub-proj} and \ruleref{sub-ext}.
\end{proof}

Lastly, we have the following rule which characterizes the behavior of $\Gamma.t[s]$.
In this rule, we use the notation introduced in \cref{lemma:s.A}.
\[
\binaryRule
{\Delta ~|~ A \vdash t : B}{\Gamma \vdash s : \Delta}
{\Gamma.A[s] \vdash s.B \circ \Gamma.t[s] \equiv \Delta.t \circ s.A : \Delta.B}{\scriptsize{tm-sub-coh}}
\]

\LongOrShort{
Rules \ruleref{sub-ty} and \ruleref{sub-tm} are derived from the action of the reindexing functor $s^* : \TT_\Delta \to \TT_\Gamma$ on objects and morphisms respectively. Rules \ruleref{sub-prs-id} and \ruleref{sub-prs-comp} are derived from the functoriality of the reindexing functor. Rule \ruleref{sub-id} is derived from the isomorphism $\idiso{A} : 1^*_\Gamma A \iso A$ and Rule \ruleref{sub-comp} from the isomorphism $\compiso{A}{s}{s'} : (s' \circ s)^* A \iso s^* ( s'^* A)$.

Rules \ruleref{sub-ty} and \ruleref{sub-tm} are the usual substitution rules. Rules \ruleref{sub-id} and \ruleref{sub-comp} state that substitution is functorial up to isomorphism, unlike Martin-Löf type theory which has strictly functorial substitution as stated in \cref{lemma-substitution-comp-associative,lemma-substitution-id}. As discussed in \cref{sect-dtt-substitution-upto-iso}, this is in line with there being no assumption on the splitness of the comprehension category.
Rules \ruleref{sub-prs-id} and \ruleref{sub-prs-comp} state that identity and composition of type morphisms are preserved under substitution.

For each $\Gamma, \Delta \in \CC$, $A \in \TT_\Delta$ and $s : \Gamma \to \Delta$, the comprehension of the cartesian lift $s_A : s^* A \to A$ is a pullback square in $\CC$. Rules \ruleref{sub-ext}, \ruleref{sub-proj}, \ruleref{sub-beta} and \ruleref{sub-eta} are derived from the universal property of such pullback squares:
\[
\begin{tikzcd}
	{\Gamma.A[s]} & {\Delta.A} \\
	\Gamma & \Delta.
	\arrow["{s.A}", from=1-1, to=1-2]
	\arrow["{\pi_{A[s]}}"{description}, from=1-1, to=2-1]
	\arrow["\lrcorner"{anchor=center, pos=0.125}, draw=none, from=1-1, to=2-2]
	\arrow["{\pi_A}"{description}, from=1-2, to=2-2]
	\arrow["s"', from=2-1, to=2-2]
\end{tikzcd}
\]
}
{}
The rules of the type theory are summarized in \cref{fig:from-comp-cat} in \cref{sec:structural-rules}. In addition to these, we also have the rules related to $\equiv$ being a congruence for all the judgements, which are listed in \cref{fig:from-comp-cat-congruence} in \cref{sec:structural-rules}. 

\begin{remark} \label{rem:sub-cong}
  In a comprehension category $(\CC,\TT,p ,\chi)$, for each object $A$ in $\CC$ and equal morphisms $s$ and $s'$ in $\CC$, we have $s^* A = (s')^* A$. As we do not have a judgement for equality of types, particularly a judgement of the form $\Gamma \vdash A[s] \equiv A[s']$, we can not express this idea syntactically. Instead, we add Rule \ruleref{sub-cong} (see \cref{fig:from-comp-cat-congruence}) to the type theory.
  \[
  \binaryRule{\Delta \vdash A \type}{\Gamma \vdash s \equiv s' : \Delta}{\Gamma ~|~ A[s] \vdashiso \subiso{A}{s}{s'} : A[s']}{\scriptsize{sub-cong}}
  \]
  This rule states that given a type $A$ in context $\Delta$ and two equal context morphisms $s$ and $s'$ from $\Gamma$ to $\Delta$, types $A[s]$ and $A[s']$ are isomorphic in the sense that there are two context morphism $\subiso{A}{s}{s'} : A[s] \to A[s']$ and $\subiso{A}{s'}{s} : A[s'] \to A[s]$ and their compositions are equal to the identity context morphisms.

  The coherence rules regarding $\subiso{A}{s}{s'}$ are stated in \cref{fig:from-comp-cat-congruence} in \cref{sec:structural-rules}.
\end{remark}

\begin{definition} \label{def:CTT}
  We define $\emphCTT$ to be the judgements described in \cref{sec:from-comp-cat-judgements} together with the rules in \cref{fig:from-comp-cat,fig:from-comp-cat-congruence}.
\end{definition}

\begin{remark}
Note that our syntax does not contain any coherence rules
that express the commutativity of diagrams built out of $\compiso{A}{s_1}{s_2}$ and $\idiso{A}$.
While it is possible to add such coherence equations,
we refrain from doing so.
This approach is similar to the work by Curien, Garner, and Hofmann who also consider a non-split syntax for type theory~\cite{curien14}.
In their syntax,
there is an additional operation on terms,
which they call explicit coercion.
This operation is used to apply the morphisms $\compiso{A}{s_1}{s_2}$ and $\idiso{A}$ to terms.
Instead of adding coherence rules to syntax,
they show that every diagram built out of $\compiso{A}{s_1}{s_2}$ and $\idiso{A}$ commutes in every model,
and that whenever two terms are equal after removing all coercions,
they have the same denotation in every model.
We conjecture that their methods can be applied to our setting; hence, we do not dicsuss the coherence rules in this paper.
\end{remark}

\subsection{Soundness: Interpretation in a Comprehension Category} \label{sec:from-comp-cat-soundness}
We establish soundness of the rules of the type theory by giving an interpretation of the type theory in every comprehension category. Note that there is no assumption of fullness or splitness on the comprehension category. 

\begin{theorem}[Soundness of Structural Rules] \label{thm:soundness}
Every comprehension category models the rules of $\CTT$.
\end{theorem}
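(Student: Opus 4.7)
The plan is to define the interpretation $\llbracket - \rrbracket$ by induction on derivations and then check, rule by rule, that each structural and congruence rule of \cref{fig:from-comp-cat,fig:from-comp-cat-congruence} is validated in an arbitrary comprehension category $(\CC, \TT, p, \chi)$. First I would fix the interpretation on the six judgement forms: a context $\Gamma \ctx$ is sent to an object $\llbracket \Gamma \rrbracket \in \CC$; a substitution $\Gamma \vdash s : \Delta$ to a morphism $\llbracket s \rrbracket : \llbracket \Gamma \rrbracket \to \llbracket \Delta \rrbracket$ in $\CC$; a type $\Gamma \vdash A \type$ to an object $\llbracket A \rrbracket$ in the fiber $\TT_{\llbracket \Gamma \rrbracket}$; a type morphism $\Gamma \mid A \vdash t : B$ to a morphism $\llbracket A \rrbracket \to \llbracket B \rrbracket$ in $\TT_{\llbracket \Gamma \rrbracket}$; and each equality judgement to the assertion that the two underlying interpretations coincide. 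The context extension $\Gamma.A$ is interpreted as $\chi_0(\llbracket A \rrbracket)$, the projection $\pi_A$ as $\chi(\llbracket A \rrbracket)$, and $\Gamma.t$ as $\dom(\chi(\llbracket t \rrbracket))$.

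Next I would treat the structural rules in groups. The category-of-contexts and category-of-types rules (\ruleref{ctx-mor-id}--\ruleref{ctx-comp-assoc} and \ruleref{ty-mor-id}--\ruleref{ty-comp-assoc}) are immediate from the fact that $\CC$ and each fiber $\TT_{\llbracket \Gamma \rrbracket}$ are categories. The context-extension rules (\ruleref{ext-ty}--\ruleref{ext-c}) follow from the functoriality of $\chi$ and the factorisation $\chi = \cod^\to \circ \chi$ established by \cref{definition-comprehension-category}, which ensures the triangle $\pi_B \circ \Gamma.t = \pi_A$. For substitution, I would use the fact that a cloven fibration $p$ presents, via the Grothendieck correspondence, a pseudofunctor $\CC^{\opp} \to \Cat$: this gives the reindexing functors $s^*$ needed for \ruleref{sub-ty}--\ruleref{sub-prs-comp}, together with the canonical isomorphisms $\idiso{A} : 1^*_\Gamma A \iso A$ and $\compiso{A}{s'}{s} : (s' \circ s)^* A \iso s^*(s'^* A)$ whose pseudonaturality is precisely what \ruleref{sub-id}, \ruleref{sub-comp}, \ruleref{sub-tm-id} and \ruleref{sub-tm-comp} require.

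The rules \ruleref{sub-ext}, \ruleref{sub-proj}, \ruleref{sub-beta}, \ruleref{sub-eta}, and \ruleref{tm-sub-coh} are validated by the pullback property of the comprehension square
\[
\begin{tikzcd}
\chi_0(s^*A) \ar[r, "\chi_0(s_A)"] \ar[d, "\chi(s^*A)"'] & \chi_0(A) \ar[d, "\chi(A)"] \\
\llbracket \Gamma \rrbracket \ar[r, "\llbracket s \rrbracket"'] & \llbracket \Delta \rrbracket
\end{tikzcd}
\]
which holds because $\chi$ preserves cartesian morphisms and $\cod$ turns cartesian squares in $\CC^\to$ into pullbacks. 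Pairing $\llbracket (s,t) \rrbracket$ is the induced map into this pullback, and the $\beta$ and $\eta$ laws follow from its uniqueness. The congruence rules of \cref{fig:from-comp-cat-congruence}, including \ruleref{sub-cong}, are validated by observing that if $\llbracket s \rrbracket = \llbracket s' \rrbracket$ in $\CC$, then $s^* A = (s')^* A$ on the nose, so $\subiso{A}{s}{s'}$ may be interpreted as the identity; all coherences with $\idiso{}$ and $\compiso{}{}{}$ then reduce to identities or to the pseudofunctor coherences.

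I would proceed by mutual induction on derivations, defining $\llbracket - \rrbracket$ and proving soundness of equality judgements simultaneously, so that whenever an equality premise is used, its semantic counterpart is already available. The main obstacle is bookkeeping around the non-split substitution: the interpretation of a type $A[s_n]\cdots[s_1]$ depends on the parsing of the derivation, and different derivations of the same judgement may yield different but canonically isomorphic objects of $\TT$. The isomorphisms $\compiso{}{}{}$, $\idiso{}$, and $\subiso{}{}{}$ are inserted syntactically precisely to mediate these choices, so soundness reduces to checking that each equation of the calculus corresponds to one of the coherence diagrams satisfied by a pseudofunctor into $\Cat$ together with the pullback property of the comprehension squares.
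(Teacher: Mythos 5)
Your proposal is correct and follows essentially the same route as the paper: the same interpretation of the six judgement forms, the category laws of $\CC$ and the fibers for the context/type-morphism rules, functoriality of $\chi$ for context extension, the cleavage-induced pseudofunctor (with $\idiso{}$ and $\compiso{}{}{}$) for substitution, the pullback property of comprehension squares for \ruleref{sub-ext}/\ruleref{sub-proj}/\ruleref{sub-beta}/\ruleref{sub-eta}/\ruleref{tm-sub-coh}, and $\subiso{A}{s}{s'}$ interpreted as the identity since $\interp{s} = \interp{s'}$ forces $\interp{s}^*A = \interp{s'}^*A$. The only cosmetic slip is writing the commuting triangle as ``$\chi = \cod^\to \circ \chi$'' rather than $p = \cod \circ \chi$, which does not affect the argument.
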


Let $(\CC, \TT, p, \chi)$ be a comprehension category. The judgements are interpreted as follows:
\begin{enumerate}
    \item $\Gamma \ctx$ is interpreted as an object $\interp{\Gamma}$ in $\CC$;
    \item $\Gamma \vdash  s : \Delta$ is interpreted as a morphism $\interp{s} : \interp{\Gamma} \to \interp{\Delta}$ in $\CC$;
    \item $\Gamma \vdash s \equiv s' : \Delta$ is interpreted as $\interp{s} = \interp{s'}$;
    \item $\Gamma \vdash A \type$ is interpreted as an object $\interp{A}$ in $\TT_\interp{\Gamma}$;
    \item $\Gamma ~|~ A \vdash t : B$ is interpreted as a morphism $\interp{t} : \interp{A} \to \interp{B}$ in $\TT_\interp{\Gamma}$;
    \item $\Gamma ~|~ A \vdash t \equiv t' : B $ is interpreted as $\interp{t} = \interp{t'}$.
\end{enumerate}
The interpretation of the rules is deferred to \cref{sec:interp-struc-rules}.

\subsection{Subtyping} \label{sec:subty-structural-rules} 

One can regard type morphisms as witnesses of coercive subtyping. Coraglia and Emmenegger explore this in \emphasize{generalized categories with families}, a structure equivalent to (not necessarily full) comprehension categories \cite{coraglia_et_al:LIPIcs.TYPES.2023.3}. In this view, a judgement $\Gamma ~|~ A \vdash t:  B$ can be seen as expressing that $t$ is a witness for $A$ being a subtype of $B$. In Coraglia and Emmenegger’s notation, this is expressed as $\Gamma \vdash A \leq_t B$. We focus exclusively on proof-relevant subtyping. A comparison between proof-relevant and proof-irrelevant subtyping is given by \citet[Section~2.3]{coraglia_et_al:LIPIcs.TYPES.2023.3}.

\begin{proposition}[Subsumption rule] \label{prop:subtyping}
  From the rules of $\CTT$, we can derive the following rule.
  \[
  \binaryRule
  {\Gamma ~|~ A \vdash t:  B}
  {\Gamma \vdash a : A}
  {\Gamma \vdash \Gamma.t \circ a : B}
  {}
  \]
\end{proposition}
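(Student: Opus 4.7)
The plan is to unfold the notation $\Gamma \vdash a : A$ from \cref{notation:terms}, namely that it abbreviates the two judgements $\Gamma \vdash a : \Gamma.A$ and $\Gamma \vdash \pi_A \circ a \equiv 1_\Gamma : \Gamma$, and similarly show the conclusion $\Gamma \vdash \Gamma.t \circ a : B$ by establishing the two judgements $\Gamma \vdash \Gamma.t \circ a : \Gamma.B$ and $\Gamma \vdash \pi_B \circ (\Gamma.t \circ a) \equiv 1_\Gamma : \Gamma$.

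First, I would construct the context morphism. From $\Gamma ~|~ A \vdash t : B$, Rule \ruleref{ext-tm} gives $\Gamma.A \vdash \Gamma.t : \Gamma.B$. Since $\Gamma \vdash a : \Gamma.A$ is part of the hypothesis $\Gamma \vdash a : A$, composing with Rule \ruleref{ctx-mor-comp} yields $\Gamma \vdash \Gamma.t \circ a : \Gamma.B$.

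Second, I would verify the projection equation. Rule \ruleref{ext-c} gives $\Gamma.A \vdash \pi_B \circ \Gamma.t \equiv \pi_A : \Gamma$. Applying associativity (Rule \ruleref{ctx-comp-assoc}), followed by congruence of composition with $a$, one obtains $\Gamma \vdash \pi_B \circ (\Gamma.t \circ a) \equiv (\pi_B \circ \Gamma.t) \circ a \equiv \pi_A \circ a : \Gamma$. Then the second component of the hypothesis $\Gamma \vdash a : A$ gives $\pi_A \circ a \equiv 1_\Gamma$, and transitivity finishes the derivation.

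There is no serious obstacle here: the derivation is a straightforward chain of structural rules, and the only subtlety is bookkeeping the unfolding of the MLTT-term notation at both ends. The argument illustrates exactly the semantic picture of \cref{sec:subty-structural-rules}: a type morphism $t$ together with a term $a : A$ produces the coerced term $\Gamma.t \circ a$ of type $B$, which categorically corresponds to post-composing a section of $\chi(A)$ with the comprehension of $t$.
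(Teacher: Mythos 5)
Your derivation is correct and matches the paper's proof, which cites exactly the same rules (\ruleref{ext-tm}, \ruleref{ctx-mor-comp}, and \ruleref{ext-c}); you have simply spelled out the unfolding of \cref{notation:terms} and the congruence/associativity bookkeeping that the paper leaves implicit. No issues.
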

\begin{proof}
  The rule can be derived using Rules \ruleref{ext-tm}, \ruleref{ctx-mor-comp} and \ruleref{ext-c}.
\end{proof}

\cref{prop:subtyping} corresponds to subsumption in coercive subtyping.
The rule states that if $A$ is a subtype of $B$, then a (MLTT) term of type $A$ can be coerced to a term of type $B$. 

\begin{proposition}[Coercions commute with substitution] \label{prop:coer-comm-subst}
  From the rules of $\CTT$, we can derive the following rule.
  \[
  \ternaryRule
  {\Delta ~|~ A \vdash t:  B}
  {\Delta \vdash a : A}
  {\Gamma \vdash s : \Delta }
  {\Gamma \vdash (\Delta.t \circ a)[s] \equiv \Gamma.(t[s]) \circ a[s] : \Gamma.B[s]}{}
  \]
  where $a[s] \coloneqq p_2(s \circ a)$.
\end{proposition}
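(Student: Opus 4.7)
The plan is to unfold the definition of $(-)[s]$ on MLTT terms on both sides, then reduce the equality to the coherence rule \ruleref{tm-sub-coh} using the pullback universal property of context extension encoded by \ruleref{sub-ext}, \ruleref{sub-beta}, and \ruleref{sub-eta}.

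Unfolding, $(\Delta.t \circ a)[s] = p_2((\Delta.t \circ a) \circ s) \equiv p_2(\Delta.t \circ (a \circ s))$ by \ruleref{ctx-comp-assoc}, while $a[s] = p_2(a \circ s)$. Both sides of the claimed equation are thus morphisms $\Gamma \to \Gamma.B[s]$.

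The crux of the proof is the derived identity $s.A \circ u \equiv (s, u)$, valid for any $\Delta \vdash A \type$, $\Gamma \vdash s : \Delta$, and MLTT term $\Gamma \vdash u : A[s]$. This expresses the pullback universal property of context extension syntactically: the morphism $s.A \circ u : \Gamma \to \Delta.A$ projects to $s$ under $\pi_A$ by combining $\pi_A \circ s.A \equiv s \circ \pi_{A[s]}$ (from the construction of $s.A$ in \cref{lemma:s.A}) with $\pi_{A[s]} \circ u \equiv 1_\Gamma$, and \ruleref{sub-eta} reduces the required identity to the equality of second projections $p_2(s.A \circ u) \equiv u$. Establishing this match is the main technical step, since the definition of $s.A$ from \cref{lemma:s.A} contains the coercion $\compiso{A}{s}{\pi_{A[s]}}$; the computation unfolds this definition and uses the functoriality rules \ruleref{sub-proj-id} and \ruleref{sub-proj-comp} for $p_2$ to cancel the coercion against the reindexing of the generic term.

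Granting this identity, the remainder is direct. Since $a$ is an MLTT term, \ruleref{sub-eta} gives $a \circ s \equiv (s, a[s])$, which by the identity equals $s.A \circ a[s]$; hence $\Delta.t \circ (a \circ s) \equiv \Delta.t \circ s.A \circ a[s] \equiv s.B \circ \Gamma.(t[s]) \circ a[s]$ by \ruleref{tm-sub-coh}. The composite $\Gamma.(t[s]) \circ a[s]$ is itself an MLTT term of $B[s]$, since $\pi_{B[s]} \circ \Gamma.(t[s]) \equiv \pi_{A[s]}$ by \ruleref{ext-c} and $\pi_{A[s]} \circ a[s] \equiv 1_\Gamma$. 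Applying the identity once more yields $s.B \circ (\Gamma.(t[s]) \circ a[s]) \equiv (s, \Gamma.(t[s]) \circ a[s])$, and taking $p_2$ of both sides combined with \ruleref{sub-beta} gives $(\Delta.t \circ a)[s] \equiv \Gamma.(t[s]) \circ a[s]$, as desired.
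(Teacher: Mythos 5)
The paper states this proposition without a proof, so there is nothing of the authors' to compare against; I assess your argument on its own terms. Your overall architecture is sound and is almost certainly the intended derivation: unfold $(-)[s]$ to $p_2(-\circ s)$, reduce everything to the single coherence \ruleref{tm-sub-coh} via \ruleref{ctx-comp-assoc}, \ruleref{sub-eta}, \ruleref{sub-beta}, and the congruences, and mediate between the two sides with the identity $s.A \circ u \equiv (s,u)$ for MLTT terms $u$ of $A[s]$. That identity is exactly the right crux --- it is literally how $(s,t)$ is interpreted ($\interp{(s,t)} = \interp{s}.\interp{A}\circ\interp{t}$) --- and both of your uses of it (rewriting $a\circ s$ as $s.A\circ a[s]$, and stripping $s.B$ off the right-hand side after \ruleref{tm-sub-coh}) are correctly deployed, modulo the $i^{\mathsf{sub}}$ coercions that the proposition statement itself suppresses.

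The weak point is your justification of that key identity. After \ruleref{sub-eta}, \ruleref{sub-beta}, and \ruleref{sub-ext-cong}, the obligation reduces, as you say, to $p_2(s.A\circ u)\equiv u$ (up to $i^{\mathsf{sub}}$), where $u$ is a \emph{section} of $\pi_{A[s]}$. But \ruleref{sub-proj-id} and \ruleref{sub-proj-comp}, which you cite for this, compute $p_2$ only of the identity and of composites of \emph{lifts} $s.A\circ s'.A[s]$; neither applies to a lift postcomposed with an arbitrary section. What is actually needed is a rule commuting pairing past precomposition, $(s',t')\circ u \equiv (s'\circ u,\dots)$, or equivalently a computation rule for $p_2(v\circ u)$ in terms of $p_2(v)$ --- the analogue of the categories-with-families law $(\sigma,t)\circ\delta \equiv (\sigma\circ\delta, t[\delta])$ --- and no such rule appears in \cref{fig:from-comp-cat,fig:from-comp-cat-congruence}. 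The congruence \ruleref{sub-proj-cong} only transports $p_2$ along an equality you would first have to establish, which is circular here. So either you must exhibit a genuinely different derivation of $s.A\circ u\equiv(s,u)$ from the stated rules, or the step needs an additional primitive; as written, the main technical step of your proof does not go through, and you should either carry out the full unfolding explicitly or flag the missing rule.
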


\cref{prop:coer-comm-subst} expresses that substitution and coercion commute. In practice, it allows us to compute substitutions in terms that contain coercions.

\begin{table}[h]
  \caption{Meaning of the rules of $\CTT$ when the judgement $\Gamma ~|~ A \vdash t : B$ expresses subtyping and relation to the rules by \citet{coraglia_et_al:LIPIcs.TYPES.2023.3}.}
  \label{tab:subtyping}
  \begin{tabular}{l p{7cm} l}
  Rule of $\CTT$ & Meaning under Subtyping & Rule in \cite{coraglia_et_al:LIPIcs.TYPES.2023.3} \\
  \hline
  \ruleref{ty-mor-id} & Reflexivity of subtyping witnessed by $1_A$ & - \\
  \ruleref{ty-mor-comp} & $A \leq_f B$ and $B \leq_g C$ give $A \leq_{g \circ f} C$. & \textit{Trans} and \textit{Sbsm} \\
  \ruleref{ty-id-unit} & Each $1_A$ is an identity for witness composition. & - \\
  \ruleref{ty-comp-assoc} & Composition of witnesses is associative. & - \\
  \ruleref{ext-tm} & $A \leq_t B$ gives a context morphism $\Gamma.A \vdash \Gamma.t : \Gamma.B$. & - \\
  \ruleref{ext-id} & Context morphism $\Gamma.1_A$ is equal to the identity. & - \\
  \ruleref{ext-comp} & For witnesses $f$ and $g$, $\Gamma.g \circ f$ is equal to $\Gamma.f \circ \Gamma.g$. & - \\
  \ruleref{sub-tm} & Substitution preserves subtyping. & \textit{Wkn} and \textit{Sbst} \\
  \ruleref{sub-prs-id} & Substitution preserves the identity witness. & - \\
  \ruleref{sub-prs-comp} & Substitution preserves composition of witnesses. & - \\
  \end{tabular}
\end{table}

In \cref{tab:subtyping}, we discuss the meaning of the rules of $\CTT$ that involve a judgement of the form $\Gamma ~|~ A \vdash t : B$ from the subtyping perspective. We also show how these rules relate to the rules discussed by \citet{coraglia_et_al:LIPIcs.TYPES.2023.3}.

We now discuss the interpretation of $\CTT$ and the subtyping structure for some of the examples from \cref{sec:awfs}.
\begin{example}[\cref{exa:grpd} ctd.]
  \label{exa:interp-grpd}
  In this example, a context $\Gamma$ is interpreted as a groupoid $\interp{\Gamma}$.
  A type $A$ in context $\Gamma$ is interpreted as a split isofibration $\interp{A} : \interp{\Gamma.A} \to \interp{\Gamma}$. 
  Type morphisms from $A$ to $A'$ in context $\Gamma$, i.e. witnesses for a subtyping relation $A \leq A'$, are interpreted as morphisms of split fibrations of the form $\interp{A} \to \interp{A'}$ preserving the chosen lifts up to equality. 
  A context morphism $\Gamma \vdash s : \Delta$ is interpreted as a functor of the form $\interp{s} : \interp{\Gamma} \to \interp{\Delta}$. 
  Given $\Delta \vdash A \type$ and $\Gamma \vdash s : \Delta$, the type $A[s]$ in context $\Gamma$ is interpreted as the pullback of $\interp{A}$ along $\interp{s}$.
\end{example}

\begin{example}[\cref{exa:cha-pred} ctd.]
  \label{exa:interp-cha-pred}
  Let $H$ be a Heyting algebra.
  A context $\Gamma$ is interpreted as a set $\interp{\Gamma}$.
  A type $A$ in context $\Gamma$ is interpreted as an $H$-values predicate $\interp{A} : \interp{\Gamma} \to H$. 
  We have a type morphism from $A$ to $A'$ in context $\Gamma$ if $\interp{A}$ entails $\interp{A'}$, i.e. for all $x \in \Gamma$ we have $\interp{A} (x) \leq \interp{A'}(x)$. 
  The subtyping relation interpreted in this example is proof-irrelevant in the sense that each hom-set in the fibers has at most one element.
  Given $\Delta \vdash A \type$ and $\Gamma \vdash s : \Delta$, the type $A[s]$ in context $\Gamma$ is interpreted as $\interp{A} \circ \interp{s}$.
  Context extension $\Gamma.A$ is interpreted as comprehension of $\interp{A}$.
\end{example}

\section{Extending $\CTT$ with Type Formers}
\label{sec:type-formers}

In this section, we develop syntactic rules --- on top of \CTT --- and semantic structures for interpreting these rules --- on top of non-full comprehension categories---, for $\Pi$-, $\Sigma$-, and identity types, respectively.

In detail, we define the semantic structures for these type formers in \cref{def:pi-non-full,def:sigma-non-full,def:id-non-full}. 
We then discuss functoriality conditions on those structures that allow us to use type morphisms to interpret subtyping (\cref{def:comp-cat-with-pi,def:comp-cat-with-sigma,def:comp-cat-with-id}).
Subsequently, we extend $\CTT$ with functorial $\Pi$-, $\Sigma$- and identity types (\cref{def:syntax-with-pi-for-terms,def:syntax-with-sigma-for-terms,def:syntax-with-id-for-terms}).  
We also prove soundness by giving an interpretation of the rules in any comprehension category with suitable structure for each case (\cref{thm:soundness-pi-for-terms,thm:soundness-sigma-for-terms,thm:soundness-id-for-terms}). 
Finally, we briefly discuss how $\CTT$ with functorial $\Pi$-, $\Sigma$- and identity types supports subtyping (\cref{rem:subty-pi,rem:subty-sigma,rem:subt-id}).

Proofs of the soundness theorems (\cref{thm:soundness-pi-for-terms,thm:soundness-sigma-for-terms,thm:soundness-id-for-terms}) are in \cref{sec:interp-struc-rules}.



\begin{notation}
  \label{notation:global}
  In the remainder of this paper, we also use the following notations in comprehension categories to highlight how the syntax relates to the semantics.
  \begin{enumerate}
  \item We use $\pi_{A}$ to denote $\chi A $ for $A \in \TT_{\Gamma}$.
  \item We use $\Gamma.A$ to denote $\chi_0 A$ for $\Gamma \in \CC$ and $A \in \TT_{\Gamma}$. \label{item:notation-chi0}
  \item We use $A[s]$ to denote $s^* A$ for $s : \Gamma \to \Delta$ and $A \in \TT_{\Delta}$.
  \item We use $t[s]$ to denote the morphism given by the universal property of the following pullback square:
  \[\begin{tikzcd}
     [row sep = normal, column sep = large]
	\Gamma & \Delta \\
	& {\Gamma.s^* A} & {\Delta.A} \\
	& \Gamma & \Delta
	\arrow["s", from=1-1, to=1-2]
	\arrow["{{t[s]}}"{description}, dashed, from=1-1, to=2-2]
	\arrow[curve={height=12pt}, equals, from=1-1, to=3-2]
	\arrow["t", from=1-2, to=2-3]
	\arrow["{{s.A}}", from=2-2, to=2-3]
	\arrow["{{\chi (s^* A)}}"{description}, from=2-2, to=3-2]
	\arrow["\lrcorner"{anchor=center, pos=0.125}, draw=none, from=2-2, to=3-3]
	\arrow["{{\chi A}}"{description}, from=2-3, to=3-3]
	\arrow["s"', from=3-2, to=3-3]
  \end{tikzcd}\]
  for a section $t$ of $\chi A$, $A \in \TT_\Delta$ and $s : \Gamma \to \Delta$.
  \end{enumerate}
\end{notation}

\subsection{Functorial $\Pi$-types} \label{sec:pi-for-terms}
In this section, we define semantic structure for $\Pi$-types in non-full comprehension categories. 
We then discuss the necessary functoriality conditions that allow us to use type morphisms to interpret subtyping. 
We extend $\CTT$ with functorial $\Pi$-types and prove soundness by giving an interpretation of the rules in any comprehension category with functorial $\Pi$-types.
We also discuss how $\CTT$ with functorial $\Pi$-types supports subtyping. 



\begin{definition}[{\cite[Definition~3.2.2.3]{lindgren21}}] \label{def:pi-non-full}
    Let $(\CC, \TT, p, \chi)$ be a comprehension category. Given $\Gamma \in \CC$, $A \in \TT_\Gamma$ and $B \in \TT_{\Gamma.A}$, a \emphasize{dependent product} for $\Gamma, A$ and $B$ consists of: 
    \begin{enumerate}
    \item an object $\PiT{A}{B} \in \TT_\Gamma$;
    \item a morphism $\appmor{A}{B} : \Gamma.A.\PiT{A}{B}[\pi_A] \to \Gamma.A.B$ making the following diagram commute;
    \[\begin{tikzcd}
    {\Gamma.A.\PiT{A}{B}[\pi_A]} & {} & {\Gamma.A.B} \\
    & {\Gamma.A}
    \arrow["{\appmor{A}{B}}", from=1-1, to=1-3]
    \arrow["{\pi_{\PiT{A}{B}[\pi_A]}}"', from=1-1, to=2-2]
    \arrow["{\pi_B}", from=1-3, to=2-2]
    \end{tikzcd}\]
    \item a function $\lambdamor{A}{B}$ giving for each section $b: \Gamma.A \to \Gamma.A.B$ of $\pi_B$, a section $\lambdamor{A}{B} (b) : \Gamma \to \Gamma.\PiT{A}{B}$ of $\pi_{\PiT{A}{B}}$ such that $\lambdamor{A}{B}(-)$ and $\appmor{A}{B} \circ (-)[\pi_A]$ establish a bijection between sections of $\pi_B$ and sections of $\pi_{\PiT{A}{B}}$.
    \end{enumerate}
\end{definition}

We briefly draw the connection between \cref{def:pi-non-full} and $\Pi$-types in syntax. The object $\PiT{A}{B}$ in $\TT_\Gamma$ corresponds  to a type in context $\Gamma$ and is the dependent product of types $A$ and $B$. The morphism $\appmor{A}{B} : \Gamma.A.\PiT{A}{B}[\pi_A] \to \Gamma.A.B$ in $\CC$ gives the application of a dependent function \LongOrShortPoPl{(see \cref{rem:pi-app}).}{.} 
The function $\lambdamor{A}{B} $ is the usual $\lambda$-abstraction. The bijection between sections of $\pi_B$ and sections of $\pi_{\PiT{A}{B}}$ given by $\appmor{}{}$ and $\lambdamor{}{}$ corresponds to the usual $\beta$- and $\eta$-conversion for dependent products.

\begin{relatedwork}[\citet{jacobs93}] \label{rel:pi-jacobs}
  Jacobs interprets $\Pi$-types in a full comprehension category with right adjoints to weakening functors and an extra condition that the comprehension preserves products \cite[Subsection~5.1]{jacobs93}. 
  In a full comprehension category, \cref{def:pi-non-full} gives structure equivalent to Jacobs' definition.

  In a full comprehension category, morphisms in $\TT_\Gamma$ can be conflated with morphisms in $\CC/\Gamma$, for each $\Gamma$.
  We, however, do not assume fullness. 
  Hence, it is particularly important to make a distinction between the structure added to $\CC$ and the structure added to $\TT$.
  In Jacobs' definition, the structure of $\mathsf{app}$ and $\mathsf{lam}$ is added to the category $\TT$ in a comprehension category $(\CC, \TT, p, \chi)$. 
  Since we work with non-full comprehension categories and since we do not want $\mathsf{app}$ and $\mathsf{lam}$ to be type morphisms, we add them as a morphism in $\CC$ and a function on terms respectively.
  The structure related to subtyping is added to $\TT$. 
  This is why we use \cref{def:pi-non-full}.
\end{relatedwork}

\begin{relatedwork}[\citet{lumsdaine2015}] \label{rel:pi-lw}
  Lumsdaine and Warren define dependent products in a full comprehension category and take $\appmor{A}{B}$ to be a morphism $\appmor{A}{B}: \PiT{A}{B}[\pi_A] \to B$ in $\TT_{\Gamma.A}$ \cite[Definition~3.4.2.1]{lumsdaine2015}. In \cref{def:pi-non-full}, $\appmor{A}{B}$ is a morphism from $\Gamma.A.\PiT{A}{B}[\pi_A]$ to $\Gamma.A.B$ in $\CC$, as we do not assume fullness (see \cref{rel:pi-jacobs}). 
  In a full comprehension category,  \cref{def:pi-non-full} is equivalent to the definition of Lumsdaine and Warren.
\end{relatedwork}

\begin{relatedwork}[\citet{lindgren21}] \label{rel:pi-lindgren}
  Lindgren uses the term \emphasize{strong} dependent products to refer to what we call dependent products \cite[Definition~3.2.2.3]{lindgren21}. We do not make this distinction, as we only consider the strong case. 

  Lindgren shows that \cref{def:pi-non-full} can equivalently be expressed in terms of relative adjoints \cite[Propositions~3.2.4.2 and 3.2.4.4]{lindgren21}.
\end{relatedwork}

To be able to use type morphisms to interpret subtyping, we need to add certain functoriality conditions which formalize the intuition
that $\Pi$-types act contravariantly on the first argument, and covariantly on the second in the context of subtyping.
In particular, given subtyping relations $A' \leq_f A$ and $B[\Gamma.f] \leq_g B'$ we expect to have $\PiT{A}{B} \leq_{\subtypepi{f}{g}} \PiT{A'}{B'}$, since $\Pi$ acts contravariantly on the first argument, and covariantly on the second one.
The coercion function for $\PiT{A}{B} \leq_{\subtypepi{f}{g}} \PiT{A'}{B'}$ acts as follows.
Given a dependent function $t : \PiT{A}{B}$, the coerced dependent function $t' : \PiT{A'}{B'}$ takes a term $a' : A'$, coerces it to a term $a : A$ using $f$, applies $t$ to $a$ to get a term of type $B$, coerces it to a term of type $B'$ using $g$, and finally applies a $\lambda$-abstraction to get a term of type $\PiT{A'}{B'}$.




Now we define what it means for a comprehension category to have functorial $\Pi$-types. 
For this, we add the structure defined in \cref{def:pi-non-full} to a comprehension category, postulate a Beck-Chevalley condition, i.e. that this structure is preserved under substitution, and postulate suitable functoriality conditions.
\begin{definition} 
  \label{def:comp-cat-with-pi}
    A comprehension category $(\CC, \TT, p, \chi)$ \emphasize{has functorial $\Pi$-types} if it is equipped with a function giving for each $\Gamma \in \CC$, $A \in \TT_\Gamma$ and $B \in \TT_{\Gamma.A}$, $\PiT{A}{B}$, $\appmor{A}{B}$ and $\lambdamor{A}{B}$ as \cref{def:pi-non-full} such that: 
    \begin{enumerate}
    \item \label{item:pi-bc} for each $s : \Gamma \to \Delta$ we have an isomorphism $i_{\PiT{A}{B}, s} : \PiT{A[s]}{B[s.A]} \iso \PiT{A}{B}[s]$ in $\TT_\Gamma$;
    \item \label{item:pi-bc-lam} for each section $b : \Gamma.A \to \Gamma.A.B$ of $\pi_B$ the following diagrams commute
    \[
    \begin{tikzcd}
    [column sep = huge, row sep = large]
    {\Gamma.\PiT{A[s]}{B[s.A]} } & {\Delta.\PiT{A}{B}} \\
    \Gamma & \Delta
    \arrow["{s.\Pi(s_A,s_B)}", from=1-1, to=1-2]
    \arrow["{\lambdamor{A[s]}{B[s.A]}(b[s.A])}"{description}, from=2-1, to=1-1]
    \arrow["s"', from=2-1, to=2-2]
    \arrow["{\lambdamor{A}{B}(b)}"{description}, from=2-2, to=1-2]
    \end{tikzcd}
    \]
    where $s.\Pi(s_A,s_B) \coloneqq{s.\PiT{A}{B} \circ \Gamma.i_{\PiT{A}{B}, s}}$;
    \item \label{item:pi-subt} the comprehension category is equipped with a function giving for each $f : A' \to A$ in $\TT_{\Gamma}$ and $g : B[\Gamma.f] \to B'$ in $\TT_{\Gamma.A'}$, a morphism 
    $\subtypepi{f}{g}: \PiT{A}{B} \to \PiT{A'}{B'}$
    in $\TT_\Gamma$;
    \item \label{item:pi-subt-chi} $\Gamma.\subtypepi{f}{g}$ is the following composition in $\CC$;
    \[\begin{tikzcd}[column sep = 8em]
      {\Gamma.\PiT{A}{B}} & {\Gamma.\PiT{A}{B}.\PiT{A'}{B'} [\pi_{\PiT{A}{B}}]} & {\Gamma.\PiT{A'}{B'}}
      \arrow["{{\lambda(\Gamma.A.g[\appmor{A}{B}[\Gamma.f]])}}", from=1-1, to=1-2]
      \arrow["{\pi_{\PiT{A}{B}}.\PiT{A'}{B'}}", from=1-2, to=1-3]
    \end{tikzcd}\]
    (see \cref{sec:constructions} for more detail);
    \item \label{item:pi-subt-id} $\subtypepi{-}{-}$ preserves identity, i.e. we have 
      $\subtypepi{1_{A}}{i^{\mathsf{id}}_{B}} = 1_{\PiT{A}{B}}$
    for each suitable $A$ and $B$, where $i^{\mathsf{id}}_{B} : B[1_{\Gamma.A}] \iso B$;
    \item \label{item:pi-subt-comp} $\subtypepi{-}{-}$ preserves composition, i.e. we have 
    $\subtypepi{f \circ f'}{g' \circ g[\Gamma.f']} = \subtypepi{f'}{g'} \circ \subtypepi{f}{g}$
    for each suitable $f'$ and $g'$;
    \item \label{item:pi-sub-congs}
    $i_{\PiT{A}{B},-}$ is functorial in that it preserves $i^{\mathsf{iso}}$ and $i^{\mathsf{comp}}$ (see \cref{sec:iso-coherences} for more detail).
  \end{enumerate} 
\end{definition}
\cref{item:pi-bc,item:pi-bc-lam} of \cref{def:comp-cat-with-pi} state that $\Pi$ and $\lambda$ are preserved under substitution respectively. 
Consequently, $\mathsf{app}$ is also preserved under substitution. 
\cref{item:pi-subt,item:pi-subt-id,item:pi-subt-comp,item:pi-subt-chi,item:pi-sub-congs} give the functoriality conditions which formalize the variance of $\Pi$-types on the arguments for subtyping.
\cref{item:pi-subt-chi} expresses compatibility of the type morphism structure on the category $\TT$ with the type former structure on the category $\CC$.


The following proposition states that \cref{def:comp-cat-with-pi} is compatible with Jacobs' definition of comprehension categories with products.
\begin{proposition}[Relation to \citet{jacobs99}]
  \label{rel:pi-subty-jacobs}
  Every \emphasize{full} comprehension category with products where the comprehension functor preserves products in the sense of \citet[Section~5.1]{jacobs93} has functorial $\Pi$-types in the sense of \cref{def:comp-cat-with-pi}.
\end{proposition}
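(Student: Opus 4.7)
The plan is to use the equivalence between Jacobs' and Lumsdaine--Warren's formulations of $\Pi$-types (see \cref{rel:pi-jacobs,rel:pi-lw,rel:pi-lindgren}) to extract the data $\PiT{A}{B}$, $\appmor{A}{B}$ and $\lambdamor{A}{B}$ of \cref{def:pi-non-full} from Jacobs' right adjoints $\Pi_A \dashv \pi_A^*$, and then to verify the Beck--Chevalley and functoriality conditions of \cref{def:comp-cat-with-pi} in turn. Fullness is essential throughout, since it lets us identify any type morphism in $\TT_\Gamma$ with its image in $\CC/\Gamma$, so that constructions performed in either category can freely be transported to the other.

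First, I would take $\PiT{A}{B} := \Pi_A B$, define $\appmor{A}{B}$ as the counit at $B$ of the adjunction (passing through $\chi$ to view it as a morphism in $\CC$, using Jacobs' product-preservation condition), and take $\lambdamor{A}{B}$ to be the adjoint transpose. The section bijection of \cref{def:pi-non-full} is then the hom-set bijection of the adjunction. The isomorphism $i_{\PiT{A}{B},s}$ in \cref{item:pi-bc} comes from the standard argument that substitution commutes with right adjoints in a comprehension category with products; \cref{item:pi-bc-lam} follows from naturality of the hom-set bijection.

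Second, to obtain the subtyping coercion $\subtypepi{f}{g}$ of \cref{item:pi-subt}, I would transpose across $\Pi_{A'} \dashv \pi_{A'}^*$ the composite
\[
\pi_{A'}^* \PiT{A}{B} \;\xrightarrow{\;\cong\;}\; \pi_A^* \PiT{A}{B}[\Gamma.f] \;\longrightarrow\; B[\Gamma.f] \;\xrightarrow{\;g\;}\; B',
\]
where the first map is the canonical substitution isomorphism (using $\pi_A \circ \Gamma.f = \pi_{A'}$) and the second is the reindexing along $\Gamma.f$ of the counit $\appmor{A}{B}$. Preservation of identity (\cref{item:pi-subt-id}) and of composition (\cref{item:pi-subt-comp}) then follow from the functoriality of adjoint transposes, and the coherences in \cref{item:pi-sub-congs} are formal consequences of the naturality of the Beck--Chevalley isomorphism.

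The main obstacle I expect is \cref{item:pi-subt-chi}, which demands that $\chi(\subtypepi{f}{g})$ coincide with a specific composite in $\CC$ built from $\lambdamor{}{}$, $\appmor{}{}$ and a projection. Verifying this amounts to a diagram chase that unfolds the adjoint transpose defining $\subtypepi{f}{g}$ via its unit--counit description and matches it with the explicit $\CC$-level composite. Fullness is crucial here: the Yoneda-style identification of morphisms in $\TT_\Gamma$ with their action on sections of projections lets one check equality of type morphisms by checking equality of their $\CC$-level actions on arbitrary terms, which reduces the condition to a $\beta$-rewriting argument using the adjunction triangle identities.
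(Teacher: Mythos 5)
The paper states this proposition without proof, so I am judging your argument on its own terms. Your overall strategy --- extracting the data of \cref{def:pi-non-full} from Jacobs' right adjoints $\Pi_A \dashv \pi_A^*$, obtaining $\subtypepi{f}{g}$ as the transpose of $g$ composed with $(\Gamma.f)^*(\varepsilon_B)$ and the canonical isomorphism $\pi_{A'}^*\PiT{A}{B} \cong (\Gamma.f)^*\pi_A^*\PiT{A}{B}$, and using fullness and faithfulness of $\chi$ to pass between $\TT_\Gamma$ and $\CC/\Gamma$ when checking \cref{item:pi-subt-chi} --- is sound, and your transpose is the $\TT$-level counterpart of the composite spelled out in \cref{cons:pi-subtyping}.

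There is, however, one step that does not hold as written: you claim the bijection between sections of $\pi_B$ and sections of $\pi_{\PiT{A}{B}}$ (clause (3) of \cref{def:pi-non-full}) ``is the hom-set bijection of the adjunction.'' It is not. The adjunction $\Pi_A \dashv \pi_A^*$ relates \emph{fiberwise} morphisms $\pi_A^* X \to B$ in $\TT_{\Gamma.A}$ to morphisms $X \to \Pi_A B$ in $\TT_\Gamma$, whereas a section of $\pi_B$ is a morphism in $\CC$ whose domain $1_{\Gamma.A}$ need not lie in the image of $\chi$, so it is not an input to that bijection even in a full comprehension category; likewise ``take $\lambdamor{A}{B}$ to be the adjoint transpose'' has nothing in $\TT$ to transpose. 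The bijection on sections is exactly what Jacobs' condition that the comprehension preserves products supplies, and it is a genuinely separate hypothesis: this is why Jacobs imposes it in addition to the existence of the fibered right adjoints. You do invoke the product-preservation condition, but you spend it on ``viewing the counit as a morphism in $\CC$,'' which costs nothing ($\chi_0$ applied to the vertical counit already yields $\appmor{A}{B}$ over $\Gamma.A$), so the hypothesis is absent precisely where it carries the weight. Once this is repaired the remainder goes through; for \cref{item:pi-subt-chi} you could also sidestep the comparison chase by \emph{defining} $\Gamma.\subtypepi{f}{g}$ to be the prescribed composite, lifting it to a vertical morphism of $\TT_\Gamma$ by fullness, and then verifying \cref{item:pi-subt-id,item:pi-subt-comp} via faithfulness of $\chi$.
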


\begin{relatedwork}[\citet{coraglia_et_al:LIPIcs.TYPES.2023.3} and \citet{coraglia2024contextjudgementdeduction}]
  \label{rel:pi-coraglia}
  \citet{coraglia_et_al:LIPIcs.TYPES.2023.3} define $\Pi$-types for generalized categories with families, a structure equivalent to comprehension categories \cite{coraglia:2024}.
  This definition is similar to \cref{def:comp-cat-with-pi} regarding variance on the first and the second argument. 
  In another work, \citet{coraglia2024contextjudgementdeduction} present an alternative definition for $\Pi$-types in generalized categories with families (there called DTT), which is covariant in both arguments. 
\end{relatedwork}

\begin{relatedwork}[\citet{gambino:2023}]
  \label{rel:pi-gambino}
  \citet{gambino:2023} also define $\Pi$-types for comprehension categories,
  and there are a couple of differences to note.
  First, \citet{gambino:2023} do not require their $\Pi$-types to be functorial,
  whereas we do (\Cref{item:pi-subt,item:pi-subt-chi,item:pi-subt-id,item:pi-subt-comp} in \Cref{def:comp-cat-with-pi}).
  Second, while \citet{gambino:2023} phrase (pseudo) stability under substitution by postulating suitable Cartesian morphisms,
  we use explicit isomorphisms.
  These different ways of phrasing preservation are equivalent,
  and the advantage of our way is that it directly gives us a derivation rule in type theory.
\end{relatedwork}

\LongOrShortPoPl{
In the following remark, we give a more detailed explanation of how the morphism $\appmor{A}{B} : \Gamma.A.\PiT{A}{B} [\pi_A] \to \Gamma.A.B$ corresponds to the application of dependent products.
\begin{remark} \label{rem:pi-app}
  Let $(\CC, \TT, p, \chi)$ be a comprehension category with $\Pi$-types and $f : \Gamma \to \Gamma.\PiT{A}{B}$ a section of a projection $\pi_{\PiT{A}{B}}$ in $\CC$. The morphism $\appmor{A}{B} \circ f[\pi_A] : \Gamma.A \to \Gamma.A.B$ corresponds to the application of the dependent function $f$, in the sense that given a section $a : \Gamma \to \Gamma.A$ of $\pi_A$, the morphism 
  \[ (\appmor{A}{B} \circ f[\pi_A])[a]: \Gamma \to \Gamma.B[a]\] 
  can be thought of as the application of the function $f$ to the term $a$. Recall that  $(\appmor{A}{B} \circ f[\pi_A])[a]$ is the morphism given by the universal property of the following pullback square in $\CC$.
\[\begin{tikzcd}
	\Gamma & {\Gamma.A} \\
	& {\Gamma.B[a]} & {\Gamma.A.B} \\
	& \Gamma & {\Gamma.A}
	\arrow["a", from=1-1, to=1-2]
	\arrow[dashed, from=1-1, to=2-2]
	\arrow[equal, curve={height=12pt}, from=1-1, to=3-2]
	\arrow["{\appmor{A}{B} \circ f[\pi_A]}", from=1-2, to=2-3]
	\arrow[from=2-2, to=2-3]
	\arrow["{\pi_{B[a]}}"{description}, from=2-2, to=3-2]
	\arrow["\lrcorner"{anchor=center, pos=0.125}, draw=none, from=2-2, to=3-3]
	\arrow["{\pi_B}"{description}, from=2-3, to=3-3]
	\arrow["a"', from=3-2, to=3-3]
\end{tikzcd}\]
\end{remark}
}{}

We now discuss examples of comprehension categories with functorial $\Pi$-types, including those that arise from \AWFSs (see \cref{sec:awfs}).
In a large class of \AWFSs, $\Pi$-types can be interpreted.
More specifically,
if an \AWFS satisfies the \textbf{exponentiability property}
and comes equipped with a \textbf{functorial Frobenius structure},
then its associated comprehension category has $\Pi$-types~\cite[Proposition 4.6]{gambino:2023}.
The functoriality condition discussed in \cref{def:comp-cat-with-pi} is satisfied by the universal property of exponentials.
These conditions are satisfied, for instance, by models of cubical type theory.
They are also satisfied by the examples in \cref{sec:awfs}.
In what follows, we discuss functorial $\Pi$-types in the these examples.

\begin{example}[\cref{exa:grpd,exa:interp-grpd} ctd.]
  \label{exa:pi-grpd}
  As explained in \cref{exa:interp-grpd}, in this example, a type $A$ in context $\Gamma$ is interpreted as a split isofibration $\interp{A} : \interp{\Gamma.A} \to \interp{\Gamma}$. 
  For each groupoid $\Gamma$ and split isofibrations $A : \Gamma.A \to \Gamma$ and $B : \Gamma.A.B \to \Gamma.A$, we have a split isofibration $\PiT{A}{B} : \Gamma.\PiT{A}{B} \to \Gamma$, where for each $x \in \Gamma$, the set of objects in the fiber over $x$ is $\Pi({x': A_x}),B_{x'}$, i.e. the elements in the fibers are dependent functions.
  The functor $\app$ is $\dom (\alpha)$, where $\alpha: \PiT{A}{B}[\pi_A] \to B$ is the morphism of split fibrations in the fiber over $\Gamma.A$ that maps $(x \in \Gamma, a \in A_x, f \in \Pi ({x' : {A_x}}), B_{x'})$ to $(x \in \Gamma, a  \in A_x, f(a) \in B_a)$ for each $x, a$ and $f$.
  For a section $b : \Gamma.A \to \Gamma.A.B$ of the fibration $B$ over $\Gamma.A$ that maps $(x \in \Gamma, a \in A_x)$ to $(x \in \Gamma, a \in A_x, b(a) \in B_a)$ for each $x$ and $a$, $\lambda b$ is a section of the fibration $\PiT{A}{B}$ over $\Gamma$ that maps $x \in \Gamma$ to $(x \in \Gamma, \lambda (x' : A_x). b(x'))$.

  Now for the functoriality condition, let $f : A' \to A$ be a morphism of split fibrations in the fiber over $\Gamma$ that maps $(x \in \Gamma, a' \in A'_x)$ to $(x \in \Gamma, f(a') \in A_x)$ for each $x$ and $a'$ and let $g : B[f] \to B'$ be a morphism of split fibrations over $\Gamma.A'$ that maps $(x \in \Gamma, a' \in A'_x, b \in B_{f(a')})$ to $(x \in \Gamma, a' \in A'_x, g(b)  B'_{a'})$ for each $x,a'$ and $b$.
  The morphism $\subtypepi{f}{g} : \PiT{A}{B} \to \PiT{A'}{B'}$ maps $(x \in \Gamma, \in \Pi ({x' : A_x}), B_{x'})$ to $(x \in \Gamma, \Pi ({x' : A'_x}), ghf(x'))$. 
  It is easy to see that this assignation satisfies the functoriality conditions discussed in \cref{def:comp-cat-with-pi}.
\end{example}

\begin{example}[\cref{exa:cha-pred,exa:interp-cha-pred} ctd.]
  \label{exa:pi-cha-pred}
  As explained in \cref{exa:interp-cha-pred}, in this example, a type $A$ in context $\Gamma$ is interpreted as an $H$-valued predicate $\interp{A} : \interp{\Gamma} \to H$ and $\Gamma.A$ is interpreted as the comprehension.
  This example has $\Pi$-types.
  For each set $\Gamma$ and predicates $A : \Gamma \to H$ and $B : \Gamma.A \to H$, where $\Gamma.A$ is the comprehension of A, the predicate $\PiT{A}{B} : \Gamma \to H$ is defined to be $B(x)$ if $\top \leq A(x)$ and $\top$ otherwise.
  We have $\Gamma.A.\PiT{A}{B}[\pi_A] = \Gamma.A.B = \{x \in \Gamma | \top \leq A(x) \wedge \top \leq B(x) \}$ and the function $\app : \Gamma.A.\PiT{A}{B}[\pi_A] \to \Gamma.A.B$ is identity. 

  Now for the functoriality condition, recall that in this example the hom-sets in the fibers have at most one element. 
  Let $\Gamma$ be a set, and $A,A' : \Gamma \to H$ be two $H$-valued predicates. 
  We have a morphism in the fiber over $\Gamma$ of the form $A' \to A$. 
  This means that for all $x \in \Gamma$ we have $A'(x) \leq A(x)$. 
  Hence, $\Gamma.A' \subseteq \Gamma.A$. 
  We have a morphism in the fiber over $\Gamma.A'$ of the form $B (x) \to B'(x)$, which means that for all $x \in \Gamma$ such that $\top \leq A'(x)$ we have  $B(x) \leq B'(x)$.
  Now we verify that there is a morphism in the fiber over $\Gamma$ of the form $\PiT{A}{B} \to \PiT{A'}{B'}$. 
  For this we need that for each $x \in \Gamma$, $\PiT{A}{B}(x) \leq \PiT{A'}{B'}$. 
  If $\top \leq A'(x)$, then we have $\top \leq A(x)$ and $\PiT{A}{B}(x) = B(x)$. 
  Since $\top \leq A'(x)$, we have $\PiT{A}{B}(x) = B(x) \leq B'(x) = \PiT{A'}{B'}(x)$.
  If $A'(x) < \top$, then $\PiT{A'}{B'} (x) = \top$ and we have $\PiT{A}{B}(x) \leq \PiT{A'}{B'}$ trivially.
\end{example}
Note that in \cref{exa:pi-cha-pred}, $\Pi$-types do not give universal quantification the way that one would expect in first-order predicate logic. 
The derivation rules for universal quantification are expressed using proofs, whereas the derivation rules for $\Pi$-types in this example are expressed using terms.
In this example, proofs are represented by morphisms in the fiber categories, whereas terms are represented as sections of projections, which explains the discrepancy.

\begin{figure}[!htb]
  \centering
  \tcbset{colframe=black, colback=white, width=\textwidth, boxrule=0.1mm, arc=0mm, auto outer arc}
  \begin{tcolorbox}
  \centering
  \scriptsize
  \[
  \binaryRule{\Gamma \vdash A \type}{\Gamma.A \vdash B \type}{\Gamma \vdash \PiT{A}{B} \type}{\scriptsize{pi-form}}
  \unaryRule
  {\Gamma.A \vdash b : B}
  {\Gamma \vdash \lambda b : \PiT{A}{B}}
  {\scriptsize{pi-intro}}
  \]
  \[
  \binaryRuleBinaryConcl{\qquad \qquad \quad \Gamma \vdash A \type}{\Gamma.A \vdash B \type \qquad \qquad \quad}{\Gamma.A.\PiT{A}{B}[\pi_A] \vdash \appmor{A}{B} : \Gamma.A.B}{\Gamma.A.\PiT{A}{B}[\pi_A] \vdash \pi_B \circ \appmor{A}{B} \equiv \pi_{\PiT{A}{B}[\pi_A]} : \Gamma.A}{\scriptsize{pi-elim}}
  \]
  \[
  \unaryRule
  {\Gamma.A \vdash b : B}
  {\Gamma. A \vdash \appmor{A}{B} \circ p_2(\lambda b \circ \pi_A) \equiv b : \Gamma.A.B}{\scriptsize{pi-beta}}
  \unaryRule
  {\Gamma \vdash f : \PiT{A}{B}}
  {\Gamma \vdash \lambda (\appmor{A}{B} \circ p_2(f \circ \pi_A)) \equiv f : \Gamma.\PiT{A}{B}}{\scriptsize{pi-eta}}
  \]
  \[
  \ternaryRule{\Delta \vdash A \type}{\Delta.A \vdash B \type}{\Gamma \vdash s : \Delta}{\Gamma ~|~ \PiT{A[s]}{B[s.A]} \vdashiso i_{\PiT{A}{B}, s} : \PiT{A}{B}[s]}{\scriptsize{pi-sub}}
  \]
  \[
  \binaryRule
  {\Gamma \vdash s : \Delta}
  {\Gamma \vdash b : \PiT{A}{B}}
  {\Gamma \vdash \lambdamor{A}{B}(b) \circ s \equiv s.\PiT{A}{B} \circ \Gamma.i_{\PiT{A}{B}, s} \circ \lambdamor{A[s]}{B[s.A]} (p_2(b \circ s.A)): \Delta.\PiT{A}{B}}{\scriptsize{sub-lam}}
  \]
  \[
  \quadRuleTernaryConcl
  {\Gamma.A \vdash B \type}{\Gamma.A' \vdash B' \type}{\Gamma ~|~ A' \vdash f : A}{\Gamma.A' ~|~ B[\Gamma.f] \vdash g : B'}
  {\Gamma ~|~ \PiT{A}{B} \vdash \subtypepi{f}{g} : \PiT{A'}{B'}}
  {\Gamma. \PiT{A}{B} \vdash \Gamma.\subtypepi{f}{g} \equiv \Gamma.A'.g \circ \lambda(p_2(\Gamma.A'.g \circ (\pi_{\PiT{A}{B}[\pi_{A'}]}.B[\Gamma.f] \circ p_2 (\appmor{A}{B} \circ}
  { (\Gamma.f).\PiT{A}{B}[\pi_A] \circ \compiso{\PiT{A}{B}[\pi_{A}]}{\pi_{A}}{\Gamma.f} \circ \subiso{\PiT{A}{B}[\pi_{A}]}{\pi_{A'}}{\pi_{A} \circ \Gamma.f})))) : \Gamma. \PiT{A'}{B'}}
  {\scriptsize{subt-pi}}
  \]
  \[
  \binaryRule
  {\Gamma \vdash A \type}{\Gamma.A \vdash B \type}
  {\Gamma ~|~ \PiT{A}{B} \vdash \subtypepi{1_{A}}{i^{\mathsf{id}}_{B}} \equiv 1_{\PiT{A}{B}} : \PiT{A}{B}}
  {\scriptsize{subt-pi-id}}
  \]
  \[
  \senRule
  {\Gamma.A \vdash B \type}{\Gamma.A' \vdash B' \type}
  {\Gamma ~|~ A' \vdash f : A}
  {\Gamma.A ~|~ B[\Gamma.f] \vdash g : B'}
  {\Gamma ~|~ A'' \vdash f' : A'}
  {\Gamma.A' ~|~ B'[\Gamma.f'] \vdash g' : B''}
  {\Gamma ~|~ \PiT{A}{B} \vdash \subtypepi{f \circ f'}{g' \circ g[\Gamma.f']} = \subtypepi{f'}{g'} \circ \subtypepi{f}{g} : \PiT{A''}{B''}}
  {\scriptsize{subt-pi-comp}}
  \]
  \[
  \binaryRule{\Gamma \vdash A \type}{\Gamma.A \vdash B \type}{\Gamma ~|~ \PiT{A[1_\Gamma]}{B[1.A]} \vdash \idiso{\PiT{A}{B}} \circ i_{\PiT{A}{B}, 1_\Gamma} \equiv \subtypepi{\idisoinv{A}}{\idiso{B} \circ \subiso{B}{{1_\Gamma.A} \circ {\Gamma.\idisoinv{A}}}{1_{\Gamma.A}} \circ \compisoinv{B}{1_\Gamma.A}{\Gamma.\idisoinv{A}}} : \PiT{A}{B}}{\scriptsize{pi-sub-id}}
  \]
  \[
  \quadRuleBinaryConcl{\qquad \qquad \qquad \qquad \Theta \vdash A \type}{\Theta.A \vdash B \type}{\Gamma \vdash s' : \Delta}{\Delta \vdash s : \Theta \qquad \qquad \qquad \qquad}{\Gamma ~|~ \PiT{A[s \circ s']}{B[(s \circ s').A]} \vdash \compiso{\PiT{A}{B}}{s}{s'} \circ i_{\PiT{A}{B},s \circ s'} \equiv (i_{\PiT{A}{B},s})[s'] \circ i_{\PiT{{A[s']}}{B[s'.A]}} \circ }
  {\subtypepi{\compisoinv{A}{s}{s'}}{\compiso{B}{s.A}{s'.A[s]} \circ \subiso{B}{(s \circ s').A \circ \Gamma.\compisoinv{A}{s}{s'}}{s.A \circ s'.A[s]} \circ \compisoinv{B}{(s \circ s').A}{\Gamma.\compisoinv{A}{s}{s'}}} : {\PiT{A}{B}[s][s']}}{\scriptsize{pi-sub-comp}}
  \]
\end{tcolorbox}
\caption{Rules for functorial $\Pi$-types. Rules \protect\ruleref{pi-sub}, \protect\ruleref{sub-lam},\protect\ruleref{p-sub-id} and \protect\ruleref{p-sub-comp} use the notation introduced in \cref{lemma:s.A}. 
\LongOrShortPoPl{
For example, in Rule \protect\ruleref{pi-sub}, $s.A$ is $(s \circ \pi_{A[s]}, \Gamma.(\compiso{A}{s}{\pi_{A[s]}} [\pi_{A[s]}]) \circ p_2(1_{\Gamma.A[s]}))$.
}{}
}
\label{fig:pi-for-terms}
\end{figure}

\begin{definition} \label{def:syntax-with-pi-for-terms}
  We define the \emphasize{extension of $\CTT$ by functorial $\Pi$-types} to consist of the rules in \cref{fig:pi-for-terms}.
\end{definition}

\begin{theorem}[Soundness of Rules for $\Pi$-types] \label{thm:soundness-pi-for-terms}
  Any comprehension category with functorial $\Pi$-types models the rules of $\CTT$ and the rules for functorial $\Pi$-types in \cref{fig:pi-for-terms}.
\end{theorem}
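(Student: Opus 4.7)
The plan is to extend the interpretation of $\CTT$ built in the proof of \cref{thm:soundness} by interpreting the new syntactic material using the semantic structure supplied by \cref{def:pi-non-full,def:comp-cat-with-pi}. Fix a comprehension category $(\CC,\TT,p,\chi)$ with functorial $\Pi$-types. On top of the already-defined interpretation, I would extend the inductive clauses by setting $\interp{\PiT{A}{B}} := \PiT{\interp{A}}{\interp{B}}$, interpreting $\lambda b$ as $\lambdamor{\interp{A}}{\interp{B}}(\interp{b})$ (read as the corresponding section via \cref{notation:terms}), interpreting the derivable morphism named by $\appmor{A}{B}$ as the semantic $\appmor{\interp{A}}{\interp{B}}$ supplied by \cref{def:pi-non-full}, interpreting $i_{\PiT{A}{B},s}$ by the isomorphism from \cref{item:pi-bc} of \cref{def:comp-cat-with-pi}, and interpreting $\subtypepi{f}{g}$ by the morphism of the same name supplied by \cref{item:pi-subt}.

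Next, I would verify each rule of \cref{fig:pi-for-terms} holds under this extended interpretation, in roughly the order they appear. Rules \ruleref{pi-form}, \ruleref{pi-intro}, and \ruleref{pi-elim} are immediate from \cref{def:pi-non-full}. Rules \ruleref{pi-beta} and \ruleref{pi-eta} follow from the bijection between sections of $\pi_B$ and sections of $\pi_{\PiT{A}{B}}$ witnessed by $\lambdamor{A}{B}$ and $\appmor{A}{B}\circ(-)[\pi_A]$, after unfolding the definition of MLTT terms in \cref{notation:terms}. Rule \ruleref{pi-sub} is exactly the content of \cref{item:pi-bc}, rule \ruleref{sub-lam} is \cref{item:pi-bc-lam}, and rules \ruleref{subt-pi-id} and \ruleref{subt-pi-comp} are read off directly from \cref{item:pi-subt-id,item:pi-subt-comp}. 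Rule \ruleref{subt-pi} is verified by applying $\chi_0$ to both sides and invoking \cref{item:pi-subt-chi}, reducing the claimed equality in $\CC$ to the defining formula of $\Gamma.\subtypepi{f}{g}$ together with the characterization of $\lambdamor{}{}$ and $\appmor{}{}$ up to the pullback universal property.

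The main obstacle will be the coherence rules \ruleref{pi-sub-id} and \ruleref{pi-sub-comp}, whose statements mix the substitution isomorphisms $\compiso{}{}{}$, $\idiso{}$, $\subiso{}{}{}$, the Beck--Chevalley isomorphism $i_{\PiT{A}{B},-}$, and the subtyping morphism $\subtypepi{}{}$. To discharge these, I would use \cref{item:pi-sub-congs} (functoriality of $i_{\PiT{A}{B},-}$ with respect to $i^{\mathsf{id}}$ and $i^{\mathsf{comp}}$) together with the coherence properties of substitution that were already established while interpreting the rules in \cref{fig:from-comp-cat-congruence}. Concretely, each side of the equation translates into a composite of vertical morphisms in $\TT_{\interp{\Gamma}}$; pre- and post-composing with the appropriate $i_{\PiT{A}{B},-}$ isomorphisms and applying \cref{item:pi-subt-id,item:pi-subt-comp} reduces the claim to the known naturality of $\compiso{}{}{}$ and $\idiso{}$ with respect to $\subtypepi{}{}$. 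The remaining congruence rules involving the new symbols follow at once from functoriality of reindexing and the fact that $\subtypepi{f}{g}$ depends only on $f$ and $g$ up to equality, which is immediate from the interpretation.
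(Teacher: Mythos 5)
Your proposal is correct and follows essentially the same route as the paper: each new syntactic construct is interpreted by the corresponding piece of structure in \cref{def:pi-non-full,def:comp-cat-with-pi}, and each rule of \cref{fig:pi-for-terms} is then validated by the matching clause of the definition (\ruleref{pi-sub} by \cref{item:pi-bc}, \ruleref{sub-lam} by \cref{item:pi-bc-lam}, \ruleref{subt-pi} by \cref{item:pi-subt-chi}, and so on). The only minor difference is that you anticipate extra work for \ruleref{pi-sub-id} and \ruleref{pi-sub-comp}, whereas these are exactly the commuting diagrams postulated in \cref{item:pi-sub-congs} (spelled out in \cref{sec:iso-coherences}), so they hold by definition with no further reduction.
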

 
We now see how elimination and computation rules for $\Pi$-types similar to those in MLTT can be derived from the rules in \cref{fig:pi-for-terms}.
\begin{proposition} \label{prop:pi-for-terms-usual-rules}
  From the rules of $\CTT$ and the rules in \cref{fig:pi-for-terms}, we can derive the following rules. 
  \[
  \binaryRule
  {\Gamma \vdash f : \PiT{A}{B}}
  {\Gamma \vdash a : A}
  {\Gamma \vdash \appmor{A}{B}(f,a) : B[a]}
  {\scriptsize{}}
  \binaryRule
  {\Gamma.A \vdash b : B}
  {\Gamma \vdash a : A}
  {\Gamma \vdash \appmor{A}{B}(\lambda b, a) \equiv p_2(b \circ a) : \Gamma.B[a]}{\scriptsize{}}
  \]
\end{proposition}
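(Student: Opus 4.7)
My plan is to introduce the notation $\appmor{A}{B}(f, a)$ by the evident definition, verify that it has the claimed type by chasing sections through Rule \ruleref{pi-elim} and term substitution, and then derive the $\beta$-conversion as a direct consequence of Rule \ruleref{pi-beta}.

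Concretely, I would define
\[
\appmor{A}{B}(f, a) \;\coloneqq\; \bigl(\appmor{A}{B} \circ f[\pi_A]\bigr)[a],
\]
where $f[\pi_A]$ is the weakening of the MLTT term $f$ along $\pi_A : \Gamma.A \to \Gamma$ in the sense of \cref{notation:terms} and Notation \ref{notation:global}, and the outer $(-)[a]$ is the analogous term substitution along $a$. For the first derived rule, I first observe that $f[\pi_A]$ is by construction a section of $\pi_{\PiT{A}{B}[\pi_A]}$ over $\Gamma.A$. Composing with $\appmor{A}{B} : \Gamma.A.\PiT{A}{B}[\pi_A] \to \Gamma.A.B$ and using Rule \ruleref{pi-elim}, which gives $\pi_B \circ \appmor{A}{B} \equiv \pi_{\PiT{A}{B}[\pi_A]}$, I obtain that $\appmor{A}{B} \circ f[\pi_A]$ is a section of $\pi_B$, hence an MLTT term $\Gamma.A \vdash \appmor{A}{B} \circ f[\pi_A] : B$. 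Term substitution along $a$ then yields a term $\Gamma \vdash \appmor{A}{B}(f, a) : B[a]$, as required.

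For the second derived rule I unfold the above definition: $\appmor{A}{B}(\lambda b, a) = (\appmor{A}{B} \circ (\lambda b)[\pi_A])[a]$, and use that, by the conventions on term substitution, $(\lambda b)[\pi_A] \equiv p_2(\lambda b \circ \pi_A)$. Rule \ruleref{pi-beta} then gives $\appmor{A}{B} \circ p_2(\lambda b \circ \pi_A) \equiv b$ as sections of $\pi_B$, and applying the congruence of term substitution (which follows from the fact that $(-)[a]$ is defined by a pullback universal property, so respects equalities of the substituted term) together with the equation $b[a] \equiv p_2(b \circ a)$ yields $\appmor{A}{B}(\lambda b, a) \equiv p_2(b \circ a)$ as morphisms $\Gamma \to \Gamma.B[a]$. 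I expect the only real obstacle to be careful bookkeeping with the overloaded substitution notations and with the implicit isomorphisms $\compiso{-}{-}{-}$, $\idiso{-}$, and $\subiso{-}{-}{-}$ that appear whenever substitutions are composed or when one substitutes along definitionally equal context morphisms; once those are tracked, the derivation is essentially immediate from \ruleref{pi-elim} and \ruleref{pi-beta}.
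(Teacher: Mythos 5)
Your proposal is correct and matches the paper's own derivation: the paper defines $\appmor{A}{B}(f,a)$ as $p_2(\appmor{A}{B} \circ p_2(f \circ \pi_A) \circ a)$, which is exactly your $(\appmor{A}{B} \circ f[\pi_A])[a]$ once the weakening and term-substitution notations are unfolded to $p_2(- \circ -)$, and it likewise derives the typing from \ruleref{pi-elim}, \ruleref{ctx-mor-comp} and \ruleref{sub-proj} and the computation rule from \ruleref{pi-beta} together with the congruence rules for composition and $p_2(-)$. Your closing remark about tracking the $\subiso{-}{-}{-}$-style coercions (e.g.\ because $p_2(f \circ \pi_A)$ lands in $\PiT{A}{B}[\pi_{\PiT{A}{B}} \circ f \circ \pi_A]$ rather than $\PiT{A}{B}[\pi_A]$ on the nose) correctly identifies the only bookkeeping the paper also leaves implicit.
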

\begin{proof}
  The context morphism $\appmor{A}{B}(f,a)$ is $p_2(\appmor{A}{B} \circ p_2(f \circ \pi_A) \circ a)$. 
  \LongOrShortPoPl{
  The elimination rule can be derived using Rules \ruleref{pi-elim}, \ruleref{ctx-mor-comp} and \ruleref{sub-proj}. The computation rule can be derived using Rules \ruleref{pi-beta}, \ruleref{comp-cong} and \ruleref{sub-proj-cong}.
  }{}
\end{proof}

\begin{relatedwork}[\citet{coraglia_et_al:LIPIcs.TYPES.2023.3}] \label{rem:subty-pi}
  \TablesOrNot{
  In \cref{sec:subty-structural-rules}, we discussed the meaning of the structural rules of $\CTT$ from the subtyping point of view, where type morphisms are seen as witnesses of subtyping relations.
  In \cref{tab:subty-pi}, we discuss the meaning of the rules in \cref{fig:subty-for-pi} from this point of view and show how they relate to the rules of presented by Coraglia and Emmenegger in \cite{coraglia_et_al:LIPIcs.TYPES.2023.3}.
  }
  {
  Rule \ruleref{subt-pi} corresponds to the rules in Proposition 20 of Coraglia and Emmenegger's paper \cite{coraglia_et_al:LIPIcs.TYPES.2023.3}. 
  Under the subtyping point of view, this rule states that $A' \leq_f A$ and $B[\Gamma.f] \leq_g B'$ give $\PiT{A}{B} \leq_{\subtypepi{f}{g}} \PiT{A'}{B'}$.
  They do not however, explicitly present rules corresponding to Rules \ruleref{subt-pi-id} and \ruleref{subt-pi-comp}, which state that $\subtypepi{-}{-}$ preserves identity and composition of subtyping witnesses, respectively.
  }
  Note that Coraglia and Emmenegger write the rules in \cite[Proposition~20]{coraglia_et_al:LIPIcs.TYPES.2023.3} as if the fibrations involved were split, for simplicity.
  In our case, this equates to removing the $i^\mathsf{comp}$, $i^\mathsf{id}$ and $i^\mathsf{sub}$ terms from the rules. 
  
  \TablesOrNot{
  \begin{table}[h]
    \caption{Meaning of the rules regarding subtyping for $\Pi$-types when the judgement $\Gamma ~|~ A \vdash t : B$ expresses subtyping and relation to the rules of Coraglia and Emmenegger in \cite{coraglia_et_al:LIPIcs.TYPES.2023.3}.}
    \label{tab:subty-pi}
    \begin{tabular}{l p{6.5cm} p{3cm}}
    Rule in \cref{fig:subty-for-pi} & Meaning under Subtyping & Rule in \cite{coraglia_et_al:LIPIcs.TYPES.2023.3} \\
    \hline
    \ruleref{subt-pi} & $A' \leq_f A$ and $B[\Gamma.f] \leq_g B'$ give $\PiT{A}{B} \leq_{\subtypepi{f}{g}} \PiT{A'}{B'}$. & Rules in \cite[Proposition~20]{coraglia_et_al:LIPIcs.TYPES.2023.3} \\
    \ruleref{subt-pi-id} & $\subtypepi{-}{-}$ preserves identity witnesses. & - \\
    \ruleref{subt-pi-comp} & $\subtypepi{-}{-}$ preserves composition of witnesses. & - \\
    \end{tabular}
  \end{table}
  }
  {}
\end{relatedwork}

\subsection{Functorial $\Sigma$-types} \label{sec:sigma-for-terms}
In this section, we define semantic structure for $\Sigma$-types in non-full comprehension categories. 
We then discuss the necessary functoriality conditions that allow us to use type morphisms to interpret subtyping. 
We extend $\CTT$ with functorial $\Sigma$-types and prove soundness by giving an interpretation of the rules in any comprehension category with functorial $\Sigma$-types.
We also discuss how $\CTT$ with functorial $\Sigma$-types supports subtyping. 



\begin{definition} \label{def:sigma-non-full}
    Let $(\CC, \TT, p, \chi)$ be a comprehension category. Given $\Gamma \in \CC$, $A \in \TT_\Gamma$ and $B \in \TT_{\Gamma.A}$, \emphasize{a (strong) dependent sum} for $\Gamma, A$ and $B$ consists of: 
    \begin{enumerate}
      \item an object $\SigmaT{A}{B} \in \TT_\Gamma$;
      \item an isomorphism $\pairmor{A}{B} : \Gamma.A.B \to \Gamma.\SigmaT{A}{B}$ in $\CC$ with inverse $\projmor{A}{B}$ making the following diagram commute. 
      \[\begin{tikzcd}
      {\Gamma.A.B} && {\Gamma.\SigmaT{A}{B}} \\
      & \Gamma
      \arrow["{\pairmor{A}{B}}", from=1-1, to=1-3]
      \arrow["{\pi_A \circ \pi_B}"', from=1-1, to=2-2]
      \arrow["{\pi_{\SigmaT{A}{B}}}", from=1-3, to=2-2]
    \end{tikzcd}\]
    \end{enumerate}  
\end{definition}

Just as in the case of $\Pi$-types, \cref{def:sigma-non-full} is closely connected to the syntactic representation of $\Sigma$-types. 
The object $\SigmaT{A}{B}$ in $\TT_\Gamma$ corresponds to the dependent sum of types $A$ and $B$ in context $\Gamma$. 
The morphism $\pairmor{A}{B} : \Gamma.A.B \to \Gamma.\SigmaT{A}{B}$ in $\CC$ corresponds to pairing of a term of type $A$ with a term of type $B$ that depends on $A$. 
The morphism $\projmor{A}{B} : \Gamma.\SigmaT{A}{B} \to \Gamma.A.B$ corresponds to the second projection and $\pi_B \circ \projmor{A}{B}$ gives the first projection. 
$\projmor{A}{B}$ being an inverse of $\pairmor{A}{B}$ gives the $\beta$- and $\eta$-rules.

\begin{relatedwork}[\citet{jacobs93}] \label{rel:sigma-jacobs}
  Jacobs defines dependent sums in a (full) comprehension category as left adjoints to certain reindexing functors --- the weakening functors of the form $\pi_A^*$ \cite[Definition~3.10~(i)]{jacobs93}.
  In a full comprehension category, this definition is equivalent to \cref{def:sigma-non-full}.

  As explained in \cref{rel:pi-jacobs}, we do not assume fullness; hence, we make a distinction between the structure added to $\CC$ and the structure added to $\TT$.
  In particular, we are interested in having the structure of $\mathsf{pair}$ and $\mathsf{proj}$ on $\CC$ and the structure related to subtyping on $\TT$.
  We use \cref{def:sigma-non-full}, where the structure of $\mathsf{pair}$ and $\mathsf{proj}$ is on the category $\CC$.
  In Jacobs' definition this structure is on the category $\TT$.
\end{relatedwork}

\begin{relatedwork}[\citet{lumsdaine2015}] \label{rel:sigma-lw}
  Lumsdaine and Warren define $\Sigma$-types with an induction rule \cite[Definition~3.4.4.1]{lumsdaine2015}.
  In contrast, we describe $\Sigma$-types via projections, as this description is simpler.
\end{relatedwork}

To be able to use type morphisms to interpret subtyping, we need to add certain functoriality conditions which formalize the intuition that $\Sigma$-types act covariantly on both arguments in the context of subtyping.
In particular, given subtyping relations $A \leq_f A'$ and $B \leq_g B'[\Gamma.f]$ we have $\SigmaT{A}{B} \leq_{\subtypesigma{f}{g}} \SigmaT{A'}{B'}$, since $\Sigma$ acts covariantly on both the first and the second arguments. The coercion function for $\SigmaT{A}{B} \leq_{\subtypesigma{f}{g}} \SigmaT{A'}{B'}$ takes a dependent pair $(a,b) : \SigmaT{A}{B}$ to the coerced dependent pair $(a', b') : \PiT{A'}{B'}$ as follows. The term $a' : A'$ is obtained by coercing $a$ to $A'$ using $f$, and the term $b' : B'$ is obtained by coercing $b$ to $B'[\Gamma.f]$ using $g$.

Now we define what it means for a comprehension category to have functorial $\Sigma$-types. 
For this, we add the structure defined in \cref{def:sigma-non-full} to a comprehension category, postulate a Beck-Chevalley condition, i.e. that this structure is preserved under substitution, and postulate suitable functoriality conditions.
\begin{definition} \label{def:comp-cat-with-sigma}
    A comprehension category $(\CC, \TT, p, \chi)$ \emphasize{has functorial $\Sigma$-types} if it is equipped with a function giving $\SigmaT{A}{B}$, $\pairmor{A}{B}$ and $\projmor{A}{B}$ for each suitable $\Gamma, A, B$ such that:
    \begin{enumerate}
        \item \label{item:sigma-bc} for each $s : \Gamma \to \Delta$ in $\CC$ we have $i_{\SigmaT{A}{B}, s} : \SigmaT{A[s]}{B[s.A]} \iso \SigmaT{A}{B}[s]$ in $\TT_\Gamma$;
        \item \label{item:sigma-bc-pair} for each $A \in \TT_\Delta$, $B \in \TT_{\Delta.A}$ the following diagram commutes
        \[\begin{tikzcd}
          [column sep = large]
        	{\Gamma.\SigmaT{A[s]}{B[s.A]}} & {\Delta.\SigmaT{A}{B}} \\
        	{\Gamma.A[s].B[s.A]} & {\Delta.A.B}
        	\arrow["{s.\Sigma(s_A,s_B)}", from=1-1, to=1-2]
        	\arrow["{\pairmor{A[s]}{B[s.A]}}", from=2-1, to=1-1]
        	\arrow["{\pairmor{A}{B}}"', from=2-2, to=1-2]
        	\arrow["{s.A.B}"', from=2-1, to=2-2]
        \end{tikzcd}\]
        where $s.\Sigma(s_A,s_B) \coloneqq s.\SigmaT{A}{B} \circ \Gamma.i_{\SigmaT{A}{B}, s}$;
    \item \label{item:sigma-subt} the comprehension category is equipped with a function giving for each  $f : A \to A'$ in $\TT_{\Gamma}$ and $g : B \to B';[\Gamma.f]$ in $\TT_{\Gamma.A}$, a morphism 
    $ \subtypesigma{f}{g} : \SigmaT{A}{B} \to \SigmaT{A'}{B'} $
    in $\TT_\Gamma$;
    \item \label{item:sigma-subt-chi} $\Gamma.\subtypesigma{f}{g}$ is the following composite,
    \[\begin{tikzcd}[column sep=large]
      {\Gamma.{\SigmaT{A}{B}}} & {\Gamma.A.B} & {\Gamma.A.B'[\Gamma.f]} & {\Gamma.A'.B'} & {\Gamma.\SigmaT{A'}{B'}}
      \arrow["{\projmor{A}{B}}", from=1-1, to=1-2]
      \arrow["{\Gamma.A.g}", from=1-2, to=1-3]
      \arrow["{\Gamma.f.B'}", from=1-3, to=1-4]
      \arrow["{\pairmor{A'}{B'}}", from=1-4, to=1-5]
    \end{tikzcd}\]
    (see \cref{sec:constructions} for more detail);
    \item \label{item:sigma-subt-id} $\subtypesigma{-}{-}$ preserves identities, i.e. we have 
      $\subtypesigma{1_{A}}{i^{\mathsf{id}^{-1}}_{B}} = 1_{\SigmaT{A}{B}}$
    for each suitable $A$ and $B$, where $i^{\mathsf{id}}_{B} : B[1_{\Gamma.A}] \iso B$;
    \item \label{item:sigma-subt-comp} $\subtypesigma{-}{-}$ preserves composition, i.e. we have  
    $\subtypesigma{f' \circ f}{g'[\Gamma.f] \circ g} = \subtypesigma{f'}{g'} \circ \subtypesigma{f}{g}$
    for each suitable $f'$ and $g'$;
    \item \label{item:sigma-sub-congs}
    $i_{\SigmaT{A}{B},-}$ is functorial in that it preserves $i^{\mathsf{iso}}$ and $i^{\mathsf{comp}}$ (see \cref{sec:iso-coherences} for more detail).
  \end{enumerate}   
\end{definition}

\cref{item:sigma-bc,item:sigma-bc-pair} of \cref{def:comp-cat-with-sigma} state that $\Sigma$ and $\mathsf{pair}$ are preserved under substitution respectively.
Consequently, $\mathsf{proj}$ is also preserves under substitution.
\cref{item:sigma-subt,item:sigma-subt-comp,item:sigma-subt-id,item:sigma-subt-chi,item:sigma-sub-congs} give the functoriality conditions which formalize the variance of $\Sigma$-types on the arguments for subtyping.
\cref{item:sigma-subt-chi} expresses compatibility of the type morphism structure on the category $\TT$ with the type former structure on the category $\CC$.

The following proposition states that \cref{def:comp-cat-with-sigma} is compatible with Jacobs' definition of comprehension categories with sums.
\begin{proposition}[Relation to \citet{jacobs99}]
  \label{rel:sigma-subty-jacobs}
  Every \emphasize{full} comprehension category with sums in the sense of Jacobs \cite{jacobs93} has functorial $\Sigma$-types in the sense of \cref{def:comp-cat-with-sigma}.
\end{proposition}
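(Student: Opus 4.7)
The plan is to unpack Jacobs' definition of sums in a full comprehension category and show that, because the comprehension $\chi$ is fully faithful, the additional data and axioms of \cref{def:comp-cat-with-sigma} all come for free from the existing categorical structure.

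First I would recall that Jacobs' dependent sums are given by a functor $\Sigma_A \dashv \pi_A^*$ for each projection, satisfying the ``strong sums'' condition that the unit of the adjunction becomes an isomorphism after applying $\chi$, together with a Beck--Chevalley condition for substitution. From such an adjunction I would set $\SigmaT{A}{B} := \Sigma_A(B)$, and define $\pairmor{A}{B}$ and $\projmor{A}{B}$ as the isomorphism (and its inverse) on comprehensions witnessing strong sums. The commuting triangle over $\Gamma$ in \cref{def:sigma-non-full} then follows because $\chi$ preserves the domain/codomain structure by construction.

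Next, the isomorphism $i_{\SigmaT{A}{B},s} : \SigmaT{A[s]}{B[s.A]} \iso \SigmaT{A}{B}[s]$ in \cref{item:sigma-bc} is exactly the Beck--Chevalley isomorphism $\Sigma_{A[s]} \circ (s.A)^* \iso s^* \circ \Sigma_A$ that is part of Jacobs' data, and the compatibility of $\pairmor{}{}$ with substitution in \cref{item:sigma-bc-pair} follows by naturality of the units of these adjunctions once one applies $\chi$. The key place where fullness enters is \cref{item:sigma-subt}: given type morphisms $f : A \to A'$ and $g : B \to B'[\Gamma.f]$, the composite
\[
\Gamma.\SigmaT{A}{B} \xrightarrow{\projmor{A}{B}} \Gamma.A.B \xrightarrow{\Gamma.A.g} \Gamma.A.B'[\Gamma.f] \xrightarrow{\Gamma.f.B'} \Gamma.A'.B' \xrightarrow{\pairmor{A'}{B'}} \Gamma.\SigmaT{A'}{B'}
\]
is a morphism in $\CC$ over $\Gamma$, so by fullness of $\chi$ it comes from a unique morphism $\subtypesigma{f}{g} : \SigmaT{A}{B} \to \SigmaT{A'}{B'}$ in $\TT_\Gamma$. \cref{item:sigma-subt-chi} then holds by definition.

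Finally, functoriality (\cref{item:sigma-subt-id,item:sigma-subt-comp}) is checked by computing the composite above for identities and composites of $(f,g)$ pairs; using that $\pair$ and $\proj$ are mutually inverse and that context extension is functorial (rules \ruleref{ext-id} and \ruleref{ext-comp} semantically), the identity composite collapses to $1_{\Gamma.\SigmaT{A}{B}}$ and composition of two subtyping pairs yields the composite of the individual coercions; faithfulness of $\chi$ then transports these equalities back to $\TT_\Gamma$. The coherence of $i_{\SigmaT{A}{B},-}$ with $i^{\mathsf{iso}}$ and $i^{\mathsf{comp}}$ in \cref{item:sigma-sub-congs} follows from the standard pseudofunctoriality coherences of the Beck--Chevalley isomorphisms attached to a fibration with fibered left adjoints. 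The main obstacle is really just the bookkeeping for \cref{item:sigma-subt-chi,item:sigma-subt-comp}: one must write out the composite carefully and use the inverse pair $(\pairmor{}{}, \projmor{}{})$ to cancel intermediate steps, but no non-trivial mathematical input beyond fullness and the Beck--Chevalley condition is needed.
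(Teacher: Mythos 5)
The paper states this proposition without giving a proof (the only supporting material is \cref{cons:sigma-subtyping} in the appendix, which verifies that the coercion composite lies over $\Gamma$), so there is no official argument to compare against; judged on its own, your route is the natural one and matches the paper's intent exactly: take $\SigmaT{A}{B}$ from the left adjoint, $\pairmor{A}{B}$ and $\projmor{A}{B}$ from strongness, $i_{\SigmaT{A}{B},s}$ from Beck--Chevalley, use fullness of $\chi$ to lift the composite $\pairmor{A'}{B'} \circ \Gamma.f.B' \circ \Gamma.A.g \circ \projmor{A}{B}$ (which is over $\Gamma$, hence lifts to a \emph{vertical} morphism since $p = \cod \circ \chi$) to $\subtypesigma{f}{g}$, so that \cref{item:sigma-subt-chi} holds by construction, and use faithfulness to transfer the functor laws from $\CC$ back to $\TT_\Gamma$.

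The one step you gloss over is where the isomorphism $\pairmor{A}{B}$ actually comes from. You describe Jacobs' sums as ``satisfying the strong sums condition,'' but the definition the paper cites (\citet[Definition~3.10~(i)]{jacobs93}) is only the left adjoint $\Sigma_A \dashv \pi_A^*$ plus Beck--Chevalley; strongness --- that the canonical map $\Gamma.A.B \to \Gamma.\SigmaT{A}{B}$ induced by the unit is invertible --- is an extra condition, and it is precisely what \cref{def:sigma-non-full} demands. So either you must take strongness as part of the hypothesis, or you must argue that fullness forces sums to be strong (which is the content of the equivalence asserted, also without proof, in \cref{rel:sigma-jacobs}). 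This is really the only mathematically substantive point in the proposition, and your sketch treats it as definitional rather than as something to establish. Two smaller remarks: ``unique'' in your fullness step needs faithfulness, not fullness (harmless here, since the paper's notion of full means $\chi$ fully faithful); and for \cref{item:sigma-sub-congs} your appeal to pseudofunctoriality of the mates is fine, though in the full case it is cleaner to note that faithfulness reduces all these coherences to diagrams in $\CC$ that commute by the universal property of the relevant pullbacks.
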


\begin{relatedwork}[\citet{coraglia_et_al:LIPIcs.TYPES.2023.3}]
  \label{rel:sigma-subty-ce}
  \citet{coraglia_et_al:LIPIcs.TYPES.2023.3} define $\Sigma$-types for generalized categories with families, a structure equivalent to comprehension categories \cite{coraglia:2024}.
  This definition is similar to \cref{def:comp-cat-with-pi} regarding variance on the arguments. 
\end{relatedwork}

\begin{relatedwork}[\citet{gambino:2023}]
  \label{rel:sig-gambino}
  \citet{gambino:2023} also define $\Sigma$-types for comprehension categories,
  but in a different way than we do.
  As with $\Pi$-types, 
  we require $\Sigma$-types to be functorial, while they do not, and we phrase stability using isomorphisms instead of Cartesian morphisms.
  A more notable difference lies in the fact that \citet{gambino:2023} express $\Sigma$-types by giving the usual introduction and elimination rule (i.e., $\Sigma$-induction),
  whereas we express $\Sigma$-types equivalently via pairing and projections.
\end{relatedwork}

We now discuss examples of comprehension categories with functorial $\Sigma$-types, including those that arise from \AWFSs (see \cref{sec:awfs}).
Whereas for $\Pi$-types we need to require additional assumptions,
every \AWFS supports $\Sigma$-types~\cite[Proposition 4.3]{gambino:2023}.
This is because the right class of maps is closed under composition.
Since $\Sigma$-types are interpreted using composition, the functoriality condition of \cref{def:comp-cat-with-sigma} is also satisfied.
This means that in particular, the examples in \cref{sec:awfs} have functorial $\Sigma$-types. 
In what follows, we discuss functorial $\Sigma$-types in the these examples.

\begin{example}[\cref{exa:grpd,exa:interp-grpd} ctd.]
  \label{exa:sigma-grpd}
  As explained in \cref{exa:interp-grpd}, in this example, a type $A$ in context $\Gamma$ is interpreted as a split isofibration $\interp{A} : \interp{\Gamma.A} \to \interp{\Gamma}$. 
  This example has $\Sigma$-types.
  For each groupoid $\Gamma$ and split isofibrations $A : \Gamma.A \to \Gamma$ and $B : \Gamma.A.B \to \Gamma.A$, we have a split isofibration $\SigmaT{A}{B} : \Gamma.\SigmaT{A}{B} \to \Gamma$, where for each $x \in \Gamma$, the set of objects in the fiber over $x$ is $\Sigma ({x': A_x}), B_{x'}$, i.e. the elements in the fibers are dependent pairs.
  The functor $\mathsf{pair} : \Gamma.A.B \to \Gamma.\SigmaT{A}{B}$ is $\dom (\alpha)$, where $\alpha : (A \circ B) \to \SigmaT{A}{B}$ is the morphism of split fibrations over $\Gamma$ that maps $(x \in \Gamma, a \in A_x, b \in B_a)$ to $(x \in \Gamma, (a,b) \in \Sigma ({x' : A_x}),B_{x'})$.

  Now for the functoriality condition, let $f : A \to A'$ be a morphism of split fibrations in the fiber over $\Gamma$ that maps $(x \in \Gamma, a \in A_x)$ to $(x \in \Gamma, f(a) \in A'_x)$ and let $g : B \to B'[f]$ be a morphism of split fibrations over $\Gamma.A$ that maps $(x \in \Gamma, a \in A_x, b \in B_{a})$ to $(x \in \Gamma, f(a) \in A'_x, g(b) : B'_{f(a)})$.
  The morphism $\subtypesigma{f}{g} : \SigmaT{A}{B} \to \SigmaT{A'}{B'}$ maps $(x \in \Gamma, (h_1, h_2) \in \Sigma ({x' : A_x}), B_{x'})$ to $(x \in \Gamma, (f(h_1), g(h_2)) \in \Sigma ({x': A'_x}), B'_{x'})$. 
  It is easy to see that this assignation satisfies the properties from \cref{def:comp-cat-with-sigma}.
\end{example}

\begin{example}[\cref{exa:cha-pred,exa:interp-cha-pred} ctd.]
  \label{exa:sigma-cha-pred}
  As explained in \cref{exa:interp-cha-pred}, in this example, a type $A$ in context $\Gamma$ is interpreted as an $H$-valued predicate $\interp{A} : \interp{\Gamma} \to H$ and $\Gamma.A$ is interpreted as the comprehension.
  This example has $\Sigma$-types.
  For each set $\Gamma$ and predicates $A : \Gamma \to H$ and $B : \Gamma.A \to H$, where $\Gamma.A$ is the comprehension of A, the predicate $\SigmaT{A}{B} : \Gamma \to H$ is defined to be $B(x)$ if $A(x) = \top$ and $\bot$ otherwise.
  We have $\Gamma.A.B = \Gamma.\SigmaT{A}{B} = \{x \in \Gamma | \top \leq A(x) \wedge \top \leq B(x)\}$ and the function $\mathsf{pair} : \Gamma.A.B \to \Gamma.\SigmaT{A}{B}$ is identity.

  Now for the functoriality condition, recall that in this example the hom-sets in the fibers have at most one element. 
  Let $\Gamma$ be a set, and $A,A' : \Gamma \to H$ be two $H$-valued predicates. 
  We have a morphism in the fiber over $\Gamma$ of the form $A \to A'$. 
  This means that for all $x \in \Gamma$ we have $A(x) \leq A'(x)$. 
  Hence, $\Gamma.A \subseteq \Gamma.A'$. 
  We have a morphism in the fiber over $\Gamma.A$ of the form $B (x) \to B' (x)$, which means that for all $x \in \Gamma$ such that $\top \leq A(x)$ we have  $B(x) \leq B'(x)$.
  Now we verify that there is a morphism in the fiber over $\Gamma$ of the form $\SigmaT{A}{B} \to \SigmaT{A'}{B'}$. 
  For this we need that for each $x \in \Gamma$, $\SigmaT{A}{B}(x) \leq \SigmaT{A'}{B'}$. 
  If $\top \leq A(x)$, then we have $\top \leq A'(x)$.
  Hence, $\SigmaT{A}{B}(x) = B(x) \leq B'(x) = \SigmaT{A'}{B'}$.
  If $A(x) < \top$, then $\SigmaT{A}{B} (x) = \bot$ and we have $\SigmaT{A}{B}(x) \leq \SigmaT{A'}{B'}$ trivially.
\end{example}
Recall that $\Pi$-types in \cref{exa:pi-cha-pred} do not correspond to universal quantification. 
For the same reason, $\Sigma$-types in \cref{exa:sigma-cha-pred} do not correspond to existential quantification.

\begin{figure}[!htb]
  \centering
  \tcbset{colframe=black, colback=white, width=\textwidth, boxrule=0.1mm, arc=0mm, auto outer arc}
  \begin{tcolorbox}
  \centering
  \scriptsize
  \begin{tabular}{cc}
    $\binaryRule{\Gamma \vdash A \type}{\Gamma.A \vdash B \type}{\Gamma \vdash \SigmaT{A}{B} \type}{\scriptsize{sigma-form}}$ &
    $ \binaryRuleBinaryConcl{\Gamma \vdash A \type}{\Gamma.A \vdash B \type}{\Gamma.A.B \vdash \pairmor{A}{B} : \Gamma.\SigmaT{A}{B}}{\Gamma \vdash \pi_{\SigmaT{A}{B}} \circ \pairmor{A}{B} \equiv \pi_A \circ \pi_B : \Gamma}{\scriptsize{sigma-intro}} $ 
  \end{tabular}
  \begin{tabular}{cc}
    $ \binaryRule{\Gamma \vdash A \type}{\Gamma.A \vdash B \type}{\Gamma.\SigmaT{A}{B} \vdash \projmor{A}{B} : \Gamma.A.B}{\scriptsize{sigma-elim}}$ &
    $ \binaryRule{\Gamma \vdash A \type}{\Gamma.A \vdash B \type}{\Gamma.A.B \vdash \projmor{A}{B} \circ \pairmor{A}{B} \equiv 1_{\Gamma.A.B} : \Gamma.A.B}{\scriptsize{sigma-beta}} $
  \end{tabular}
  \[ 
  \binaryRule{\Gamma \vdash A \type}{\Gamma.A \vdash B \type}{\Gamma.\SigmaT{A}{B} \vdash \pairmor{A}{B} \circ \projmor{A}{B} \equiv 1_{\Gamma.\SigmaT{A}{B}} : \Gamma.\SigmaT{A}{B}}{\scriptsize{sigma-eta}} \]
  \[ \ternaryRule{\Delta \vdash A \type}{\Delta.A \vdash B \type}{\Gamma \vdash s : \Delta}{\Gamma ~|~ \SigmaT{A[s]}{B[s.A]} \vdashiso i_{\SigmaT{A}{B}, s} : \SigmaT{A}{B}[s]}{\scriptsize{sigma-sub}}
  \]
  \[ \ternaryRule{\Delta \vdash A \type}{\Delta.A \vdash B \type}{\Gamma \vdash s : \Delta}{\Gamma.A[s].B[s.A] \vdash s.\SigmaT{A}{B} \circ \Gamma.i_{\SigmaT{A}{B}, s} \circ \pairmor{A[s]}{B[s.A]} \equiv \pairmor{A}{B} \circ s.A.B : \Delta.\SigmaT{A}{B}}{\scriptsize{sub-pair}} \]
  \[
  \quadRuleBinaryConcl
  {\Gamma.A \vdash B \type}{\Gamma.A' \vdash B' \type}{\Gamma ~|~ A \vdash f : A'}{\Gamma.A ~|~ B \vdash g : B'[\Gamma.f]}
  {\Gamma ~|~ \SigmaT{A}{B} \vdash \subtypesigma{f}{g} : \SigmaT{A'}{B'}}
  {\Gamma.\SigmaT{A}{B} \vdash \Gamma.\subtypesigma{f}{g} \equiv \pairmor{A'}{B'} \circ (\Gamma.f).B' \circ \Gamma.A.g \circ \projmor{A}{B} : \Gamma.\SigmaT{A'}{B'}}
  {\scriptsize{subt-sigma}}
  \]
  \[
  \binaryRule
  {\Gamma \vdash A \type}{\Gamma.A \vdash B \type}
  {\Gamma ~|~ \SigmaT{A}{B} \vdash \subtypesigma{1_{A}}{i^{\mathsf{id}^{-1}}_{B}} \equiv 1_{\SigmaT{A}{B}} : \SigmaT{A}{B}}
  {\scriptsize{subt-sigma-id}}
  \]
  \[
  \senRule{\Gamma.A \vdash B \type}{\Gamma.A' \vdash B' \type}{\Gamma ~|~ A \vdash f : A'}{\Gamma.A ~|~ B \vdash g : B'[\Gamma.f]}{\Gamma ~|~ A' \vdash f' : A''}{\Gamma.A' ~|~ B' \vdash g' : B''[\Gamma.f']}
  {\Gamma ~|~ \SigmaT{A}{B} \vdash \subtypesigma{f' \circ f}{g'[\Gamma.f] \circ g} \equiv \subtypesigma{f'}{g'} \circ \subtypesigma{f}{g} : \SigmaT{A''}{B''}}
  {\scriptsize{subt-sigma-comp}}
  \]
  \[
  \binaryRule{\Gamma \vdash A \type}{\Gamma.A \vdash B \type}{\Gamma ~|~ \SigmaT{A[1_\Gamma]}{B[1.A]} \vdash \idiso{\SigmaT{A}{B}} \circ i_{\SigmaT{A}{B}, 1_\Gamma} \equiv \subtypesigma{\idiso{A}}{\idiso{B} \circ \subiso{B}{{1_\Gamma.A} \circ {\Gamma.\idiso{A}}}{1_{\Gamma.A}} \circ \compisoinv{B}{1_\Gamma.A}{\Gamma.\idiso{A}}} : \SigmaT{A}{B}}{\scriptsize{sigma-sub-id}}
  \]
  \[
  \quadRuleBinaryConcl{\quad \qquad \qquad \qquad \Theta \vdash A \type}{\Theta.A \vdash B \type}{\Gamma \vdash s' : \Delta}{\Delta \vdash s : \Theta \qquad \qquad \qquad \quad}{\Gamma ~|~ \SigmaT{A[s \circ s']}{B[(s \circ s').A]} \vdash \compiso{\SigmaT{A}{B}}{s}{s'} \circ i_{\SigmaT{A}{B},s \circ s'} \equiv (i_{\SigmaT{A}{B},s})[s'] \circ i_{\SigmaT{A[s']}{B[s'.A]}} \circ }
  {\subtypesigma{\compiso{A}{s}{s'}}{\compiso{B}{s.A}{s'.A[s]} \circ \subiso{B}{(s \circ s').A \circ \Gamma.\compiso{A}{s}{s'}}{s.A \circ s'.A[s]} \circ \compisoinv{B}{(s \circ s').A}{\Gamma.\compiso{A}{s}{s'}}} : {\SigmaT{A}{B}[s][s']}}{\scriptsize{sigma-sub-comp}}
  \]
\end{tcolorbox}
\caption{Rules for functorial $\Sigma$-types. Rules \protect\ruleref{sigma-sub}, \protect\ruleref{sub-pair},\protect\ruleref{sigma-sub-id} and \protect\ruleref{sigma-sub-comp} use the notation introduced in \cref{lemma:s.A}. For example, in Rule sigma-sub, $s.A$ is $(s \circ \pi_{A[s]}, \Gamma.(\compiso{A}{s}{\pi_{A[s]}} [\pi_{A[s]}]) \circ p_2(1_{\Gamma.A[s]}))$.}
\label{fig:sigma-for-terms}
\end{figure}

\begin{definition} \label{def:syntax-with-sigma-for-terms}
  We define the \emphasize{extension of $\CTT$ by functorial $\Sigma$-types} to consist of the rules in \cref{fig:sigma-for-terms}.
\end{definition}

\begin{theorem}[Soundness of Rules for $\Sigma$-types] \label{thm:soundness-sigma-for-terms}
  Any comprehension category with functorial $\Sigma$-types models the rules of $\CTT$ and the rules for functorial $\Sigma$-types in \cref{fig:sigma-for-terms}.
\end{theorem}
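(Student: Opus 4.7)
The plan is to extend the interpretation $\interp{-}$ from \cref{thm:soundness} to the new syntactic constructs for $\Sigma$-types, and then verify that each rule in \cref{fig:sigma-for-terms} is validated. Concretely, given a comprehension category $(\CC,\TT,p,\chi)$ with functorial $\Sigma$-types, I would set $\interp{\SigmaT{A}{B}} \coloneqq \SigmaT{\interp{A}}{\interp{B}}$, $\interp{\pairmor{A}{B}} \coloneqq \pairmor{\interp{A}}{\interp{B}}$, $\interp{\projmor{A}{B}} \coloneqq \projmor{\interp{A}}{\interp{B}}$, $\interp{i_{\SigmaT{A}{B},s}} \coloneqq i_{\SigmaT{\interp{A}}{\interp{B}},\interp{s}}$, and $\interp{\subtypesigma{f}{g}} \coloneqq \subtypesigma{\interp{f}}{\interp{g}}$. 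These assignments are well-typed thanks to the relevant clauses of \cref{def:sigma-non-full} and \cref{def:comp-cat-with-sigma}.

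Next I would discharge each of the rules in turn. The formation rule \ruleref{sigma-form} and the existence of the pair and projection morphisms \ruleref{sigma-intro}, \ruleref{sigma-elim} are immediate from \cref{def:sigma-non-full}(1) and (2), together with the commutativity of the triangle involving $\pi_A \circ \pi_B$ and $\pi_{\SigmaT{A}{B}}$. The computation rules \ruleref{sigma-beta} and \ruleref{sigma-eta} reduce directly to the fact that $\pairmor{A}{B}$ and $\projmor{A}{B}$ are mutually inverse in $\CC$. The substitution rules \ruleref{sigma-sub} and \ruleref{sub-pair} are precisely \cref{item:sigma-bc} and \cref{item:sigma-bc-pair} of \cref{def:comp-cat-with-sigma}. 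The subtyping rules \ruleref{subt-sigma}, \ruleref{subt-sigma-id}, and \ruleref{subt-sigma-comp} follow from \cref{item:sigma-subt}, \cref{item:sigma-subt-chi}, \cref{item:sigma-subt-id}, and \cref{item:sigma-subt-comp} of \cref{def:comp-cat-with-sigma}, once one unfolds the syntactic composite on the right-hand side of \ruleref{subt-sigma} and matches it term-by-term with the semantic composite in \cref{item:sigma-subt-chi}.

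The main obstacle will be the two coherence rules \ruleref{sigma-sub-id} and \ruleref{sigma-sub-comp}, which assert non-trivial equalities of type morphisms built from $i_{\SigmaT{A}{B},-}$, $i^\mathsf{comp}$, $i^\mathsf{id}$, $i^\mathsf{sub}$, and $\subtypesigma{-}{-}$. Here I would invoke \cref{item:sigma-sub-congs}: the functoriality of $i_{\SigmaT{A}{B},-}$ with respect to $i^\mathsf{comp}$ and $i^\mathsf{id}$, spelled out in \cref{sec:iso-coherences}, is exactly what these rules encode once the interpretations of both sides are expanded. The verification amounts to checking that the expressions of Figure \ref{fig:sigma-for-terms} are the syntactic images of these semantic coherence squares, which is a careful but routine translation. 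Since the substitution isomorphisms $\compiso{-}{-}{-}$, $\idiso{-}$, and $\subiso{-}{-}{-}$ have already been shown to be interpreted soundly in \cref{thm:soundness}, their composites with $\subtypesigma{-}{-}$ are automatically interpreted correctly by functoriality of composition in $\TT_{\interp{\Gamma}}$.

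Finally, the congruence rules are preserved because the interpretation sends equal syntactic entities to equal semantic ones at every stage, so they transfer without additional work. This completes the soundness argument.
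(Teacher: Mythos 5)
Your proposal matches the paper's own proof (given in \cref{sec:interp-sigma}) essentially verbatim: the interpretation assigns $\SigmaT{\interp{A}}{\interp{B}}$, $\pairmor{\interp{A}}{\interp{B}}$, $\projmor{\interp{A}}{\interp{B}}$, $i_{\SigmaT{\interp{A}}{\interp{B}},\interp{s}}$ and $\subtypesigma{\interp{f}}{\interp{g}}$ to the new constructs, validates \ruleref{sigma-beta}/\ruleref{sigma-eta} by the mutual inverses, \ruleref{sigma-sub}/\ruleref{sub-pair} by \cref{item:sigma-bc,item:sigma-bc-pair}, the subtyping rules by \cref{item:sigma-subt,item:sigma-subt-chi,item:sigma-subt-id,item:sigma-subt-comp}, and the coherence rules \ruleref{sigma-sub-id}/\ruleref{sigma-sub-comp} by the diagrams of \cref{sec:iso-coherences}. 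This is exactly the paper's argument, so no further comparison is needed.
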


In the following proposition, we see that we can derive first and second projection rules similar to those of Martin-Löf type theory from the rules in \cref{fig:sigma-for-terms}.
\begin{proposition} \label{prop:sigma-for-terms-usual-rules}
  From the rules of $\CTT$ and the rules in \cref{fig:sigma-for-terms}, we can derive the following rules.
  \[
  \unaryRuleBinaryConcl
  {\qquad \Gamma \vdash p : \SigmaT{A}{B} \qquad}
  {\Gamma \vdash \mathsf{proj}_1 p : A}
  {\Gamma \vdash \mathsf{proj}_2 p : B[ \mathsf{proj}_1 p]}
  {}
  \]
\end{proposition}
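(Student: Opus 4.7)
The plan is to define
\[
\mathsf{proj}_1 p \;\coloneqq\; \pi_B \circ \projmor{A}{B} \circ p \qquad\text{and}\qquad \mathsf{proj}_2 p \;\coloneqq\; p_2(\projmor{A}{B} \circ p),
\]
and then verify that each is an MLTT term of the claimed type in the sense of \cref{notation:terms}. Unfolding that notation, the hypothesis $\Gamma \vdash p : \SigmaT{A}{B}$ consists of a context morphism $\Gamma \vdash p : \Gamma.\SigmaT{A}{B}$ together with the equation $\pi_{\SigmaT{A}{B}} \circ p \equiv 1_\Gamma : \Gamma$, and we must produce analogous data at types $A$ and $B[\mathsf{proj}_1 p]$ respectively.

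For the first projection, I would first apply Rule \ruleref{sigma-elim} to obtain the context morphism $\Gamma.\SigmaT{A}{B} \vdash \projmor{A}{B} : \Gamma.A.B$. Then, using \ruleref{ctx-mor-comp} and Rule \ruleref{ext-proj} twice, the composition $\pi_B \circ \projmor{A}{B} \circ p$ is a well-formed context morphism $\Gamma \to \Gamma.A$. To see that it is a section of $\pi_A$, I would combine the equation from Rule \ruleref{sigma-intro}, namely $\pi_{\SigmaT{A}{B}} \circ \pairmor{A}{B} \equiv \pi_A \circ \pi_B$, with Rule \ruleref{sigma-eta}, namely $\pairmor{A}{B} \circ \projmor{A}{B} \equiv 1_{\Gamma.\SigmaT{A}{B}}$, together with associativity (\ruleref{ctx-comp-assoc}) and the hypothesis on $p$; chaining these equalities yields $\pi_A \circ \pi_B \circ \projmor{A}{B} \circ p \equiv \pi_{\SigmaT{A}{B}} \circ p \equiv 1_\Gamma$, as required.

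For the second projection, I would apply Rule \ruleref{sub-proj} to the context morphism $\projmor{A}{B} \circ p : \Gamma \to \Gamma.A.B$ (viewing $\Gamma.A$ as the base and $B$ as the type): this directly produces $\Gamma \vdash p_2(\projmor{A}{B} \circ p) : B[\pi_B \circ \projmor{A}{B} \circ p]$. Since $\pi_B \circ \projmor{A}{B} \circ p$ is definitionally the context morphism $\mathsf{proj}_1 p$ constructed above, this is the desired judgement $\Gamma \vdash \mathsf{proj}_2 p : B[\mathsf{proj}_1 p]$.

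There is no real obstacle to this derivation; the only point requiring care is making sure that the $\beta$/$\eta$ part of \cref{notation:terms} for $p$ (the equation $\pi_{\SigmaT{A}{B}} \circ p \equiv 1_\Gamma$) is used exactly once, in the verification that $\mathsf{proj}_1 p$ is a section, while the typing part is enough for $\mathsf{proj}_2 p$ to go through via \ruleref{sub-proj}. All other steps are applications of \ruleref{ctx-mor-comp}, \ruleref{ext-proj}, Rules \ruleref{sigma-intro}, \ruleref{sigma-eta}, \ruleref{sigma-elim}, and associativity.
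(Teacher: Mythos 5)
Your proposal is correct and takes essentially the same route as the paper: it defines $\mathsf{proj}_1 p \coloneqq \pi_B \circ \projmor{A}{B} \circ p$ and $\mathsf{proj}_2 p \coloneqq p_2(\projmor{A}{B} \circ p)$ and derives the typing via \ruleref{sigma-elim}, \ruleref{ctx-mor-comp} and \ruleref{sub-proj}. Your explicit verification that $\mathsf{proj}_1 p$ is a section of $\pi_A$ (chaining \ruleref{sigma-intro}, \ruleref{sigma-eta}, associativity and the section equation for $p$) only fills in detail the paper leaves implicit.
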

\begin{proof}
  The context morphism $\mathsf{proj}_1 p$ is $\pi_B \circ \projmor{A}{B} \circ p$ and $\mathsf{proj}_2 p$ is $p_2(\projmor{A}{B} \circ p)$. 
  \LongOrShortPoPl{The rules can be derived using Rules \ruleref{sigma-elim}, \ruleref{ctx-mor-comp} and \ruleref{sub-proj}.}
  {}
\end{proof}

\begin{relatedwork}[\citet{coraglia_et_al:LIPIcs.TYPES.2023.3}]
  \label{rem:subty-sigma}
  \TablesOrNot{
  In \cref{tab:subty-sigma}, we discuss the meaning of the rules in \cref{fig:subty-for-sigma} from the subtyping point of view, where type morphisms are seen as witnesses of subtyping relations and show how they relate to the rules discussed by Coraglia and Emmenegger in \cite{coraglia_et_al:LIPIcs.TYPES.2023.3}. 
  }
  {
  Rule \ruleref{subt-sigma} corresponds to the rules in Proposition 21 of Coraglia and Emmenegger's paper \cite{coraglia_et_al:LIPIcs.TYPES.2023.3}.
  Under the subtyping point of view, this rule states that $A \leq_f A'$ and $B \leq_g B'[\Gamma.f]$ give $\SigmaT{A}{B} \leq_{\subtypesigma{f}{g}} \SigmaT{A'}{B'}$.
  They do not however, explicitly present rules corresponding to Rules \ruleref{subt-sigma-id} and \ruleref{subt-sigma-comp}, which state that $\subtypesigma{-}{-}$ preserves identity and composition subtyping witnesses, respectively.
  }
  Note that Coraglia and Emmenegger write the rules in \cite[Proposition~21]{coraglia_et_al:LIPIcs.TYPES.2023.3} as if the fibrations involved were split, for simplicity.
  In our case, this equates to removing the $i^\mathsf{comp}$ and $i^\mathsf{id}$ from the rules\LongOrShort{, for example in the definition of the term $(\Gamma.f).B'$ in Rule \ruleref{subt-sigma}.}{.}
  \TablesOrNot{
  \begin{table}[h]
    \caption{Meaning of the rules regarding subtyping for $\Sigma$-types when the judgement $\Gamma ~|~ A \vdash t : B$ expresses subtyping and relation to the rules of Coraglia and Emmenegger in \cite{coraglia_et_al:LIPIcs.TYPES.2023.3}.}
    \label{tab:subty-sigma}
    \begin{tabular}{p{3cm} p{6.5cm} p{2.5cm}}
    Rule in \cref{fig:subty-for-sigma} & Meaning under Subtyping & Rule in \cite{coraglia_et_al:LIPIcs.TYPES.2023.3} \\
    \hline
    \ruleref{subt-sigma} & $A \leq_f A'$ and $B \leq_g B'[\Gamma.f]$ give $\SigmaT{A}{B} \leq_{\subtypesigma{f}{g}} \SigmaT{A'}{B'}$. & Rules in \cite[Proposition~21]{coraglia_et_al:LIPIcs.TYPES.2023.3} \\
    \ruleref{subt-sigma-id} & $\subtypesigma{-}{-}$ preserves identity witnesses. & - \\
    \ruleref{subt-sigma-comp} & $\subtypesigma{-}{-}$ preserves composition of witnesses. & - \\
    \end{tabular}
  \end{table}
  }{}
\end{relatedwork}

\subsection{Functorial $\id$-types} \label{sec:id-for-terms}
In this section, we define semantic structure for $\id$-types in non-full comprehension categories. 
We then discuss the necessary functoriality conditions that allow us to use type morphisms to interpret subtyping. 
We extend $\CTT$ with functorial $\id$-types and prove soundness by giving an interpretation of the rules in any comprehension category with functorial $\id$-types.
We also discuss how $\CTT$ with functorial $\id$-types supports subtyping. 

\begin{definition}[{\cite[Definition~2.3.1]{lumsdaine2015}}] \label{def:id-non-full}
  Let $(\CC, \TT, p, \chi)$ be a comprehension category. Given $\Gamma \in \CC$, $A \in \TT_\Gamma$, \emphasize{an identity type} for $\Gamma$ and $A$ consists of: 
  \begin{enumerate}
      \item an object $\id_A \in \TT_{\Gamma.A.A[\pi_A]}$;
      \item a morphism $\reflmor{A} : \Gamma.A \to \Gamma.A.A[\pi_A].\id_A$ in $\CC$ making the following diagram commute,
      \[\begin{tikzcd}
        {\Gamma.A} && {\Gamma.A.A[\pi_A].\id_A} \\
        & {\Gamma.A.A[\pi_A]}
        \arrow["{\reflmor{A}}", from=1-1, to=1-3]
        \arrow["{\Delta_A}"', from=1-1, to=2-2]
        \arrow["{\pi_A}", from=1-3, to=2-2]
      \end{tikzcd}\]
      where $\Delta_A$ is the diagonal morphism of the form $\Gamma.A \to \Gamma.A.A[\pi_A]$; 
      \item for each $C \in \TT_{\Gamma.A.A[\pi_A].\id_A}$ and $d : \Gamma.A \to \Gamma.A.A[\pi_A].\id_A.C$ making the outer square commute, a section $\jmor{A}{C}{d}: \Gamma.A.A[\pi_A].\id_A \to \Gamma.A.A[\pi_A].\id_A.C$ of $\pi_C$ making the following two triangles commute.
      \[\begin{tikzcd}
        {\Gamma.A} & {\Gamma.A.A[\pi_A].\id_A.C} \\
        {\Gamma.A.A[\pi_A].\id_A} & {\Gamma.A.A[\pi_A].\id_A}
        \arrow["d", from=1-1, to=1-2]
        \arrow["{\reflmor{A}}"', from=1-1, to=2-1]
        \arrow["{\pi_C}", from=1-2, to=2-2]
        \arrow["{\jmor{A}{C}{d}}"{description}, from=2-1, to=1-2]
        \arrow[equals, from=2-1, to=2-2]
      \end{tikzcd}\]
  \end{enumerate}   
\end{definition}

Similar to the case of $\Pi$- and $\Sigma$-types, we tie \cref{def:id-non-full} to identity types in syntax. The object $\id_A$ in $\TT_\Gamma$ corresponds the identity type for terms of type $A$ in context $\Gamma.A.A$. 
The morphism $\reflmor{} : \Gamma.A \to \Gamma.A.A[\pi_A].\id_A$ in $\CC$ gives the reflexivity proof. 
The morphism $\jmor{A}{B} : \Gamma.A.A[\pi_A].\id_A \to \Gamma.A.A[\pi_A].\id_A.C$ in $\CC$ gives the elimination rule for identity types. 
This rule states that to construct a term of type $C$ in the context $\Gamma.A.A.\id_A$, it suffices to provide a term of type $C$ when the second and third variables are replaced by the first one and the reflexivity proof, in the context $\Gamma.A$.
 
\begin{relatedwork}[\citet{jacobs99}] \label{rel:id-jacobs}
  Jacobs defines identity types in a (full) comprehension category as left adjoints to certain reindexing functors --- the contraction functors of the form $\Delta_A^*$, where $\Delta_A : \Gamma.A \to \Gamma.A.A$ is a diagonal morphism \cite[Definition~10.5.1]{jacobs99}. This definition gives an extensional identity type, whereas \cref{def:id-non-full} gives an intensional identity type.
\end{relatedwork}

To be able to use type morphisms to interpret subtyping, we need to add certain functoriality conditions which formalize the intuition of how $\id$-types interact with subtyping.
In particular, given a subtyping relation $A \leq_t B$ we have $\id_A \leq_{\subtypeid{t}} \id_B[\Gamma.t.t]$, since $\id$-types preserve subtyping. 
In the semantics, this means that for each morphism $t : A \to B$ in $\TT_\Gamma$, we have a morphism $t' : \id_A \to \id_B[\Gamma.t.t]$ in $\TT_{\Gamma.A.A[\pi_A]}$, which is equivalent to having a morphism $\subtypeid{t} : \id_A \to \id_B$ in $\TT$ with $p (\subtypeid{t}) = \Gamma.t.t$. 

Now we define what it means for a comprehension category to have functorial $\id$-types. 
For this, we add the structure defined in \cref{def:id-non-full} to a comprehension category, postulate a Beck-Chevalley condition, i.e. that this structure is preserved under substitution, and postulate suitable functoriality conditions.

\begin{definition} \label{def:comp-cat-with-id}
  A comprehension category $(\CC, \TT, p, \chi)$ \emphasize{has functorial identity types} if it is equipped with a function giving $\id_A$, $\reflmor{A}$ and $\jmor{A}{C}{d}$ for each suitable $\Gamma, A, C, d$ such that:
  \begin{enumerate}
    \item \label{item:id-bc} for each $s : \Gamma \to \Delta$ in $\CC$, we have an isomorphism $i_{\id_A , s} : \id_{A[s]} \iso \id_A [s.A.A[\pi_A] \circ \chi_0 i^{\mathsf{comp}}_A]$ in $\TT_{\Gamma.A[s].A[s][\pi_{A[s]}]}$, where $i^{\mathsf{comp}}_A : A[s][\pi_{A[s]}] \iso A[\pi_A][s.A]$; 
    \item \label{item:id-bc-rj} for each $A \in \TT_\Delta$, $C \in \TT_{\Delta.A.A[\pi_A]}$ and $d : \Delta.A.A[\pi_A] \to \Delta.A.A[\pi_A].\id_A.C$, the following diagram commutes;
    \[\begin{tikzcd}
    [row sep = large]
    {\Gamma.A[s].A[s][\pi_{A[s]}].\id_{A[s]}.C[(s.A.A[\pi_A] \circ \chi_0 i^{\mathsf{comp}}_{A}).\id_A \ \circ \chi_0 i_{\id_A , s}]} & {\Delta.A.A[\pi_A].\id_A.C} \\
    {\Gamma.A[s].A[s][\pi_{A[s]}].\id_{A[s]}} & {\Delta.A.A[\pi_A].\id_A} \\
    {\Gamma.A[s].A[s][\pi_{A[s]}]} & {\Delta.A.A[\pi_A]}
    \arrow["({(s.A.A[\pi_A] \circ \chi_0 i^{\mathsf{comp}}_{A}).\id_A \circ \chi_0 i_{\id_A , s}}).C"{yshift=1ex},from=1-1, to=1-2]
    \arrow["{\jmor{A[s]}{C[(s.A.A[\pi_A] \circ \chi_0 i^{\mathsf{comp}}_{A}).\id_A \ \circ \chi_0 i_{\id_A , s}]}{d[s.A.A[\pi_A] \circ \chi_0 i^{\mathsf{comp}}_{A}]}}"{description}, from=2-1, to=1-1]
    \arrow["{(s.A.A[\pi_A] \circ \chi_0 i^{\mathsf{comp}}_{A}).\id_A \circ \chi_0 i_{\id_A , s}}"', from=2-1, to=2-2]
    \arrow["{\jmor{A}{C}{d}}"{description}, from=2-2, to=1-2]
    \arrow["{\reflmor{A[s]}}"{description}, from=3-1, to=2-1]
    \arrow["{s.A.A[\pi_A] \circ \chi_0 i^{\mathsf{comp}}_{A}}"', from=3-1, to=3-2]
    \arrow["{\reflmor{A}}"{description}, from=3-2, to=2-2]
    \end{tikzcd}\]
  \item \label{item:id-subt} the comprehension category is equipped with a function giving for each $t : A \to B$ in $\TT_{\Gamma}$, a morphism 
  $ \subtypeid{t}  : \id_A \to \id_B \quad \text{with} \quad p (\subtypeid{t}) =  \Gamma.t.t, $;
    \item \label{item:id-subt-id} $\subtypeid{(-)}$ preserves identities, i.e. $\subtypeid{1_A}  = 1_{\id_A}$ for each $A \in \TT_{\Gamma}$;
    \item \label{item:id-subt-comp} $\subtypeid{(-)}$ preserves composition, i.e. $\subtypeid{t' \circ t} = \subtypeid{t'}  \circ \subtypeid{t}$ for each suitable $t$ and $t'$;
    \item \label{item:id-subt-chi} the following diagrams commute for each $t : A \to B$ in $\TT_{\Gamma}$,
    \[
    \begin{tabular}{cc}
    \begin{tikzcd}
    [column sep = large]
    {\Gamma.A.A.\id_A} & {\Gamma.B.B.\id_B} \\
    {\Gamma.A} & {\Gamma.B}
    \arrow["{\chi_0 \subtypeid{t}}", from=1-1, to=1-2]
    \arrow["{\reflmor{A}}", from=2-1, to=1-1]
    \arrow["{\chi_0 t}"', from=2-1, to=2-2]
    \arrow["{\reflmor{B}}"', from=2-2, to=1-2]
    \end{tikzcd}
    &
    \begin{tikzcd}
    [column sep = large]
    {\Gamma.A.A.\id_A.C[\Gamma.t.t]} & {\Gamma.B.B.\id_B.C} \\
    {\Gamma.A.A.\id_A} & {\Gamma.B.B.\id_B}
    \arrow["{(\chi_0 \subtypeid{t}).C}", from=1-1, to=1-2]
    \arrow["{\jmor{A}{C[\Gamma.t.t]}{d[\Gamma.t.t]}}", from=2-1, to=1-1]
    \arrow["{\chi_0 \subtypeid{t}}"', from=2-1, to=2-2]
    \arrow["{\jmor{B}{C}{d}}"', from=2-2, to=1-2]
    \end{tikzcd}
    \end{tabular}
    \]
  where $\Gamma.t.t$ and $d[\Gamma.t.t]$ are given by the universal property of the following pullback square:
  \[\begin{tikzcd}
    {\Gamma.A.A[\pi_A]} & {\Gamma.A} \\
    {\Gamma.A} & {\Gamma.B.B[\pi_B]} & {\Gamma.B} \\
    & {\Gamma.B} & \Gamma
    \arrow["{\pi_A.A}", from=1-1, to=1-2]
    \arrow["{\pi_{A[\pi_A]}}"', from=1-1, to=2-1]
    \arrow["\Gamma.t.t"{description}, dashed, from=1-1, to=2-2]
    \arrow["{\chi_0 t}", from=1-2, to=2-3]
    \arrow["{\chi_0 t}"', from=2-1, to=3-2]
    \arrow["{\pi_B.B}"', from=2-2, to=2-3]
    \arrow["{\pi_{B[\pi_B]}}"', from=2-2, to=3-2]
    \arrow["\lrcorner"{anchor=center, pos=0.125}, draw=none, from=2-2, to=3-3]
    \arrow["{\pi_B}", from=2-3, to=3-3]
    \arrow["{\pi_B}"', from=3-2, to=3-3]
  \end{tikzcd}
  \begin{tikzcd}
    {\Gamma.A} & {\Gamma.B} \\
    & {\Gamma.A.A.\id_A.C[\Gamma.t.t]} & {\Gamma.B.B.\id_B.C} \\
    & {\Gamma.A.A.\id_A} & {\Gamma.B.B.\id_B}
    \arrow["{\chi_0 t}", from=1-1, to=1-2]
    \arrow["{d[\Gamma.t.t]}"{description}, dashed, from=1-1, to=2-2]
    \arrow["{\reflmor{A}}"', curve={height=18pt}, from=1-1, to=3-2]
    \arrow["d", from=1-2, to=2-3]
    \arrow["{\Gamma.t.t.C}"', from=2-2, to=2-3]
    \arrow["{\pi_{C[\Gamma.t.t]}}"', from=2-2, to=3-2]
    \arrow["\lrcorner"{anchor=center, pos=0.125}, draw=none, from=2-2, to=3-3]
    \arrow["{\pi_C}", from=2-3, to=3-3]
    \arrow["{\Gamma.t.t}"', from=3-2, to=3-3]
  \end{tikzcd}\]
  \item \label{item:id-sub-congs}
    $i_{\id_A, -}$ is functorial in that it preserves $i^{\mathsf{iso}}$ and $i^{\mathsf{comp}}$ (see \cref{sec:iso-coherences} for more detail).
  \end{enumerate}
\end{definition}

\cref{item:id-bc,item:id-bc-rj} of \cref{def:comp-cat-with-id} state that $\id$, $r$ and $j$ are preserved under substitution.
\cref{item:id-subt,item:id-subt-id,item:id-subt-comp,item:id-subt-chi,item:id-sub-congs} give the functoriality conditions for expressing the interaction of $\id$-types with subtyping.
\cref{item:id-subt-chi} expresses compatibility of the type morphism structure on the category $\TT$ with the type former structure on the category $\CC$.

\begin{relatedwork}[\citet{gambino:2023}]
  \label{rel:id-gambino}
  \citet{gambino:2023} also define identity types for comprehension categories.
  Their introduction and elimination rules are the same as ours.
  The only difference is that, just like for $\Pi$- and $\Sigma$-types,
  we require identity types to be functorial and we phrase stability using isomorphisms instead of Cartesian morphisms.
\end{relatedwork}

\begin{relatedwork}[\citet{coraglia2024contextjudgementdeduction}]
  \label{rel:id-coraglia}
  \citet{coraglia2024contextjudgementdeduction} study generalized categories with families, which are equivalent to (non-full) comprehension categories.
  In that setting, they define \emph{extensional} identity types.
  Our identity types of \cref{def:comp-cat-with-id} are \emph{intensional}.
\end{relatedwork}

We now discuss examples of comprehension categories with functorial $\id$-types, including those that arise from \AWFSs (see \cref{sec:awfs}).
To interpret identity types in a comprehension category induced by an \AWFSs,
we need to assume additional structure,
namely a \textbf{stable functorial choice of path objects}~\cite[Definition 4.8]{gambino:2023}.
Path objects give us a factorization of the diagonal morphism,
and hence, we obtain an interpretation of the identity type~\cite[Propositon 4.9]{gambino:2023}.
Since the choice of the path object is required to be functorial, comprehension categories induced by weak factorization systems support the functoriality condition in \cref{def:comp-cat-with-id}.
These conditions are satisfied by the examples in \cref{sec:awfs}.
In what follows, we discuss functorial $\id$-types in the these examples.

\begin{example}[\cref{exa:grpd,exa:interp-grpd} ctd.]
  \label{exa:id-grpd}
  In the \AWFS of groupoids, a type $A$ in context $\Gamma$ is interpreted as a split isofibration $\interp{A} : \interp{\Gamma.A} \to \interp{\Gamma}$.
  This example has $\id$-types.
  For each groupoid $\Gamma$ and split isofibration $A : \Gamma.A \to \Gamma$, we have a split isofibration $\id_A : \Gamma.A.A[\pi_A].\id_A \to \Gamma.A.A[\pi_A]$, where for each $x \in \Gamma$ and $a,b \in A_x$, the set of objects in the fiber over $(x,a,b)$ is $\hom(a,b)$, i.e. isomorphisms from $a$ to $b$.
  Reflexivity is given by the functor $r_A : \Gamma.A \to \Gamma.A.A.\id_A$ mapping $(x \in \Gamma, a \in A_x)$ to $(x \in \Gamma, a \in A_x, a \in A_x, \id_a : a \iso a)$.
  For the elimination rule, for each split isofibration $A : \Gamma.A \to \Gamma$, we have a morphism in $\Gamma.A.A[\pi_A].\id_A$ as follows: 
  \[
  t (\alpha \in a, \beta \in b, p) = (\alpha \in a, p^{-1} (\beta) \in a, \id_a).
  \]
  For each split isofibration $C: \Gamma.A.A[\pi_A].\id_A.C \to \Gamma.A.A[\pi_A].\id_A$ and functor $d$ as follows,
  \begin{align*}
  d : \Gamma.A & \to \Gamma.A.A[\pi_A].\id_A.C \\
  (x \in \Gamma, a \in A_x) & \mapsto (x \in \Gamma, a \in A_x, b : A_x, \id_a : a \iso a, d(a) \in C_{a,a,\id_a}),
  \end{align*}
  we have the eliminator $j_{C,d}$: 
  \begin{align*}
    j_{C,d} : \Gamma.A.A[\pi_A].\id_A & \to \Gamma.A.A[\pi_A].\id_A.C \\
    (x \in \Gamma, a \in A_x, b \in A_x, p : a \iso b) & \mapsto (x \in \Gamma, a \in A_x, b \in A_x, p : a \iso b, t^* (d(a)) \in C_{a,b,p}).
  \end{align*}
  Since $C$ is split, we have $j(a,a,\id_a) = d(a)$ for all $x \in \Gamma$ and $a \in A_x$.

  For functoriality, let $t : A \to B$ be a morphism of split fibrations over $\Gamma$ that maps $(x \in \Gamma, a \in A_x)$ to $(x \in \Gamma, t(a) \in B_x)$. We take $\subtypeid{t} : \id_A \to \id_B$ to be the following functor over $\Gamma.t.t$:
  \begin{align*}
  \subtypeid{t} : \Gamma.A.A[\pi_A].\id_A & \to  \Gamma.B.B[\pi_B].\id_B \\
   (x \in \Gamma, a \in A_x, b \in A_x, p: a \iso b) & \mapsto (x \in \Gamma, t(a) \in B_x, t(b) \in B_x, t(p) : t(a) \iso t(b)).
  \end{align*}

  For the comprehension category built from categories, instead of from groupoids, the set of objects in the fiber over $(x,a,b)$ is $\mathsf{iso}(a,b)$ instead of $\hom(a,b)$, for each $x \in \Gamma$ and $a,b \in A_x$.
\end{example}

The identity type on the example of Heyting algebras (cf.~\cref{exa:cha-pred,exa:interp-cha-pred}) is trivial: it denotes equality between proofs, hence gives the singleton type.

\begin{figure}[!htb]
  \centering
  \tcbset{colframe=black, colback=white, width=\textwidth, boxrule=0.1mm, arc=0mm, auto outer arc}
  \begin{tcolorbox}
  \centering
  \scriptsize\
  \begin{tabular}{cc}
    $\unaryRule
    {\Gamma \vdash A \type}{\Gamma.A.A[\pi_A] \vdash \id_A \type}{\scriptsize{id-form}} $ &
    $\unaryRuleBinaryConcl
    {\Gamma \vdash A \type}{\quad \qquad \Gamma.A \vdash \reflmor{A} : \Gamma.A.A[\pi_A].\id_A \qquad \quad}{\Gamma.A \vdash \pi_{\id_A} \circ \reflmor{A} \equiv p_2(1_{\Gamma.A}) : \Gamma.A.A[\pi_A]}{\scriptsize{id-intro}} $
  \end{tabular}
\[
\ternaryRule
{\Gamma.A.A[\pi_A].\id_A \vdash C \type}{\Gamma.A \vdash d : \Gamma.A.A[\pi_A].\id_A.C}
{\Gamma.A \vdash \pi_C \circ d \equiv \reflmor{A} : \Gamma.A.A.\id_A}
{\Gamma.A.A[\pi_A].\id_A \vdash \jmor{A}{C}{d} : C}
{\scriptsize{id-elim}}
\]
\[
\ternaryRule
{\Gamma.A.A[\pi_A].\id_A \vdash C \type}{\Gamma.A \vdash d : \Gamma.A.A[\pi_A].\id_A.C}{\Gamma.A \vdash \pi_C \circ d \equiv \reflmor{A} : \Gamma.A.A.\id_A}{\Gamma.A \vdash \jmor{A}{C}{d} \circ \reflmor{A} \equiv d : \Gamma.A.A[\pi_A].\id_A.C}{\scriptsize{id-beta}}
\]
\[ 
\binaryRule{\Delta \vdash A \type}{\Gamma \vdash s : \Delta}{\Gamma.A[s].A[s][\pi_{A[s]}] ~|~ \id_{A[s]} \vdashiso i_{\id_A, s} : \id_A [s.A.A[\pi_A] \circ \Gamma.A[s].i_{s.A}]}{\scriptsize{id-sub}}
\]
\[
\binaryRuleBinaryConcl{\Delta \vdash A \type}{\Gamma \vdash s : \Delta}{\Gamma.A[s].A[s][\pi_{A[s]}] \vdash (s.A.A[\pi_A] \circ \Gamma.A[s].i_{s.A}).\id_A \circ \Gamma.A[s].A[s][\pi_{A[s]}].i_{\id_A, s} \circ \reflmor{A[s]} }{ \equiv \reflmor{A} \circ s.A.A[\pi_A] \circ \Gamma.A[s].i_{s.A} : \Delta.A.A[\pi_A].\id_A}{\scriptsize{sub-refl}}
\]
\[
\binaryRuleTernaryConcl{\Delta \vdash A \type}{\Gamma \vdash s : \Delta}
{\Gamma.A[s].A[s][\pi_{A[s]}].\id_{A[s]} \vdash ((s.A.A[\pi_A] \circ \Gamma.A[s].i_{s.A}).\id_A \circ \Gamma.A[s].A[s][\pi_{A[s]}].i_{\id_A, s}).C \circ }{\jmor{A[s]}{C[(s.A.A[\pi_A] \circ \Gamma.A[s].i_{s.A}).\id_A \circ \Gamma.A[s].A[s][\pi_{A[s]}].i_{\id_A, s}]}{d[s.A.A[\pi_A] \circ \Gamma.A[s].i_{s.A}]} \equiv  \jmor{A}{C}{d} \circ }{ (s.A.A[\pi_A] \circ \Gamma.A[s].i_{s.A}).\id_A \circ \Gamma.A[s].A[s][\pi_{A[s]}].i_{\id_A, s} : \Delta.A.A[\pi_A].\id_A.C}{\scriptsize{sub-j}}
\]
\[
\unaryRule
{\Gamma ~|~ A \vdash t : B}
{\Gamma.A.A[\pi_A] ~|~ \id_A \vdash \subtypeid{t} : \id_B[\Gamma.t.t]}
{\scriptsize{subt-id}}
\unaryRule
{\Gamma \vdash A \type}{\Gamma.A.A[\pi_A].\id_A \vdash \subtypeid{1_{\Gamma.A}} \equiv 1_{\id_A} : \Gamma.A.A[\pi_A].\id_A[1_{\Gamma.A.A[\pi_A]}]}{\scriptsize{subt-id-i}}
\]
\[
\binaryRule
{\Gamma ~|~ A \vdash t : B}{\Gamma ~|~ B \vdash t' : C}
{\Gamma.A.A[\pi_A] ~|~ \id_A \vdash \subtypeid{t'}[\Gamma.t.t] \circ \subtypeid{t} \equiv \subtypeid{t' \circ t} : \id_C [\Gamma.t'.t'][\Gamma.t.t]}{\scriptsize{subt-id-c}}
\]
\[
\unaryRule
{\Gamma ~|~ A \vdash t : B}{\Gamma.A \vdash (\Gamma.t.t).\id_B \circ \Gamma.A.A.\subtypeid{t} \circ \reflmor{A} \equiv \reflmor{B} \circ \Gamma.t : \Gamma.B.B.\id_B}{\scriptsize{subt-id-refl}}
\]
\[
\unaryRule
{\Gamma ~|~ A \vdash t : B}
{\Gamma.A.A.\id_A \vdash ((\Gamma.t.t).\id_B \circ \Gamma.A.A.\subtypeid{t}).C \circ \jmor{A}{C[\Gamma.t.t]}{d[\Gamma.t.t]} \equiv \jmor{B}{C}{d} \circ (\Gamma.t.t).\id_B \circ \Gamma.A.A.\subtypeid{t} : \Gamma.B.B.\id_B.C}{\scriptsize{subt-id-j}}
\]
\[
\unaryRule{\Gamma \vdash A \type}{\Gamma ~|~ \id_{A[1_\Gamma]} \vdash \subiso{\id_A}{1_\Gamma.A.A[\pi_A] \circ \Gamma.i_{1.A}}{1_\Gamma} \circ {i_{\id_A,1_A}} \equiv {\idisoinv{\id_A}} \circ {\subtypeid{\idiso{A}}} : \id_A [1_\Gamma]}{\scriptsize{id-sub-id}}
\]
\[
\ternaryRuleTernaryConcl{\hspace{3.4cm} \Theta \vdash A \type}{\Gamma \vdash s' : \Delta}{\Delta \vdash s : \Theta \hspace{3.4cm}}
{\Gamma ~|~ \id_{A[s \circ s']} \vdash \compiso{\id_A}{s.A.A[\pi_A]\circ \Gamma.i_{s.A}}{s'.A[s].A[s][\pi_{A[s]}]\circ \Gamma.i_{s'.A[s]}} \circ}
{ \subiso{\id_A}{s \circ s'.A.A[\pi_A] \circ \Gamma.i_{(s\circ s').A}}{{s.A.A[\pi_A]\circ \Gamma.i_{s.A}}\circ{s'.A[s].A[s][\pi_{A[s]}]\circ \Gamma.i_{s'.A[s]}}} \circ i_{\id_A, s\circ s'} \equiv}
{i_{\id_A,s} [s'.A[s].A[s][\pi_{A[s]}] \circ \Gamma.i_{s'.A[s]}] \circ  i_{\id_{A[s]},s'} \circ \subtypeid{\compiso{A}{s}{s'}}: \id_{A[s]}[s'.A[s].A[s][\pi_{A[s]}]\circ \Gamma.i_{s'.A[s]}]}{\scriptsize{id-sub-comp}}
\]
In Rules \ruleref{id-sub}, \ruleref{sub-refl}, \ruleref{sub-j},\ruleref{id-sub-id} and \ruleref{id-sub-comp}, $i_{s.A} \coloneqq \compiso{A}{\pi_A}{s.A}  \circ \subiso{A}{s \circ \pi_{A[s]}}{\pi_A \circ s.A} \circ \compisoinv{A}{s}{\pi_{A[s]}}$. In Rule \ruleref{subt-id-j}, $\Gamma.t.t \coloneqq (\Gamma.t \circ \pi_{A[\pi_A]}).B[\pi_B] \circ p_2(\Gamma.t \circ \pi_A.A)$.
\end{tcolorbox}
\caption{Rules for functorial $\id$-types. Rules \protect\ruleref{id-sub}, \protect\ruleref{sub-refl}, \protect\ruleref{sub-j}, \protect\ruleref{id-sub-id} and \protect\ruleref{id-sub-comp} use the notation introduced in \cref{lemma:s.A}. For example, in Rule \protect\ruleref{id-sub}, $s.A$ is $(s \circ \pi_{A[s]}, \Gamma.(\compiso{A}{s}{\pi_{A[s]}} [\pi_{A[s]}]) \circ p_2(1_{\Gamma.A[s]}))$.}
\label{fig:id-for-terms}
\end{figure}

\begin{definition} \label{def:syntax-with-id-for-terms}
  We define the \emphasize{extension of $\CTT$ by functorial $\id$-types} to consist of the rules in \cref{fig:id-for-terms}.
\end{definition}

\begin{theorem}[Soundness of Rules for $\id$-types] \label{thm:soundness-id-for-terms}
  Any comprehension category with functorial identities models the rules of $\CTT$ and the rules for functorial $\id$-types in \cref{fig:id-for-terms}.
\end{theorem}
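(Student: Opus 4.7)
The plan is to extend the interpretation from \cref{thm:soundness} by adding clauses for the $\id$-type-specific syntax, and then verify each rule of \cref{fig:id-for-terms} against the corresponding clause of \cref{def:comp-cat-with-id}. Fix a comprehension category $(\CC, \TT, p, \chi)$ equipped with functorial identity types.

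First I would extend the interpretation with the new constructs. A judgement $\Gamma.A.A[\pi_A] \vdash \id_A \type$ is sent to the object $\id_{\interp{A}} \in \TT_{\interp{\Gamma.A.A[\pi_A]}}$ of clause~(1) of \cref{def:id-non-full}; a judgement $\Gamma.A \vdash \reflmor{A} : \Gamma.A.A[\pi_A].\id_A$ is sent to the morphism $\reflmor{\interp{A}}$ of clause~(2); and a judgement $\Gamma.A.A[\pi_A].\id_A \vdash \jmor{A}{C}{d} : C$ is sent to the section $\jmor{\interp{A}}{\interp{C}}{\interp{d}}$ of clause~(3). The syntactic isomorphism $i_{\id_A,s}$ and coercion $\subtypeid{t}$ are interpreted using items~(1) and~(3) of \cref{def:comp-cat-with-id}, respectively.

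Then I would verify the rules one at a time, grouped by their semantic content. Rules \ruleref{id-form}, \ruleref{id-intro} and \ruleref{id-elim} are direct translations of the clauses of \cref{def:id-non-full}; the commuting triangle in clause~(2) there yields the equation $\pi_{\id_A}\circ\reflmor{A} \equiv p_2(1_{\Gamma.A})$ in \ruleref{id-intro}, and the side condition $\pi_C\circ d\equiv\reflmor{A}$ in \ruleref{id-elim} matches the hypothesis on $d$ in clause~(3). Rule \ruleref{id-beta} holds because $\jmor{A}{C}{d}$ is required to be a section with $\jmor{A}{C}{d}\circ\reflmor{A}=d$. Rules \ruleref{id-sub}, \ruleref{sub-refl} and \ruleref{sub-j} are a transcription of items~(1) and~(2) of \cref{def:comp-cat-with-id} into three judgemental equalities. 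Rules \ruleref{subt-id}, \ruleref{subt-id-i} and \ruleref{subt-id-c} correspond to items~(3), (4) and~(5); Rules \ruleref{subt-id-refl} and \ruleref{subt-id-j} correspond to the two commuting squares of item~(6); and Rules \ruleref{id-sub-id} and \ruleref{id-sub-comp} follow from the functoriality of $i_{\id_A,-}$ postulated in item~(7).

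The main obstacle is not conceptual but notational: because our comprehension categories are not assumed split, the isomorphisms $\compiso{A}{s}{s'}$, $\idiso{A}$ and $\subiso{A}{s}{s'}$ appear throughout the rules, and at each step one must track these coercions carefully and invoke their coherence laws (\cref{sec:iso-coherences}) to match the syntactic composites with the semantic ones. In particular, verifying \ruleref{id-sub-id} and \ruleref{id-sub-comp} requires unfolding the definition of $s.A$ from \cref{lemma:s.A} on both sides and then chasing a diagram built out of $\compiso{}{}{}$, $\idiso{}$, $\subiso{}{}{}$ together with the semantic isomorphism $i_{\id_A,-}$; the functoriality clauses of \cref{def:comp-cat-with-id}(7) are precisely what is needed to make this chase go through, and I would expect this step to consume the bulk of the work.
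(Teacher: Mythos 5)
Your proposal matches the paper's own proof (given in \cref{sec:interp-id}) essentially clause for clause: the paper also interprets $\id_A$, $\reflmor{A}$, $\jmor{A}{C}{d}$, $i_{\id_A,s}$ and $\subtypeid{t}$ by the corresponding data of \cref{def:id-non-full,def:comp-cat-with-id}, and then validates each rule of \cref{fig:id-for-terms} against exactly the clause you identify (\ruleref{id-beta} from the section condition, \ruleref{sub-refl}/\ruleref{sub-j} from the Beck--Chevalley diagrams, \ruleref{subt-id-refl}/\ruleref{subt-id-j} from the two squares of the compatibility item, and \ruleref{id-sub-id}/\ruleref{id-sub-comp} from the functoriality of $i_{\id_A,-}$ spelled out in \cref{sec:iso-coherences}). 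Your closing remark that the real work lies in tracking the coherence isomorphisms is accurate and consistent with how the paper treats these verifications.
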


\TablesOrNot{
  In \cref{tab:subt-id}, we discuss the meaning of the rules in \cref{fig:subty-for-id} from the subtyping point of view, where type morphisms are seen as witnesses of subtyping relations.

  \begin{table}[h]
    \caption{Meaning of the rules regarding subtyping for $\id$-types when the judgement $\Gamma ~|~ A \vdash t : B$ expresses subtyping.}
    \label{tab:subt-id}
    \begin{tabular}{l p{7cm} l}
    Rules in \cref{fig:subty-for-id} & Meaning under Subtyping & Rule in \cite{coraglia_et_al:LIPIcs.TYPES.2023.3} \\
    \hline
    \ruleref{subt-id} & $A \leq_t B$ gives $\id_A \leq_{\subtypeid{t}} (\id_B[\Gamma.t.t])$. & - \\
    \ruleref{subt-id-i} & $\subtypeid{-}$ preserves identity witnesses. & - \\
    \ruleref{subt-id-c} & $\subtypeid{-}$ preserves composition of witnesses. & - \\
    \end{tabular}
  \end{table}
}
{
\begin{remark}
  \label{rem:subt-id}
  Under the subtyping point of view Rule \ruleref{subt-id} states that $A \leq_t B$ gives $\id_A \leq_{\subtypeid{t}} (\id_B[\Gamma.t.t])$.
Rules \ruleref{subt-id-i} and \ruleref{subt-id-c} states that $\subtypeid{(-)}$ preserves identity and composition of subtyping witnesses, respectively. 

\citet{coraglia_et_al:LIPIcs.TYPES.2023.3} do not discuss $\id$-types.
\end{remark}
}

\section{Strictly Functorial Substitution: $\CTTsplit$}
\label{sec:split}

As discussed in \cref{rem:split-subst}, substitution in \CTT is functorial only up to \emph{isomorphism}, whereas in many type theories, substitution is functorial up to \emph{equality}.
While this makes the type theory easier to use,
it comes at a cost;
the notion of model of such type theories is more restricted,
which makes finding such models more challenging.
In comprehension categories, functoriality of substitution is interpreted as splitness of the fibration $p : \TT \to \CC$. For example, MLTT is interpreted in full \emph{split} comprehension categories.

By replacing those \emph{isomorphisms} that reflect non-splitness of the fibration in the syntax with \emph{equalities}, we obtain $\CTTsplit$, a split version of $\CTT$ with substitution that is functorial up to equality.
For this, one needs to also add a judgement $\Gamma \vdash A \equiv B$ for equality of types.

The split version of the judgements and the rules is presented in \cref{sec:split-rules}.
The judgements and rules of $\CTTsplit$ compare to $\CTT$ precisely as follows.
\begin{enumerate}
  \item There is a judgement $\Gamma \vdash A \equiv B$ for equality of types in $\CTTsplit$, whereas $\CTT$ does not feature such a judgement.
  \item Isomorphisms of types in the following rules of $\CTT$ are equality of types in $\CTTsplit$: \ruleref{sub-id}, \ruleref{sub-comp}, \ruleref{sub-cong}, \ruleref{pi-sub}, \ruleref{sigma-sub}, \ruleref{id-sub}.
  \item The isomorphism terms in the following rules of $\CTT$ are not present in $\CTTsplit$: \ruleref{sub-tm-id}, \ruleref{sub-tm-comp}, \ruleref{tm-sub-coh}, \ruleref{sub-lam}, \ruleref{subt-pi}, \ruleref{sub-pair}, \ruleref{sub-refl}, \ruleref{sub-j}.
  \item Isomorphisms in the following rules of $\CTT$ are identity morphisms in $\CTTsplit$: \ruleref{sub-proj-id}, \ruleref{sub-proj-comp}, \ruleref{subt-pi-id}, \ruleref{subt-sigma-id}
  \item The following coherence rules of $\CTT$, explained in \cref{rem:sub-cong}, are not in $\CTTsplit$: second conclusion of \ruleref{sub-cong}, \ruleref{sub-cong-id}, \ruleref{sub-cong-comp-1}, \ruleref{sub-cong-comp-2}.
  In addition, the following coherence rules are not in $\CTTsplit$: \ruleref{pi-sub-id},\ruleref{pi-sub-comp}, \ruleref{sigma-sub-id}, \ruleref{sigma-sub-comp}, \ruleref{id-sub-id} and \ruleref{id-sub-comp}.
\end{enumerate}

\begin{theorem}[Soundness for $\CTTsplit$]
  \label{thm:split-soundness}
  Every split comprehension category models $\CTTsplit$.
\end{theorem}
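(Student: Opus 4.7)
The plan is to adapt the soundness proof of \cref{thm:soundness} (together with the soundness theorems \cref{thm:soundness-pi-for-terms,thm:soundness-sigma-for-terms,thm:soundness-id-for-terms} when type formers are present) by exploiting the fact that, in a split comprehension category $(\CC, \TT, p, \chi)$, the reindexing pseudofunctor $\CC^{\op} \to \Cat$ obtained from the Grothendieck construction is in fact a strict functor. Concretely, for every $\Gamma \in \CC$, $A \in \TT_\Gamma$, and composable $s : \Gamma \to \Delta$, $s' : \Delta \to \Theta$, one has on the nose $1_\Gamma^* A = A$ and $(s' \circ s)^* A = s^*(s'^* A)$, and the substitution isomorphisms $\idiso{A}$ and $\compiso{A}{s'}{s}$ guaranteed by \cref{definition-fibration} can be chosen to be identity morphisms. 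Similarly, the congruence isomorphism $\subiso{A}{s}{s'}$ from \cref{rem:sub-cong} can be taken to be the identity, because $\interp{s} = \interp{s'}$ forces $s^*A = (s')^*A$.

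First I would extend the interpretation of \cref{sec:from-comp-cat-soundness} with a clause for the new judgement $\Gamma \vdash A \equiv B$, which is interpreted as the equality $\interp{A} = \interp{B}$ of objects of $\TT_{\interp{\Gamma}}$. Then I would traverse the rules of $\CTTsplit$ in the order given at the end of \cref{sec:split} and verify each one. For the rules that already appeared in $\CTT$ (the structural rules, extension rules, pairing and projection rules, and the rules for type formers that do not mention coherence isomorphisms), nothing changes. The rules that were formulated using $\idiso{A}$, $\compiso{A}{s'}{s}$, and $\subiso{A}{s}{s'}$ in $\CTT$ are exactly those which in $\CTTsplit$ replace these isomorphisms by equalities of types or by identities; these are validated because splitness turns the corresponding semantic isomorphisms into identities, so for instance \ruleref{sub-tm-id} collapses to the identity equation $\interp{t[1_\Gamma]} = \interp{t}$, and the coherence rules such as \ruleref{pi-sub-id}, \ruleref{pi-sub-comp}, and their $\Sigma$- and $\id$-analogues become instances of the already-verified functoriality conditions of \cref{def:comp-cat-with-pi,def:comp-cat-with-sigma,def:comp-cat-with-id}.

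For the type-former rules, I would rely on the fact that the Beck-Chevalley isomorphisms $i_{\PiT{A}{B},s}$, $i_{\SigmaT{A}{B},s}$, and $i_{\id_A,s}$ in a split setting can be chosen to be identities, because pullback of the interpreted type formers along substitution now strictly equals the reindexed type former. With this choice, the rules \ruleref{sub-lam}, \ruleref{sub-pair}, \ruleref{sub-refl}, and \ruleref{sub-j} in $\CTTsplit$ (from which the isomorphism terms have been removed) are validated by the already-assumed stability equations of \cref{item:pi-bc-lam}, \cref{item:sigma-bc-pair}, and \cref{item:id-bc-rj}.

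The main obstacle is bookkeeping rather than mathematical depth: one must check that the interpretation respects the congruence rules for the new judgement $\Gamma \vdash A \equiv B$, in particular that every rule producing a derivable equality of types is sent to a genuine equality in $\TT$, and that whenever an equality $\Gamma \vdash A \equiv B$ is used to coerce a term or a type-morphism judgement, the resulting semantic equality is well-typed. Because splitness makes all the relevant substitution operations strictly functorial and forces the chosen lifts to be compatible with identities and composition, these coherence checks reduce to straightforward calculations in the fibers $\TT_{\interp{\Gamma}}$, and the proof then proceeds exactly as in \cref{thm:soundness}, \cref{thm:soundness-pi-for-terms}, \cref{thm:soundness-sigma-for-terms}, and \cref{thm:soundness-id-for-terms}, but with all coherence isomorphisms replaced by identities.
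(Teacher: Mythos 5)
Your overall strategy is the intended one: the paper does not spell out a proof of \cref{thm:split-soundness}, and the argument it has in mind is exactly the specialization you describe --- reuse the interpretation of \cref{sec:from-comp-cat-soundness}, interpret the new judgement $\Gamma \vdash A \equiv B$ as equality of objects of $\TT_{\interp{\Gamma}}$, and observe that splitness of $p : \TT \to \CC$ makes the reindexing strictly functorial, so that $\idiso{A}$ and $\compiso{A}{s'}{s}$ become identities and the rules of $\CTTsplit$ that replaced isomorphisms by equalities are validated on the nose. (Note that $\subiso{A}{s}{s'}$ is already interpreted as an identity in the non-split soundness proof, so nothing new is needed there.) For the structural fragment of $\CTTsplit$ --- which is what the theorem, read in parallel with \cref{thm:soundness} and \cref{def:CTT}, is actually asserting --- your argument is complete and correct.

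There is, however, one genuine overreach in the part of your proposal concerning type formers. You claim that in a split comprehension category the stability isomorphisms $i_{\PiT{A}{B},s}$, $i_{\SigmaT{A}{B},s}$, and $i_{\id_A,s}$ ``can be chosen to be identities, because pullback of the interpreted type formers along substitution now strictly equals the reindexed type former.'' Splitness of the fibration does not give you this: it makes $(s' \circ s)^*A = s^*(s'^*A)$ and $1_\Gamma^*A = A$ hold strictly, but it says nothing about how the chosen $\Pi$-, $\Sigma$-, or $\id$-structure interacts with reindexing. Even with a split fibration, $\PiT{A}{B}[s]$ and $\PiT{A[s]}{B[s.A]}$ are in general only isomorphic via the Beck--Chevalley datum of \cref{def:comp-cat-with-pi}, not equal. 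To validate Rules \ruleref{pi-sub}, \ruleref{sigma-sub}, and \ruleref{id-sub} of $\CTTsplit$ (which assert equalities of types) one must additionally assume that the functorial type-former structure is itself chosen strictly stable under substitution --- a separate hypothesis on top of splitness of $p$. If you restrict your claim to the structural rules, or add strict stability of the type formers as an explicit hypothesis when extending to \cref{fig:type-formers-split}, the proof goes through as you describe.
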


\begin{remark}[About Implicit Substitution]
 We could also consider a \emph{strict} syntax with \emph{implicit} substitution rather than with \emph{explicit} substitution.
 Such an implicit substition is automatically functorial up to equality.
 \citet{gambino:2023} provide a splitting construction that can be used to turn models of \CTT into models of a variant with implicit substitution.
 \citet{lumsdaine2015} also provide a splitting construction, but only for \emph{full} comprehension categories.
\end{remark}

\section{Related Work}\label{sec:rel-work}

In this section, we discuss related work and the precise relationship to our work.

\subsection{Work on Type Formers}

Comprehension categories, and semantic structures for the interpretation of type theory, were defined by Jacobs, first in a seminal paper \cite{jacobs93} and, later, in a comprehensive book \cite{jacobs99}.
Jacobs assumes comprehension categories to be full throughout.
We give precise comparisons between our work and that of Jacobs throughout this paper, in dedicated environments (\cref{rel:pi-jacobs,rel:pi-subty-jacobs,rel:sigma-jacobs,rel:sigma-subty-jacobs,rel:id-jacobs}).

Lindgren \cite{lindgren21} defines a semantic structure on non-full comprehension categories suitable for the interpretation of dependent product types; we give more detail in \cref{rel:pi-lindgren}.

Lumsdaine and Warren \cite{lumsdaine2015} discuss a splitting construction for \emph{full} comprehension categories. They define different versions of categorical structures for the interpretation of type formers suitable for full comprehension categories.
For a comparison of our and their structures for type formers, see \cref{rel:pi-lw,rel:sigma-lw}.

\citet{gambino:2023} shows that splitting construction by \citet{hofmann:1994} extends to \emph{non-full} comprehension categories.
They consider comprehension categories arising from \AWFSs, and suitable structure for type formers.
For a comparison of our and their structures for type formers, see \cref{rel:pi-gambino,rel:sig-gambino,rel:id-gambino}.

Ahrens, North, and Van der Weide \cite{ANW23} develop a syntax for comprehension \emphasize{bi}categories, with the goal of developing a notion of directed type theory.
They do not study type formers, but only structural rules.

Curien, Garner, and Hofmann \cite{curien14} develop a type theory with explicit substitution to more accurately reflect the intended categorical semantics: explicit substitution is not necessarily strict, and thus better aligns with the interpretation of substitution as pullback.
They give an interpretation of their type theory in comprehension categories; this interpretation does not make use of morphisms between types, and the authors note that ``[i]t is therefore natural to limit attention to full comprehension categories''.
Like \cite{curien14} we have an explicit substitution operation in the rules we develop; however, regarding morphisms between types, we take a different approach by extending the syntax by a corresponding judgement for such morphisms.

\subsection{Work on Subtyping}

Subtyping in type theory has been studied extensively, from both \emph{semantic} and \emph{syntactic} angles.
We discuss what seems to us the most closely related work to ours; the overview below is by no means claimed to be exhaustive.

We first discuss work studying the \emph{semantics} of subtyping.

Firstly, Zeilberger and Melliès \cite{DBLP:conf/popl/MelliesZ15} give a fibrational view of \emphasize{subsumptive} subtyping, unlike this paper where we discuss \emphasize{coercive} subtyping.
They interpret type systems as functors from a category of type derivations to a category of underlying terms. 
In this setting, subtyping derivations are interpreted as vertical morphisms, i.e. the derivations mapped to the identity morphism of the underlying term.

Secondly, \citet{coraglia_et_al:LIPIcs.TYPES.2023.3} study ``generalized categories with families'', a notion that they show is equivalent to the (non-full) comprehension categories discussed in the present paper \cite{coraglia:2024}.
Taking a semantic viewpoint on subtyping, they sketch \cite{coraglia_et_al:LIPIcs.TYPES.2023.3} how generalized categories with families interpret some rules related to \emphasize{coercive} subtyping, notably the rules of transitivity, subsumption, weakening, substitution, and rules related to the type formers $\Pi$ and $\Sigma$ --- for details, see 
\TablesOrNot
{\cref{tab:subtyping,tab:subty-pi,tab:subty-sigma}.}
{\cref{tab:subtyping,rem:subty-pi,rem:subty-sigma}.}
We develop that work further, by presenting more structural rules, and analyzing  identity types as well.

We point out one potential source of confusion when comparing the type-theoretic rules presented in the present work with the rules shown by \citet[Propositions 20 and 21]{coraglia_et_al:LIPIcs.TYPES.2023.3}.
Specifically, ``[I]n writing the rules above in Propositions 20 and 21, [\citeauthor{coraglia_et_al:LIPIcs.TYPES.2023.3}] have written the action of
reindexing as if the fibrations involved were split.''
That is, even though Coraglia and Emmenegger study comprehension categories that are not necessarily split, they present a simplified versions of their rules which can be interpreted only in \emphasize{split} comprehension categories.
In our work, we present type-theoretic rules suitable for interpretation in any comprehension category, not necessarily split.
As a consequence, some of our rules contain more coherence isomorphisms.

Next, we discuss work studying subtyping from a \emph{syntactic} point of view.


Firstly, \citet{DBLP:journals/mscs/LuoA08} study structural coercive subtyping for inductive data types. 
They propose functoriality of type formers and prove desirable properties, such as admissibility of transitivity of subtyping, of the resulting theory.

Secondly, \citet{DBLP:conf/esop/LaurentLM24} extend MLTT to a type theory with definitionally functorial type formers. 
They use this functoriality to extend MLTT to two type theories with coercive and subsumptive subtyping, respectively. 
They show that the functoriality of type formers is sufficient to establish back-and-forth translations between the two type theories, resulting in an equivalence between them. 
They also study meta-theoretic properties of their systems, e.g. showing that their functorial system is normalizing and has decidable type checking.

A rough comparison of the type theory from \citet{DBLP:conf/esop/LaurentLM24} with coercive subtyping, called \MLTTcoe, to ours is as follows.
The syntax of \MLTTcoe supports at most one coercion between any two types.
This corresponds to considering a thin category of types in our semantics. 
Furthermore, the syntax of \MLTTcoe comes with substitution which is strictly functorial.
This corresponds to considering a split fibration in our semantics.
The split version of our syntax is discussed in \cref{sec:split}.
If we assume splitness and thinness in our syntax, our rules for $\Pi$- and $\Sigma$-types imply theirs.
A more notable difference is that the identity type of \MLTTcoe does not have a counterpart to our Rule \ruleref{subt-id-j} which expresses that the eliminator of identity type is preserved by coercion.
As the authors explain in Section 3.3 of another version of their work \cite{laurent:hal-04160858}, they make a design choice to not include such a rule as it is not necessary for having their desired functorial equations.

Thirdly, \citet{adjedj:hal-05167997} develop a type theory they call \textsf{AdapTT} that captures type casting and coercive subtyping.
They show that \textsf{AdapTT} is modelled by \NatModDO, a structure equivalent to \emph{split} generalized categories with families.
Similar to $\CTTsplit$, substitution in their syntax is strictly functorial.
Furthermore, they provide a general framework, called \textsf{AdapTT$_2$}, for defining type formers that are automatically functorial, including general inductive types specified by a notion of signature.

Fourthly, \citet{DBLP:journals/tcs/AspinallC01} study syntactic properties of dependent type theory with subtyping; in particular, they prove subject reduction and decidability of type-checking for a theory with dependent types and subtyping.

Fifthly, \citet{DBLP:journals/iandc/LuoSX13} study \emphasize{coercive} subtyping, the form of subtyping analyzed semantically in this paper. They show that coercive subtyping provides a conservative extension of a type theory.

\section{Conclusion}

We have presented the judgements and rules of \CTT which reflect the structure of comprehension categories.
Specifically, we have presented structural rules, and rules for type and term formers for dependent pairs, dependent functions, and identity types.
We have also presented categorical structures on comprehension categories that are suitable for the interpretation of the type and term formers, and we have given a sound interpretation of our rules in such comprehension categories.
Furthermore, we have explained how our rules are a form of proof-relevant subtyping, extending work by Coraglia and Emmenegger \cite{coraglia_et_al:LIPIcs.TYPES.2023.3}.
We have given an interpretation of the rules in models arising from algebraic weak factorization systems.

We have not touched on the question of whether our syntax is complete for comprehension categories.
We conjecture that it is, although we have not yet established this rigorously.
One could follow Garner in his development of 2-dimensional models of type theory~\cite{garner:2009a}.
Garner constructs an equivalence between, on the one hand, a category of suitable generalized algebraic theories and, on the other hand, the category of models he is studying.

What does this work buy us?
From a semantic point of view, the rules of \CTT distill the essence of non-full comprehension categories, a semantic structure arising from \AWFSs, which in turn are frequently used for the interpretation of type theories.
From a syntactic point of view, \CTT provides a framework for theories with coercive subtyping.

Additionally, \CTT provides a basis for strictifying type theory with additional definitional equalities, in the following sense.
As discussed in \cref{sec:awfs}, in homotopy-theoretic models of Martin-Löf type theory that arise from \AWFSs (e.g., \Cref{exa:grpd}),
type morphisms are morphisms of algebras: that is, they preserve the algebra structure of types.
In other words -- using the language of homotopy type theory -- in these models, type morphisms are morphisms which preserve transport \emph{strictly}.
Thus, one could add rules to CCTT expressing that type morphisms preserve transport strictly, and these rules would be validated by such models.
Additionally, many commonly used functions in Martin-Löf type theory are algebra morphisms in these models, and thus could be asserted to be type morphisms in rules added to CCTT.
For instance,
both the first projection from a $\Sigma$-type and constant functions are type morphisms in arbitrary \AWFSs;
thus one can add rules to \CTT expressing that these functions commute strictly with transport.
The value of having such rules comes from the prevalence of calculations with transport in type theory (often that the first projection preserves transport strictly).
Indeed, while such calculations are mathematically straightforward and often omitted from accounts in published papers of formalized mathematics (as done, for instance, by \citet{AL19}), they ``pollute'' computer-checked libraries.
See \citet{DBLP:journals/tocl/Sojakova16} for an account of a piece of synthetic homotopy theory that explicitly describes the many instances of such calculations.
Additional definitional equalities simplify such calculations,
because the proof assistants can take over more work from the user.
As such, \CTT integrates into recent research in type theory aimed at justifying and implementing more definitional equalities in type theory~\cite{DBLP:conf/plpv/AltenkirchMS07,DBLP:conf/types/Cockx19,cohen:2017,DBLP:conf/esop/LaurentLM24,DBLP:conf/csl/Strub10}.

We conclude with a question we have left open:
How does the subtyping point of view carry over to the bicategorical type theory and its interpretation in comprehension \emphasize{bi}categories as studied by \citet{ANW23}?

\begin{acks}
  We thank Ambroise Lafont and Arunava Gantait for feedback on drafts of this paper, and Arthur Adjedj, Kenji Maillard, Noam Zeilberger, Thibaut Benjamin, and Tom de Jong for useful discussions of related work.
  We also thank the anonymous referees for their careful reading and helpful comments.
  This research was supported by the NWO project “The Power of Equality” OCENW.M20.380, which is financed by the Dutch Research Council (NWO).
  This work received government funding managed by the French National Research Agency under the France 2030 program, reference ``ANR-22-EXES-0013''.
\end{acks}

\bibliographystyle{ACM-Reference-Format}
\bibliography{bibliography}


\begin{thebibliography}{51}


\ifx \showCODEN    \undefined \def \showCODEN     #1{\unskip}     \fi
\ifx \showISBNx    \undefined \def \showISBNx     #1{\unskip}     \fi
\ifx \showISBNxiii \undefined \def \showISBNxiii  #1{\unskip}     \fi
\ifx \showISSN     \undefined \def \showISSN      #1{\unskip}     \fi
\ifx \showLCCN     \undefined \def \showLCCN      #1{\unskip}     \fi
\ifx \shownote     \undefined \def \shownote      #1{#1}          \fi
\ifx \showarticletitle \undefined \def \showarticletitle #1{#1}   \fi
\ifx \showURL      \undefined \def \showURL       {\relax}        \fi
\providecommand\bibfield[2]{#2}
\providecommand\bibinfo[2]{#2}
\providecommand\natexlab[1]{#1}
\providecommand\showeprint[2][]{arXiv:#2}

\bibitem[Abadi et~al\mbox{.}(1991)]%
        {explicit-substitution}
\bibfield{author}{\bibinfo{person}{Mart{\'{\i}}n Abadi}, \bibinfo{person}{Luca Cardelli}, \bibinfo{person}{Pierre{-}Louis Curien}, {and} \bibinfo{person}{Jean{-}Jacques L{\'{e}}vy}.} \bibinfo{year}{1991}\natexlab{}.
\newblock \showarticletitle{Explicit Substitutions}.
\newblock \bibinfo{journal}{\emph{J. Funct. Program.}} \bibinfo{volume}{1}, \bibinfo{number}{4} (\bibinfo{year}{1991}), \bibinfo{pages}{375--416}.
\newblock
\href{https://doi.org/10.1017/S0956796800000186}{doi:\nolinkurl{10.1017/S0956796800000186}}


\bibitem[Adjedj et~al\mbox{.}(2025)]%
        {adjedj:hal-05167997}
\bibfield{author}{\bibinfo{person}{Arthur Adjedj}, \bibinfo{person}{Meven Lennon-Bertrand}, \bibinfo{person}{Thibaut Benjamin}, {and} \bibinfo{person}{Kenji Maillard}.} \bibinfo{year}{2025}\natexlab{}.
\newblock \bibinfo{title}{{AdapTT: Functoriality for Dependent Type Casts}}.  (\bibinfo{year}{2025}).
\newblock
\urldef\tempurl%
\url{https://hal.science/hal-05167997}
\showURL{%
\tempurl}
\newblock
\shownote{To be published at POPL 2026}.


\bibitem[Ahrens and Lumsdaine(2019)]%
        {AL19}
\bibfield{author}{\bibinfo{person}{Benedikt Ahrens} {and} \bibinfo{person}{Peter~LeFanu Lumsdaine}.} \bibinfo{year}{2019}\natexlab{}.
\newblock \showarticletitle{Displayed Categories}.
\newblock \bibinfo{journal}{\emph{Log. Methods Comput. Sci.}} \bibinfo{volume}{15}, \bibinfo{number}{1} (\bibinfo{year}{2019}).
\newblock
\href{https://doi.org/10.23638/LMCS-15(1:20)2019}{doi:\nolinkurl{10.23638/LMCS-15(1:20)2019}}


\bibitem[Ahrens et~al\mbox{.}(2024)]%
        {DBLP:conf/aplas/AhrensLN24}
\bibfield{author}{\bibinfo{person}{Benedikt Ahrens}, \bibinfo{person}{Peter~LeFanu Lumsdaine}, {and} \bibinfo{person}{Paige~Randall North}.} \bibinfo{year}{2024}\natexlab{}.
\newblock \showarticletitle{Comparing Semantic Frameworks for Dependently-Sorted Algebraic Theories}. In \bibinfo{booktitle}{\emph{Programming Languages and Systems - 22nd Asian Symposium, {APLAS} 2024, Kyoto, Japan, October 22-24, 2024, Proceedings}} \emph{(\bibinfo{series}{Lecture Notes in Computer Science}, Vol.~\bibinfo{volume}{15194})}, \bibfield{editor}{\bibinfo{person}{Oleg Kiselyov}} (Ed.). \bibinfo{publisher}{Springer}, \bibinfo{pages}{3--22}.
\newblock
\href{https://doi.org/10.1007/978-981-97-8943-6\_1}{doi:\nolinkurl{10.1007/978-981-97-8943-6\_1}}


\bibitem[Ahrens et~al\mbox{.}(2023)]%
        {ANW23}
\bibfield{author}{\bibinfo{person}{Benedikt Ahrens}, \bibinfo{person}{Paige~Randall North}, {and} \bibinfo{person}{Niels van~der Weide}.} \bibinfo{year}{2023}\natexlab{}.
\newblock \showarticletitle{Bicategorical type theory: semantics and syntax}.
\newblock \bibinfo{journal}{\emph{Math. Struct. Comput. Sci.}} \bibinfo{volume}{33}, \bibinfo{number}{10} (\bibinfo{year}{2023}), \bibinfo{pages}{868--912}.
\newblock
\href{https://doi.org/10.1017/S0960129523000312}{doi:\nolinkurl{10.1017/S0960129523000312}}


\bibitem[Altenkirch et~al\mbox{.}(2007)]%
        {DBLP:conf/plpv/AltenkirchMS07}
\bibfield{author}{\bibinfo{person}{Thorsten Altenkirch}, \bibinfo{person}{Conor McBride}, {and} \bibinfo{person}{Wouter Swierstra}.} \bibinfo{year}{2007}\natexlab{}.
\newblock \showarticletitle{Observational equality, now!}. In \bibinfo{booktitle}{\emph{Proceedings of the {ACM} Workshop Programming Languages meets Program Verification, {PLPV} 2007, Freiburg, Germany, October 5, 2007}}, \bibfield{editor}{\bibinfo{person}{Aaron Stump} {and} \bibinfo{person}{Hongwei Xi}} (Eds.). \bibinfo{publisher}{{ACM}}, \bibinfo{pages}{57--68}.
\newblock
\href{https://doi.org/10.1145/1292597.1292608}{doi:\nolinkurl{10.1145/1292597.1292608}}


\bibitem[Aspinall and Compagnoni(2001)]%
        {DBLP:journals/tcs/AspinallC01}
\bibfield{author}{\bibinfo{person}{David Aspinall} {and} \bibinfo{person}{Adriana~B. Compagnoni}.} \bibinfo{year}{2001}\natexlab{}.
\newblock \showarticletitle{Subtyping dependent types}.
\newblock \bibinfo{journal}{\emph{Theor. Comput. Sci.}} \bibinfo{volume}{266}, \bibinfo{number}{1-2} (\bibinfo{year}{2001}), \bibinfo{pages}{273--309}.
\newblock
\href{https://doi.org/10.1016/S0304-3975(00)00175-4}{doi:\nolinkurl{10.1016/S0304-3975(00)00175-4}}


\bibitem[Awodey(2018)]%
        {AWODEY20181270}
\bibfield{author}{\bibinfo{person}{Steve Awodey}.} \bibinfo{year}{2018}\natexlab{}.
\newblock \showarticletitle{A cubical model of homotopy type theory}.
\newblock \bibinfo{journal}{\emph{Annals of Pure and Applied Logic}} \bibinfo{volume}{169}, \bibinfo{number}{12} (\bibinfo{year}{2018}), \bibinfo{pages}{1270--1294}.
\newblock
\showISSN{0168-0072}
\href{https://doi.org/10.1016/j.apal.2018.08.002}{doi:\nolinkurl{10.1016/j.apal.2018.08.002}}


\bibitem[Awodey et~al\mbox{.}(2024)]%
        {awodey2024equivariant}
\bibfield{author}{\bibinfo{person}{Steve Awodey}, \bibinfo{person}{Evan Cavallo}, \bibinfo{person}{Thierry Coquand}, \bibinfo{person}{Emily Riehl}, {and} \bibinfo{person}{Christian Sattler}.} \bibinfo{year}{2024}\natexlab{}.
\newblock \showarticletitle{The equivariant model structure on cartesian cubical sets}.
\newblock \bibinfo{journal}{\emph{arXiv:2406.18497}} (\bibinfo{year}{2024}).
\newblock


\bibitem[Berg and Faber(2022)]%
        {van2022effective}
\bibfield{author}{\bibinfo{person}{Benno van~den Berg} {and} \bibinfo{person}{Eric Faber}.} \bibinfo{year}{2022}\natexlab{}.
\newblock \bibinfo{booktitle}{\emph{Effective Kan fibrations in simplicial sets}}.
\newblock \bibinfo{publisher}{Springer Cham}.
\newblock
\href{https://doi.org/10.1007/978-3-031-18900-5}{doi:\nolinkurl{10.1007/978-3-031-18900-5}}


\bibitem[Bourke and Garner(2016)]%
        {bourke:2016}
\bibfield{author}{\bibinfo{person}{John Bourke} {and} \bibinfo{person}{Richard Garner}.} \bibinfo{year}{2016}\natexlab{}.
\newblock \showarticletitle{Algebraic Weak Factorisation Systems {{I}}: {{Accessible AWFS}}}.
\newblock \bibinfo{journal}{\emph{Journal of Pure and Applied Algebra}} \bibinfo{volume}{220}, \bibinfo{number}{1} (\bibinfo{year}{2016}), \bibinfo{pages}{108--147}.
\newblock
\showISSN{0022-4049,1873-1376}
\href{https://doi.org/10.1016/j.jpaa.2015.06.002}{doi:\nolinkurl{10.1016/j.jpaa.2015.06.002}}


\bibitem[Cartmell(1978)]%
        {cartmell:1978}
\bibfield{author}{\bibinfo{person}{John Cartmell}.} \bibinfo{year}{1978}\natexlab{}.
\newblock \emph{\bibinfo{title}{Generalised {{Algebraic Theories}} and {{Contextual Categories}}}}.
\newblock \bibinfo{thesistype}{Ph.\,D. Dissertation}. \bibinfo{school}{Oxford University, UK}.
\newblock


\bibitem[Cartmell(1986)]%
        {Cartmell86}
\bibfield{author}{\bibinfo{person}{John Cartmell}.} \bibinfo{year}{1986}\natexlab{}.
\newblock \showarticletitle{Generalised algebraic theories and contextual categories}.
\newblock \bibinfo{journal}{\emph{Ann. Pure Appl. Log.}}  \bibinfo{volume}{32} (\bibinfo{year}{1986}), \bibinfo{pages}{209--243}.
\newblock
\href{https://doi.org/10.1016/0168-0072(86)90053-9}{doi:\nolinkurl{10.1016/0168-0072(86)90053-9}}


\bibitem[Cavallo et~al\mbox{.}(2020)]%
        {cavallo2020unifying}
\bibfield{author}{\bibinfo{person}{Evan Cavallo}, \bibinfo{person}{Anders M{\"o}rtberg}, {and} \bibinfo{person}{Andrew~W Swan}.} \bibinfo{year}{2020}\natexlab{}.
\newblock \showarticletitle{Unifying cubical models of univalent type theory}. In \bibinfo{booktitle}{\emph{28th EACSL Annual Conference on Computer Science Logic (CSL 2020)}}. Schloss Dagstuhl--Leibniz-Zentrum f{\"u}r Informatik, \bibinfo{pages}{14--1}.
\newblock


\bibitem[Cockx(2019)]%
        {DBLP:conf/types/Cockx19}
\bibfield{author}{\bibinfo{person}{Jesper Cockx}.} \bibinfo{year}{2019}\natexlab{}.
\newblock \showarticletitle{Type Theory Unchained: Extending Agda with User-Defined Rewrite Rules}. In \bibinfo{booktitle}{\emph{25th International Conference on Types for Proofs and Programs, {TYPES} 2019, June 11-14, 2019, Oslo, Norway}} \emph{(\bibinfo{series}{LIPIcs}, Vol.~\bibinfo{volume}{175})}, \bibfield{editor}{\bibinfo{person}{Marc Bezem} {and} \bibinfo{person}{Assia Mahboubi}} (Eds.). \bibinfo{publisher}{Schloss Dagstuhl - Leibniz-Zentrum f{\"{u}}r Informatik}, \bibinfo{pages}{2:1--2:27}.
\newblock
\href{https://doi.org/10.4230/LIPICS.TYPES.2019.2}{doi:\nolinkurl{10.4230/LIPICS.TYPES.2019.2}}


\bibitem[Cohen et~al\mbox{.}(2017)]%
        {cohen:2017}
\bibfield{author}{\bibinfo{person}{Cyril Cohen}, \bibinfo{person}{Thierry Coquand}, \bibinfo{person}{Simon Huber}, {and} \bibinfo{person}{Anders M{\"o}rtberg}.} \bibinfo{year}{2017}\natexlab{}.
\newblock \showarticletitle{Cubical Type Theory: {{A}} Constructive Interpretation of the Univalence Axiom}.
\newblock \bibinfo{journal}{\emph{FLAP}} \bibinfo{volume}{4}, \bibinfo{number}{10} (\bibinfo{year}{2017}), \bibinfo{pages}{3127--3170}.
\newblock


\bibitem[Coraglia and {Di Liberti}(2024)]%
        {coraglia2024contextjudgementdeduction}
\bibfield{author}{\bibinfo{person}{Greta Coraglia} {and} \bibinfo{person}{Ivan {Di Liberti}}.} \bibinfo{year}{2024}\natexlab{}.
\newblock \bibinfo{title}{Context, Judgement, Deduction}.
\newblock
\showeprint[arxiv]{2111.09438}~[math.LO]
\urldef\tempurl%
\url{https://arxiv.org/abs/2111.09438}
\showURL{%
\tempurl}
\newblock
\shownote{Accepted at the proceedings of CatMi 2023}.


\bibitem[Coraglia and Emmenegger(2024a)]%
        {coraglia:2024}
\bibfield{author}{\bibinfo{person}{Greta Coraglia} {and} \bibinfo{person}{Jacopo Emmenegger}.} \bibinfo{year}{2024}\natexlab{a}.
\newblock \showarticletitle{A 2-Categorical Analysis of Context Comprehension}.
\newblock \bibinfo{journal}{\emph{Theory and Applications of Categories}}  \bibinfo{volume}{41} (\bibinfo{year}{2024}), \bibinfo{pages}{Paper No. 42,1476--1512}.
\newblock
\showISSN{1201-561X}


\bibitem[Coraglia and Emmenegger(2024b)]%
        {coraglia_et_al:LIPIcs.TYPES.2023.3}
\bibfield{author}{\bibinfo{person}{Greta Coraglia} {and} \bibinfo{person}{Jacopo Emmenegger}.} \bibinfo{year}{2024}\natexlab{b}.
\newblock \showarticletitle{{Categorical Models of Subtyping}}. In \bibinfo{booktitle}{\emph{29th International Conference on Types for Proofs and Programs (TYPES 2023)}} \emph{(\bibinfo{series}{Leibniz International Proceedings in Informatics (LIPIcs)}, Vol.~\bibinfo{volume}{303})}, \bibfield{editor}{\bibinfo{person}{Delia Kesner}, \bibinfo{person}{Eduardo~Hermo Reyes}, {and} \bibinfo{person}{Benno van~den Berg}} (Eds.). \bibinfo{publisher}{Schloss Dagstuhl -- Leibniz-Zentrum f{\"u}r Informatik}, \bibinfo{address}{Dagstuhl, Germany}, \bibinfo{pages}{3:1--3:19}.
\newblock
\showISBNx{978-3-95977-332-4}
\showISSN{1868-8969}
\href{https://doi.org/10.4230/LIPIcs.TYPES.2023.3}{doi:\nolinkurl{10.4230/LIPIcs.TYPES.2023.3}}


\bibitem[Curien et~al\mbox{.}(2014)]%
        {curien14}
\bibfield{author}{\bibinfo{person}{Pierre{-}Louis Curien}, \bibinfo{person}{Richard Garner}, {and} \bibinfo{person}{Martin Hofmann}.} \bibinfo{year}{2014}\natexlab{}.
\newblock \showarticletitle{Revisiting the categorical interpretation of dependent type theory}.
\newblock \bibinfo{journal}{\emph{Theor. Comput. Sci.}}  \bibinfo{volume}{546} (\bibinfo{year}{2014}), \bibinfo{pages}{99--119}.
\newblock
\href{https://doi.org/10.1016/J.TCS.2014.03.003}{doi:\nolinkurl{10.1016/J.TCS.2014.03.003}}


\bibitem[Dybjer(1996)]%
        {dybjer96}
\bibfield{author}{\bibinfo{person}{Peter Dybjer}.} \bibinfo{year}{1996}\natexlab{}.
\newblock \showarticletitle{Internal Type Theory}. In \bibinfo{booktitle}{\emph{Types for Proofs and Programs, International Workshop TYPES'95, Torino, Italy, June 5-8, 1995, Selected Papers}} \emph{(\bibinfo{series}{Lecture Notes in Computer Science}, Vol.~\bibinfo{volume}{1158})}, \bibfield{editor}{\bibinfo{person}{Stefano Berardi} {and} \bibinfo{person}{Mario Coppo}} (Eds.). \bibinfo{publisher}{Springer}, \bibinfo{pages}{120--134}.
\newblock
\href{https://doi.org/10.1007/3-540-61780-9_66}{doi:\nolinkurl{10.1007/3-540-61780-9_66}}


\bibitem[Emmenegger et~al\mbox{.}(2022)]%
        {EMMENEGGER2022103103}
\bibfield{author}{\bibinfo{person}{Jacopo Emmenegger}, \bibinfo{person}{Fabio Pasquali}, {and} \bibinfo{person}{Giuseppe Rosolini}.} \bibinfo{year}{2022}\natexlab{}.
\newblock \showarticletitle{A characterisation of elementary fibrations}.
\newblock \bibinfo{journal}{\emph{Annals of Pure and Applied Logic}} \bibinfo{volume}{173}, \bibinfo{number}{6} (\bibinfo{year}{2022}), \bibinfo{pages}{103103}.
\newblock
\showISSN{0168-0072}
\href{https://doi.org/10.1016/j.apal.2022.103103}{doi:\nolinkurl{10.1016/j.apal.2022.103103}}


\bibitem[Fiore(2022)]%
        {fiore:2022}
\bibfield{author}{\bibinfo{person}{Marcelo Fiore}.} \bibinfo{year}{2022}\natexlab{}.
\newblock \showarticletitle{Semantic Analysis of Normalisation by Evaluation for Typed Lambda Calculus}.
\newblock  \bibinfo{volume}{32}, \bibinfo{number}{8} (\bibinfo{year}{2022}), \bibinfo{pages}{1028--1065}.
\newblock
\href{https://doi.org/10.1017/S0960129522000263}{doi:\nolinkurl{10.1017/S0960129522000263}}


\bibitem[Gambino and Larrea(2023)]%
        {gambino:2023}
\bibfield{author}{\bibinfo{person}{Nicola Gambino} {and} \bibinfo{person}{Marco~Federico Larrea}.} \bibinfo{year}{2023}\natexlab{}.
\newblock \showarticletitle{Models of {{Martin-L{\"o}f}} Type Theory from Algebraic Weak Factorisation Systems}.
\newblock \bibinfo{journal}{\emph{The Journal of Symbolic Logic}} \bibinfo{volume}{88}, \bibinfo{number}{1} (\bibinfo{year}{2023}), \bibinfo{pages}{242--289}.
\newblock
\showISSN{0022-4812,1943-5886}
\href{https://doi.org/10.1017/jsl.2021.39}{doi:\nolinkurl{10.1017/jsl.2021.39}}


\bibitem[Garner(2009a)]%
        {garner:2009a}
\bibfield{author}{\bibinfo{person}{Richard Garner}.} \bibinfo{year}{2009}\natexlab{a}.
\newblock \showarticletitle{Two-Dimensional Models of Type Theory}.
\newblock \bibinfo{journal}{\emph{Mathematical Structures in Computer Science}} \bibinfo{volume}{19}, \bibinfo{number}{4} (\bibinfo{year}{2009}), \bibinfo{pages}{687--736}.
\newblock
\showISSN{0960-1295,1469-8072}
\href{https://doi.org/10.1017/S0960129509007646}{doi:\nolinkurl{10.1017/S0960129509007646}}


\bibitem[Garner(2009b)]%
        {garner:2009b}
\bibfield{author}{\bibinfo{person}{Richard Garner}.} \bibinfo{year}{2009}\natexlab{b}.
\newblock \showarticletitle{Understanding the Small Object Argument}.
\newblock \bibinfo{journal}{\emph{Applied Categorical Structures}} \bibinfo{volume}{17}, \bibinfo{number}{3} (\bibinfo{year}{2009}), \bibinfo{pages}{247--285}.
\newblock
\showISSN{0927-2852,1572-9095}
\href{https://doi.org/10.1007/s10485-008-9137-4}{doi:\nolinkurl{10.1007/s10485-008-9137-4}}


\bibitem[Grandis and Tholen(2006)]%
        {grandis:2006}
\bibfield{author}{\bibinfo{person}{Marco Grandis} {and} \bibinfo{person}{Walter Tholen}.} \bibinfo{year}{2006}\natexlab{}.
\newblock \showarticletitle{Natural Weak Factorization Systems}.
\newblock \bibinfo{journal}{\emph{Universitatis Masarykianae Brunensis}} \bibinfo{volume}{42}, \bibinfo{number}{4} (\bibinfo{year}{2006}), \bibinfo{pages}{397--408}.
\newblock
\showISSN{0044-8753,1212-5059}


\bibitem[Gratzer(2022)]%
        {DBLP:conf/lics/Gratzer22}
\bibfield{author}{\bibinfo{person}{Daniel Gratzer}.} \bibinfo{year}{2022}\natexlab{}.
\newblock \showarticletitle{Normalization for Multimodal Type Theory}. In \bibinfo{booktitle}{\emph{{LICS} '22: 37th Annual {ACM/IEEE} Symposium on Logic in Computer Science, Haifa, Israel, August 2 - 5, 2022}}, \bibfield{editor}{\bibinfo{person}{Christel Baier} {and} \bibinfo{person}{Dana Fisman}} (Eds.). \bibinfo{publisher}{{ACM}}, \bibinfo{pages}{2:1--2:13}.
\newblock
\href{https://doi.org/10.1145/3531130.3532398}{doi:\nolinkurl{10.1145/3531130.3532398}}


\bibitem[Gratzer et~al\mbox{.}(2021)]%
        {DBLP:journals/lmcs/GratzerKNB21}
\bibfield{author}{\bibinfo{person}{Daniel Gratzer}, \bibinfo{person}{G.~A. Kavvos}, \bibinfo{person}{Andreas Nuyts}, {and} \bibinfo{person}{Lars Birkedal}.} \bibinfo{year}{2021}\natexlab{}.
\newblock \showarticletitle{Multimodal Dependent Type Theory}.
\newblock \bibinfo{journal}{\emph{Log. Methods Comput. Sci.}} \bibinfo{volume}{17}, \bibinfo{number}{3} (\bibinfo{year}{2021}).
\newblock
\href{https://doi.org/10.46298/LMCS-17(3:11)2021}{doi:\nolinkurl{10.46298/LMCS-17(3:11)2021}}


\bibitem[Higgs(1984)]%
        {higgs:1984}
\bibfield{author}{\bibinfo{person}{Denis Higgs}.} \bibinfo{year}{1984}\natexlab{}.
\newblock \showarticletitle{Injectivity in the Topos of Complete {{Heyting}} Algebra Valued Sets}.
\newblock \bibinfo{journal}{\emph{Canadian Journal of Mathematics}} \bibinfo{volume}{36}, \bibinfo{number}{3} (\bibinfo{year}{1984}), \bibinfo{pages}{550--568}.
\newblock
\showISSN{0008-414X,1496-4279}
\href{https://doi.org/10.4153/CJM-1984-034-4}{doi:\nolinkurl{10.4153/CJM-1984-034-4}}


\bibitem[Hofmann(1994)]%
        {hofmann:1994}
\bibfield{author}{\bibinfo{person}{Martin Hofmann}.} \bibinfo{year}{1994}\natexlab{}.
\newblock \showarticletitle{On the Interpretation of Type Theory in Locally Cartesian Closed Categories}. In \bibinfo{booktitle}{\emph{Computer Science Logic, 8th International Workshop, {{CSL}} '94, Kazimierz, Poland, September 25-30, 1994, Selected Papers}} \emph{(\bibinfo{series}{Lecture Notes in Computer Science}, Vol.~\bibinfo{volume}{933})}, \bibfield{editor}{\bibinfo{person}{Leszek Pacholski} {and} \bibinfo{person}{Jerzy Tiuryn}} (Eds.). \bibinfo{publisher}{Springer}, \bibinfo{pages}{427--441}.
\newblock
\href{https://doi.org/10.1007/BFB0022273}{doi:\nolinkurl{10.1007/BFB0022273}}


\bibitem[Hofmann and Streicher(1998)]%
        {hofmann:1998}
\bibfield{author}{\bibinfo{person}{Martin Hofmann} {and} \bibinfo{person}{Thomas Streicher}.} \bibinfo{year}{1998}\natexlab{}.
\newblock \showarticletitle{The Groupoid Interpretation of Type Theory}.
\newblock In \bibinfo{booktitle}{\emph{Twenty-Five Years of Constructive Type Theory ({{Venice}}, 1995)}}. \bibinfo{series}{Oxford {{Logic Guides}}}, Vol.~\bibinfo{volume}{36}. \bibinfo{publisher}{Oxford Univ. Press}, \bibinfo{address}{New York}, \bibinfo{pages}{83--111}.
\newblock


\bibitem[Hua et~al\mbox{.}(2025)]%
        {hua2025hottlean}
\bibfield{author}{\bibinfo{person}{Joseph Hua}, \bibinfo{person}{Steve Awodey}, \bibinfo{person}{Mario Carneiro}, \bibinfo{person}{Sina Hazratpour}, \bibinfo{person}{Wojciech Nawrocki}, \bibinfo{person}{Spencer Woolfson}, {and} \bibinfo{person}{Yiming Xu}.} \bibinfo{year}{2025}\natexlab{}.
\newblock \showarticletitle{HoTTLean: Formalizing the Meta-Theory of HoTT in Lean}.
\newblock \bibinfo{journal}{\emph{TYPES 2025}} (\bibinfo{year}{2025}).
\newblock


\bibitem[Jacobs(1993)]%
        {jacobs93}
\bibfield{author}{\bibinfo{person}{Bart Jacobs}.} \bibinfo{year}{1993}\natexlab{}.
\newblock \showarticletitle{Comprehension Categories and the Semantics of Type Dependency}.
\newblock \bibinfo{journal}{\emph{Theor. Comput. Sci.}} \bibinfo{volume}{107}, \bibinfo{number}{2} (\bibinfo{year}{1993}), \bibinfo{pages}{169--207}.
\newblock
\href{https://doi.org/10.1016/0304-3975(93)90169-T}{doi:\nolinkurl{10.1016/0304-3975(93)90169-T}}


\bibitem[Jacobs(1998)]%
        {jacobs99}
\bibfield{author}{\bibinfo{person}{Bart Jacobs}.} \bibinfo{year}{1998}\natexlab{}.
\newblock \bibinfo{booktitle}{\emph{Categorical Logic and Type Theory}}. \bibinfo{series}{Studies in logic and the foundations of mathematics}, Vol.~\bibinfo{volume}{141}.
\newblock \bibinfo{publisher}{Elsevier}.
\newblock
\showISBNx{978-0-444-50853-9}


\bibitem[Kapulkin and Lumsdaine(2021)]%
        {Kapulkin2021}
\bibfield{author}{\bibinfo{person}{Krzysztof Kapulkin} {and} \bibinfo{person}{Peter~LeFanu Lumsdaine}.} \bibinfo{year}{2021}\natexlab{}.
\newblock \showarticletitle{The simplicial model of Univalent Foundations (after Voevodsky)}.
\newblock \bibinfo{journal}{\emph{Journal of the European Mathematical Society}} \bibinfo{volume}{23}, \bibinfo{number}{6} (\bibinfo{date}{March} \bibinfo{year}{2021}), \bibinfo{pages}{2071–2126}.
\newblock
\showISSN{1435-9863}
\href{https://doi.org/10.4171/jems/1050}{doi:\nolinkurl{10.4171/jems/1050}}


\bibitem[Laurent et~al\mbox{.}(2023)]%
        {laurent:hal-04160858}
\bibfield{author}{\bibinfo{person}{Th{\'e}o Laurent}, \bibinfo{person}{Meven Lennon-Bertrand}, {and} \bibinfo{person}{Kenji Maillard}.} \bibinfo{year}{2023}\natexlab{}.
\newblock \bibinfo{title}{{Definitional Functoriality for Dependent (Sub)Types}}.  (\bibinfo{date}{July} \bibinfo{year}{2023}).
\newblock
\urldef\tempurl%
\url{https://hal.science/hal-04160858}
\showURL{%
\tempurl}
\newblock
\shownote{working paper or preprint}.


\bibitem[Laurent et~al\mbox{.}(2024)]%
        {DBLP:conf/esop/LaurentLM24}
\bibfield{author}{\bibinfo{person}{Th{\'{e}}o Laurent}, \bibinfo{person}{Meven Lennon{-}Bertrand}, {and} \bibinfo{person}{Kenji Maillard}.} \bibinfo{year}{2024}\natexlab{}.
\newblock \showarticletitle{Definitional Functoriality for Dependent (Sub)Types} \emph{(\bibinfo{series}{Lecture Notes in Computer Science}, Vol.~\bibinfo{volume}{14576})}, \bibfield{editor}{\bibinfo{person}{Stephanie Weirich}} (Ed.). \bibinfo{publisher}{Springer}, \bibinfo{pages}{302--331}.
\newblock
\href{https://doi.org/10.1007/978-3-031-57262-3\_13}{doi:\nolinkurl{10.1007/978-3-031-57262-3\_13}}


\bibitem[Lindgren(2021)]%
        {lindgren21}
\bibfield{author}{\bibinfo{person}{Michael Lindgren}.} \bibinfo{year}{2021}\natexlab{}.
\newblock \emph{\bibinfo{title}{Dependent products as relative adjoints}}.
\newblock Master's thesis. \bibinfo{school}{Stockholm University}.
\newblock
\urldef\tempurl%
\url{https://ncatlab.org/nlab/files/Lindgren-DependentProductsAsRelativeAdjoints.pdf}
\showURL{%
\tempurl}


\bibitem[Lumsdaine and Warren(2015)]%
        {lumsdaine2015}
\bibfield{author}{\bibinfo{person}{Peter~Lefanu Lumsdaine} {and} \bibinfo{person}{Michael~A. Warren}.} \bibinfo{year}{2015}\natexlab{}.
\newblock \showarticletitle{The Local Universes Model: An Overlooked Coherence Construction for Dependent Type Theories}.
\newblock \bibinfo{journal}{\emph{ACM Transactions on Computational Logic}} \bibinfo{volume}{16}, \bibinfo{number}{3} (\bibinfo{date}{July} \bibinfo{year}{2015}), \bibinfo{pages}{1–31}.
\newblock
\showISSN{1557-945X}
\href{https://doi.org/10.1145/2754931}{doi:\nolinkurl{10.1145/2754931}}


\bibitem[Luo and Adams(2008)]%
        {DBLP:journals/mscs/LuoA08}
\bibfield{author}{\bibinfo{person}{Zhaohui Luo} {and} \bibinfo{person}{Robin Adams}.} \bibinfo{year}{2008}\natexlab{}.
\newblock \showarticletitle{Structural subtyping for inductive types with functorial equality rules}.
\newblock \bibinfo{journal}{\emph{Math. Struct. Comput. Sci.}} \bibinfo{volume}{18}, \bibinfo{number}{5} (\bibinfo{year}{2008}), \bibinfo{pages}{931--972}.
\newblock
\href{https://doi.org/10.1017/S0960129508006956}{doi:\nolinkurl{10.1017/S0960129508006956}}


\bibitem[Luo et~al\mbox{.}(2013)]%
        {DBLP:journals/iandc/LuoSX13}
\bibfield{author}{\bibinfo{person}{Zhaohui Luo}, \bibinfo{person}{Sergei Soloviev}, {and} \bibinfo{person}{Tao Xue}.} \bibinfo{year}{2013}\natexlab{}.
\newblock \showarticletitle{Coercive subtyping: Theory and implementation}.
\newblock \bibinfo{journal}{\emph{Inf. Comput.}}  \bibinfo{volume}{223} (\bibinfo{year}{2013}), \bibinfo{pages}{18--42}.
\newblock
\href{https://doi.org/10.1016/J.IC.2012.10.020}{doi:\nolinkurl{10.1016/J.IC.2012.10.020}}


\bibitem[Martin{-}L{\"{o}}f(1984)]%
        {mltt}
\bibfield{author}{\bibinfo{person}{Per Martin{-}L{\"{o}}f}.} \bibinfo{year}{1984}\natexlab{}.
\newblock \bibinfo{booktitle}{\emph{Intuitionistic type theory}}. \bibinfo{series}{Studies in proof theory}, Vol.~\bibinfo{volume}{1}.
\newblock \bibinfo{publisher}{Bibliopolis}.
\newblock
\showISBNx{978-88-7088-228-5}


\bibitem[Melli{\`{e}}s and Zeilberger(2015)]%
        {DBLP:conf/popl/MelliesZ15}
\bibfield{author}{\bibinfo{person}{Paul{-}Andr{\'{e}} Melli{\`{e}}s} {and} \bibinfo{person}{Noam Zeilberger}.} \bibinfo{year}{2015}\natexlab{}.
\newblock \showarticletitle{Functors are Type Refinement Systems}, \bibfield{editor}{\bibinfo{person}{Sriram~K. Rajamani} {and} \bibinfo{person}{David Walker}} (Eds.). \bibinfo{publisher}{{ACM}}, \bibinfo{pages}{3--16}.
\newblock
\href{https://doi.org/10.1145/2676726.2676970}{doi:\nolinkurl{10.1145/2676726.2676970}}


\bibitem[North(2019)]%
        {north:2019}
\bibfield{author}{\bibinfo{person}{Paige~Randall North}.} \bibinfo{year}{2019}\natexlab{}.
\newblock \showarticletitle{Towards a {{Directed Homotopy Type Theory}}}. In \bibinfo{booktitle}{\emph{Proceedings of the {{Thirty-Fifth Conference}} on the {{Mathematical Foundations}} of {{Programming Semantics}}, {{MFPS}} 2019, {{London}}, {{UK}}, {{June}} 4-7, 2019}} \emph{(\bibinfo{series}{Electronic {{Notes}} in {{Theoretical Computer Science}}}, Vol.~\bibinfo{volume}{347})}, \bibfield{editor}{\bibinfo{person}{Barbara K{\"o}nig}} (Ed.). \bibinfo{publisher}{Elsevier}, \bibinfo{pages}{223--239}.
\newblock
\href{https://doi.org/10.1016/J.ENTCS.2019.09.012}{doi:\nolinkurl{10.1016/J.ENTCS.2019.09.012}}


\bibitem[Pitts(2000)]%
        {pitts:2000}
\bibfield{author}{\bibinfo{person}{Andrew~M. Pitts}.} \bibinfo{year}{2000}\natexlab{}.
\newblock \showarticletitle{Categorical Logic}.
\newblock In \bibinfo{booktitle}{\emph{Handbook of Logic in Computer Science, {{Vol}}. 5}}. \bibinfo{series}{Handb. {{Log}}. {{Comput}}. {{Sci}}.}, Vol.~\bibinfo{volume}{5}. \bibinfo{publisher}{Oxford Univ. Press, New York}, \bibinfo{pages}{39--128}.
\newblock
\showISBNx{0-19-853781-6}


\bibitem[Sojakova(2016)]%
        {DBLP:journals/tocl/Sojakova16}
\bibfield{author}{\bibinfo{person}{Kristina Sojakova}.} \bibinfo{year}{2016}\natexlab{}.
\newblock \showarticletitle{The Equivalence of the Torus and the Product of Two Circles in Homotopy Type Theory}.
\newblock \bibinfo{journal}{\emph{{ACM} Trans. Comput. Log.}} \bibinfo{volume}{17}, \bibinfo{number}{4} (\bibinfo{year}{2016}), \bibinfo{pages}{29}.
\newblock
\href{https://doi.org/10.1145/2992783}{doi:\nolinkurl{10.1145/2992783}}


\bibitem[Streicher(2018)]%
        {streicher18}
\bibfield{author}{\bibinfo{person}{Thomas Streicher}.} \bibinfo{year}{2018}\natexlab{}.
\newblock \bibinfo{title}{Fibered Categories à la Jean Bénabou}.
\newblock
\showeprint[arxiv]{1801.02927}~[math.CT]
\urldef\tempurl%
\url{https://arxiv.org/abs/1801.02927}
\showURL{%
\tempurl}


\bibitem[Strub(2010)]%
        {DBLP:conf/csl/Strub10}
\bibfield{author}{\bibinfo{person}{Pierre{-}Yves Strub}.} \bibinfo{year}{2010}\natexlab{}.
\newblock \showarticletitle{Coq Modulo Theory}. In \bibinfo{booktitle}{\emph{Computer Science Logic, 24th International Workshop, {CSL} 2010, 19th Annual Conference of the EACSL, Brno, Czech Republic, August 23-27, 2010. Proceedings}} \emph{(\bibinfo{series}{Lecture Notes in Computer Science}, Vol.~\bibinfo{volume}{6247})}, \bibfield{editor}{\bibinfo{person}{Anuj Dawar} {and} \bibinfo{person}{Helmut Veith}} (Eds.). \bibinfo{publisher}{Springer}, \bibinfo{pages}{529--543}.
\newblock
\href{https://doi.org/10.1007/978-3-642-15205-4\_40}{doi:\nolinkurl{10.1007/978-3-642-15205-4\_40}}


\bibitem[Swan(2018)]%
        {swan2018identity}
\bibfield{author}{\bibinfo{person}{Andrew Swan}.} \bibinfo{year}{2018}\natexlab{}.
\newblock \showarticletitle{Identity types in algebraic model structures and cubical sets}.
\newblock \bibinfo{journal}{\emph{arXiv:1808.00915}} (\bibinfo{year}{2018}).
\newblock


\bibitem[{Univalent Foundations Program}(2013)]%
        {hottbook}
\bibfield{author}{\bibinfo{person}{The {Univalent Foundations Program}}.} \bibinfo{year}{2013}\natexlab{}.
\newblock \bibinfo{booktitle}{\emph{Homotopy Type Theory: Univalent Foundations of Mathematics}}.
\newblock \bibinfo{publisher}{\url{https://homotopytypetheory.org/book}}, \bibinfo{address}{Institute for Advanced Study}.
\newblock


\end{thebibliography}

\newpage
\appendix

\section{Rules of the Type Theory}
\label{sec:structural-rules}

\begin{figure}[H]
\centering
\tcbset{colframe=black, colback=white, width=\textwidth, boxrule=0.1mm, arc=0mm, auto outer arc}
\begin{tcolorbox}
\centering
\scriptsize
  \[
  \unaryRule{\Gamma \ctx}{\Gamma \vdash 1_\Gamma: \Gamma}{ctx-mor-id}
  \binaryRule
  {\Gamma \vdash s : \Delta}{\Delta \vdash s' : \Theta}{\Gamma \vdash s' \circ s : \Theta}{ctx-mor-comp}
  \]
  \[
  \unaryRuleBinaryConcl{\Gamma \vdash s : \Delta}{\Gamma \vdash s \circ 1_\Gamma \equiv s: \Delta}
  {\Gamma \vdash 1_\Delta \circ s \equiv s : \Delta}
  {ctx-id-unit}
  \ternaryRule{\Gamma \vdash s : \Delta}{\Delta \vdash s' : \Theta}{\Theta \vdash s'' : \Phi}{\Gamma \vdash s'' \circ (s' \circ s) \equiv (s'' \circ s') \circ s : \Phi}{ctx-comp-assoc}
  \]
    \[
    \unaryRule
    {\Gamma \vdash A \type}{\Gamma ~|~ A \vdash 1_A: A}{ty-mor-id}
    \binaryRule
    {\Gamma ~|~ A \vdash t : B}{\Gamma ~|~B \vdash t' : C}{\Gamma ~|~A  \vdash t' \circ t : C}{ty-mor-comp}
    \]
    \[
    \unaryRuleBinaryConcl
    {\Gamma ~|~ A \vdash t : B}{\Gamma ~|~ A \vdash t \circ 1_A \equiv t: B}
    {\Gamma ~|~ A \vdash 1_B \circ t \equiv t: B}{ty-id-unit}
    \ternaryRule
    {\Gamma ~|~ A \vdash t : B}{\Gamma ~|~ B \vdash t' : C}{\Gamma ~|~ C \vdash t'' : D}{\Gamma ~|~ A \vdash t'' \circ (t' \circ t) \equiv (t'' \circ t') \circ t : D}{ty-comp-assoc}
    \]
    \[
    \unaryRule
    {\Gamma \vdash A \type}{\Gamma.A \ctx}{ext-ty}
    \unaryRule
    {\Gamma ~|~ A \vdash t : B}{\Gamma . A \vdash \Gamma . t : \Gamma.B }{ext-tm}
    \unaryRule
    {\Gamma \vdash A \type}{\Gamma . A \vdash \Gamma . 1_A \equiv 1_{\Gamma.A} : \Gamma. A}{ext-id}
    \]
    \[
    \binaryRule
    {\Gamma ~|~ A \vdash t : B}{\Gamma ~|~ B \vdash t' : C}{\Gamma . A \vdash \Gamma . (t' \circ t) \equiv \Gamma. t' \circ \Gamma . t : \Gamma . B}{ext-comp}
    \unaryRule
    {\Gamma \vdash A \type}{\Gamma.A \vdash \pi_A : \Gamma}{ext-proj}
    \unaryRule
    {\Gamma ~|~ A \vdash t : B}{\Gamma . A \vdash \pi_B \circ \Gamma . t \equiv \pi_A : \Gamma}{ext-c}
    \]
    \[
\binaryRule
{\Gamma \vdash s : \Delta}{\Delta \vdash A \type}{\Gamma \vdash A[s] \type}{sub-ty}
\binaryRule
{\Gamma \vdash s : \Delta}{\Delta ~|~ A \vdash t: B}{\Gamma ~|~ A[s] \vdash t[s] : B[s]}{sub-tm}
\]
\[
\binaryRule
{\Gamma \vdash s : \Delta}{\Delta \vdash A \type}{\Gamma ~|~ A[s] \vdash 1_A[s] \equiv 1_{A[s]}: A[s]}{sub-prs-id}
\ternaryRule
{\Gamma \vdash s : \Delta}
{\Delta ~|~A \vdash t : B}{\Delta ~|~ B \vdash t' : C}{\Gamma ~|~ A[s] \vdash (t' \circ t)[s] \equiv t'[s] \circ t[s] : C[s]}{sub-prs-comp}
\]
\[
\unaryRule
{\Gamma \vdash A \type}{\Gamma ~|~ A[1_\Gamma] \vdashiso \idiso{A} : A}{sub-id}
\ternaryRule
{\Gamma \vdash s : \Delta}{\Delta \vdash s' : \Theta}{\Theta \vdash A \type}{\Gamma ~|~ A[s' \circ s] \vdashiso \compiso{A}{s'}{s}  : A[s'][s]}{sub-comp}
\]
\[
\unaryRule
{\Gamma ~|~ A \vdash t: B}{\Gamma ~|~ A[1_\Gamma] \vdash t[1_\Gamma] \equiv \idisoinv{B} \circ t \circ \idiso{A} : B[1_\Gamma]}{sub-tm-id}
\]
\[
\ternaryRule
{\Gamma \vdash s : \Delta}{\Delta \vdash s' : \Theta}{\Theta ~|~ A \vdash t: B}
{\Gamma ~|~ A[s' \circ s] \vdash t[s' \circ s] \equiv \compisoinv{B}{s'}{s} \circ t[s'][s] \circ \compiso{A}{s'}{s} : B[s' \circ s] }{sub-tm-comp}
\]
  \[
\ternaryRule
{\Delta \vdash A \type}
{\Gamma \vdash s : \Delta}
{\Gamma \vdash t : A[s]}
{\Gamma \vdash (s,t) : \Delta.A}{\scriptsize{sub-ext}}
\unaryRule
{\Gamma \vdash s : \Delta.A}
{\Gamma \vdash p_2(s) : A[\pi_A \circ s]}
{\scriptsize{sub-proj}}
\]
\[
\ternaryRuleBinaryConcl
{\Delta \vdash A \type}{\Gamma \vdash s : \Delta}
{\Gamma \vdash t : A[s]}
{\Gamma \vdash \pi_A \circ (s,t) \equiv s : \Delta}{\Gamma \vdash p_2(s,t) \equiv t : \Gamma.A[s]}{\scriptsize{sub-beta}}
\unaryRule
{\Gamma \vdash s : \Delta.A}{\Gamma \vdash (\pi_A \circ s ,p_2(s)) \equiv s : \Delta.A}{\scriptsize{sub-eta}}
\]
\[
\unaryRule{\Gamma \vdash A \type}{\Gamma.A \vdash \pi_A.A[1_\Gamma] \circ p_2(1_{\Gamma.A}) \equiv \Gamma.\idisoinv{A} : \Gamma.A[1_\Gamma]}{\scriptsize{sub-proj-id}}
\]
\[
\ternaryRule{\Gamma \vdash s' : \Delta}{\Delta \vdash s : \Theta}{\Theta \vdash A  \type}
{
  \Gamma.A[s][s'] \vdash \pi_{A[s][s']}.A[s \circ s'] \circ p_2(s.A \circ s'.A[s]) \equiv \Gamma.\compisoinv{A}{s}{s'} : \Gamma.A[s \circ s'] 
}
{\scriptsize{sub-proj-comp}}
\]
\[
\binaryRule
{\Delta ~|~ A \vdash t : B}{\Gamma \vdash s : \Delta}
{\Gamma.A[s] \vdash s.B \circ \Gamma.t[s] \equiv \Delta.t \circ s.A : \Delta.B}{\scriptsize{tm-sub-coh}}
\]
\end{tcolorbox}
\caption{Rules of the type theory. Note that Rule \protect\ruleref{tm-sub-coh} uses the notation introduced in \cref{lemma:s.A}.}
\label{fig:from-comp-cat}
\end{figure}

\begin{figure}[H]
\centering
\tcbset{colframe=black, colback=white, width=\textwidth, boxrule=0.1mm, arc=0mm, auto outer arc}
\begin{tcolorbox}
\centering
\scriptsize
\[
\unaryRule
  {\Gamma \vdash s : \Delta}
  {\Gamma \vdash s \equiv s : \Delta}
  {ctx-eq-refl}
\unaryRule
  {\Gamma \vdash s_1 \equiv s_2 : \Delta}
  {\Gamma \vdash s_2 \equiv s_1 : \Delta}
  {ctx-eq-sym}
\binaryRule
  {\Gamma \vdash s_1 \equiv s_2 : \Delta}
  {\Gamma \vdash s_2 \equiv s_3 : \Delta}
  {\Gamma \vdash s_1 \equiv s_3 : \Delta}
  {ctx-eq-trans}
\]
\[
\binaryRule
  {\Delta \vdash t : \Theta}
  {\Gamma \vdash s_1 \equiv s_2 : \Delta}
  {\Gamma \vdash t \circ s_1 \equiv t \circ s_2 : \Theta}
  {ctx-comp-cong-1}
\binaryRule
  {\Gamma \vdash t : \Delta}
  {\Delta \vdash s_1 \equiv s_2 : \Theta}
  {\Gamma \vdash  s_1 \circ t \equiv s_2 \circ t: \Theta}
  {ctx-comp-cong-2}
\]
\[
\unaryRule
  {\Gamma ~|~ A \vdash t : B}
  {\Gamma ~|~ A \vdash t \equiv t : B}
  {ty-eq-refl}
\unaryRule
  {\Gamma ~|~ A \vdash t_1 \equiv t_2 : B}
  {\Gamma ~|~ A \vdash t_2 \equiv t_1 : B}
  {ty-eq-sym}
\binaryRule
  {\Gamma ~|~ A \vdash t_1 \equiv t_2 : B}
  {\Gamma ~|~ A \vdash t_2 \equiv t_3 : B}
  {\Gamma ~|~ A \vdash t_1 \equiv t_3 : B}
  {ty-eq-trans}
\]
\[
\binaryRule
  {\Gamma ~|~ B \vdash t' : C}
  {\Gamma ~|~ A \vdash t_1 \equiv t_2 : B}
  {\Gamma ~|~ A \vdash t' \circ t_1 \equiv t' \circ t_2 : C}
  {ty-comp-cong-1}
\binaryRule
  {\Gamma ~|~ A \vdash t' : B}
  {\Gamma ~|~ B \vdash t_1 \equiv t_2 : C}
  {\Gamma ~|~ A \vdash t_1 \circ t' \equiv t_2 \circ t' : C}
  {ty-comp-cong-2}
\]
\[
\unaryRule
  {\Gamma ~|~ A \vdash t_1 \equiv t_2 : B}
  {\Gamma.A \vdash \Gamma.t_1 \equiv \Gamma.t_2 : \Gamma.B}
  {ext-cong}
\binaryRule
  {\Gamma \vdash s : \Delta}
  {\Delta ~|~ A \vdash t_1 \equiv t_2 : B}
  {\Gamma ~|~ A[s] \vdash t_1[s] \equiv t_2[s] : B[s]}
  {sub-cong-tm}
\]
\[
\binaryRuleBinaryConcl
{\qquad \Delta \vdash A \type}
{\Gamma \vdash s \equiv s' : \Delta \qquad}
{\Gamma ~|~ A[s] \vdashiso \subiso{A}{s}{s'} : A[s']}
{\Gamma ~|~ A[s'] \vdash \subisoinv{A}{s}{s'} \equiv \subiso{A}{s'}{s} : A[s]}
{\scriptsize{sub-cong}}
\binaryRule{\Delta \vdash A \type}{\Gamma \vdash s : \Delta}{\Gamma ~|~ A[s] \vdash \subiso{A}{s}{s} \equiv 1_{A[s]}: A[s]}{\scriptsize{sub-cong-id}}
\]
\[
\ternaryRule
{\Theta \vdash A \type}
{\Delta \vdash s' : \Theta}
{\Gamma \vdash s_1 \equiv s_2 : \Delta}
{\Gamma ~|~ A[s'][s_1] \vdash \compiso{A}{s'}{s_2} \circ {\subiso{A}{s_1}{s_2}} \equiv \subiso{A}{s' \circ s_1}{s' \circ s_2} \circ \compiso{A}{s'}{s_1} : A[s' \circ s_2]}{\scriptsize{sub-cong-comp-1}}
\]
\[
\ternaryRule{\Theta \vdash A \type}{\Gamma \vdash s' : \Delta}{\Delta \vdash s_1 \equiv s_2 : \Theta}
{\Gamma ~|~ A[s_1][s'] \vdash \compiso{A}{s_2}{s'} \circ \subiso{A}{s_1}{s_2}[s'] \equiv \subiso{A}{s_1\circ s'}{s_2 \circ s'} \circ \compiso{A}{s_1}{s'} : A[s_2 \circ s']}{\scriptsize{sub-cong-comp-2}}
\]
\[
\unaryRuleBinaryConcl
{\qquad \qquad \quad \Gamma \vdash s_1 \equiv s_2 : \Delta.A \qquad \qquad \quad}
{\Gamma \vdash p_1(s_1) \equiv p_1(s_2) : \Delta}
{\Gamma \vdash \Gamma.\subiso{A}{s_1}{s_1} \circ p_2(s_1) \equiv p_2(s_2) : \Gamma.A[s_2]}
{\scriptsize{sub-proj-cong}}
\]
\[
\quadRule
{\Gamma \vdash s_1 \equiv s_2 : \Delta}
{\Gamma \vdash t_1 : A[s_1]}
{\Gamma \vdash t_2 : A[s_2]}
{\Gamma \vdash \Gamma.\subiso{A}{s_1}{s_2} \circ t_1 \equiv t_2 : \Gamma.A[s_2]}
{\Gamma \vdash (s_1,t_1) \equiv (s_2,t_2) : \Delta.A}{\scriptsize{sub-ext-cong}}
\]
\end{tcolorbox}
\caption{Rules of the type theory regarding $\equiv$ being a congruence.}
\label{fig:from-comp-cat-congruence}
\end{figure}

\section{Interpretation of $\CTT$}
\label{sec:interp-struc-rules}

\subsection{Interpretation of Structural Rules}
Let $(\CC, \TT, p, \chi)$ be a comprehension category.

The rules introduced in \cref{sec:from-comp-cat-c-and-t} regarding context and type morphisms are interpreted as follows.
\begin{enumerate}
    \item Rule \ruleref{ctx-mor-id} is interpreted as the identity morphisms in $\CC$. This means $\interp{1_\Gamma} \coloneqq 1_\interp{\Gamma}$, for each context $\Gamma$.
    \item Rule \ruleref{ctx-mor-comp} is interpreted as the composition of morphisms in $\CC$. This means $\interp{s' \circ s} \coloneqq \interp{s'} \circ \interp{s}$ for each context $\Gamma, \Delta$ and $\Theta$ and context morphisms $s$ from $\Gamma$ to $\Delta$ and $s'$ from $\Delta$ to $\Theta$.
    \item Rule \ruleref{ctx-id-unit} is interpreted as the unit laws of identity in $\CC$.
    \item Rule \ruleref{ctx-comp-assoc} is interpreted as the associativity of composition in $\CC$.
    \item Rule \ruleref{ty-mor-id} is interpreted as the identity morphisms in $\TT$. This means $\interp{1_A} \coloneqq 1_{\interp{A}}$, where $A$ is a type in context $\Gamma$.
    \item Rule \ruleref{ty-mor-comp} is interpreted as the composition of morphisms in $\TT$. This means $\interp{t' \circ t} \coloneqq \interp{t'} \circ \interp{t}$ for types $A,B$ and $C$ in context $\Gamma$, term $t$ of type $A$ dependent on $B$ and term $t'$ of type $B$ dependent on $C$.
    \item Rule \ruleref{ty-id-unit} is the unit laws of identity in $\TT$.
    \item Rule \ruleref{ty-comp-assoc} is the associativity of composition in $\TT$.
\end{enumerate}
This means that $\Gamma \vdashiso s : \Delta$ is interpreted as $\interp{s} : \interp{\Gamma} \iso \interp{\Delta}$ in $\CC$ with the inverse $\interp{s^{-1}}$. Similarly, $\Gamma | A \vdashiso t : B$ is interpreted as $\interp{t} : \interp{A} \iso \interp{B}$ in $\TT_\interp{\Gamma}$ with the inverse $\interp{t^{-1}}$.

The rules introduced in \cref{sec:from-comp-cat-ext} regarding comprehension are interpreted as follows.
\begin{enumerate}
     \item Rule \ruleref{ext-ty} is interpreted as the action of $\chi_0$ on the objects of $\TT_\Gamma$. This means $\interp{\Gamma.A} \coloneqq \chi_0 \interp{A}$ for a type $A$ in context $\Gamma$.
    \item Rule \ruleref{ext-tm} is interpreted as the action of $\chi_0$ on the morphisms of $\TT_\Gamma$. This  means $\interp{\Gamma.t} \coloneqq \chi_0 \interp{t}$ for a term $t$ of type $B$ dependent on $A$ in context $\Gamma$.
    \item Rule \ruleref{ext-id} is interpreted as $\chi_0$ preserving identity.
    \item Rule \ruleref{ext-comp} is interpreted as $\chi_0$ preserving composition.
    \item Rule \ruleref{ext-proj} is interpreted as the action of $\chi$ on the objects of $\TT$. This means $\interp{\pi_A} \coloneqq \chi \interp{A}$ for a type $A$ in context $\Gamma$.
    \item Rule \ruleref{ext-c} is the following commuting diagram corresponding to $\chi \interp{t}$ for a term $t$ of type $B$ dependent on $A$ in context $\Gamma$.
     \[
    \begin{tikzcd}
    \interp{\Gamma . A} \arrow[rr, "\interp{\Gamma . t}"] \arrow[rd, "\interp{\pi_A}", swap] && \interp{\Gamma . B} \arrow[ld, "\interp{\pi_B}"] \\
    & \interp{\Gamma} &
    \end{tikzcd}
    \]
\end{enumerate}

The rules introduced in \cref{sec:from-comp-cat-subst} regarding substitution are interpreted as follows.

\begin{enumerate}
    \item Rules \ruleref{sub-ty} and \ruleref{sub-tm} are interpreted as the action of the reindexing functor $\interp{s}^* : \TT_\interp{\Delta} \to \TT_\interp{\Gamma}$ on objects and morphisms respectively. This means $\interp{A[s]} \coloneqq \interp{s}^* \interp{A}$ and $\interp{t[s]} \coloneqq \interp{s}^* \interp{t}$, for contexts $\Gamma$ and $\Delta$, a context morphism $s$ from $\Gamma$ to $\Delta$, types $A$ and $B$ in $\Delta$ and a term $t$ of type $A$ dependent on $B$ in context $\Delta$.
    \item Rules \ruleref{sub-prs-id} and \ruleref{sub-prs-comp} are interpreted as the reindexing functor $\interp{s}^* : \TT_\interp{\Delta} \to \TT_\interp{\Gamma}$ preserving identity and composition respectively, for contexts $\Gamma$ and $\Delta$ and a context morphism $s$ from $\Gamma$ to $\Delta$.
    \item Rule \ruleref{sub-id} is interpreted as the isomorphism $\interp{A[1_\Gamma]} \iso \interp{A}$, which is $1_\interp{\Gamma}^* \interp{A} \iso \interp{A}$, for a type $A$ in context $\Gamma$.
    \item Rule \ruleref{sub-comp} is interpreted as the isomorphism $\interp{A[s' \circ s]} \iso \interp{(A[s'])[s]} $, which is $(\interp{s'} \circ \interp{s})^* \interp{A} \iso \interp{s}^* (\interp{s'}^* \interp{A})$, for a type $A$ in context $\Theta$ and context morphisms $s$ from $\Gamma$ to $\Delta$ and  $s'$ from $\Delta$ to $\Theta$.
    \item Rules \ruleref{sub-tm-id} and \ruleref{sub-tm-comp} are the following commuting diagrams in $\TT_\Gamma$.
    \[
    \begin{tikzcd}[row sep = large, column sep = large]
    {\interp{A[1_\Gamma]}} & {\interp{A}} \\
    {\interp{B[1_\Gamma]}} & {\interp{B}}
    \arrow["{\idiso{\interp{A}}}"',"\iso", from=1-1, to=1-2]
    \arrow["{\interp{t[1_\Gamma]}}"', from=1-1, to=2-1]
    \arrow["{\interp{t}}", from=1-2, to=2-2]
    \arrow["{\idiso{\interp{B}}}"', "\iso", from=2-1, to=2-2]
    \end{tikzcd}
    \quad
    \begin{tikzcd} [row sep = large, column sep = large]
    {\interp{A[u' \circ u]}} & {\interp{A[u'][u]}} \\
    {\interp{B[u' \circ u]}} & {\interp{B[u'][u]}}
    \arrow["{\compiso{\interp{A}}{\interp{u'}}{\interp{u}}}"', "\iso", from=1-1, to=1-2]
    \arrow["{\interp{t[u' \circ u]}}"', from=1-1, to=2-1]
    \arrow["{\interp{t}}", from=1-2, to=2-2]
    \arrow["{\compiso{\interp{B}}{\interp{u'}}{\interp{u}}}"', "\iso", from=2-1, to=2-2]
    \end{tikzcd}
    \]
    \item In Rule \ruleref{sub-ext}, $(s,A)$ is interpreted as $\interp{s}.\interp{A} \circ \interp{t}$.
    \item In Rule \ruleref{sub-proj},  $p_2(s)$ is interpreted as the morphism given by the universal property of the following pullback square in $\CC$.
    \[\begin{tikzcd}[row sep = large, column sep = large]
      {\interp{\Gamma}} \\
      & {\interp{\Gamma.A[s]}} & {\interp{\Delta.A}} \\
      & {\interp{\Gamma}} & {\interp{\Delta}}
      \arrow["{\interp{p_2(s)}}"{description}, dashed, from=1-1, to=2-2]
      \arrow["\interp{s}", curve={height=-12pt}, from=1-1, to=2-3]
      \arrow[equal, curve={height=18pt}, from=1-1, to=3-2]
      \arrow[from=2-2, to=2-3]
      \arrow["\interp{\pi_{A[s]}}"{description},from=2-2, to=3-2]
      \arrow["\lrcorner"{anchor=center, pos=0.125}, draw=none, from=2-2, to=3-3]
      \arrow["\interp{\pi_A}"{description}, from=2-3, to=3-3]
      \arrow["{\interp{\pi_A \circ s}}"', from=3-2, to=3-3]
    \end{tikzcd}\]
    \item Rules \ruleref{sub-proj-id} and \ruleref{sub-proj-comp} correspond to $\chi_0$ mapping $\idisoinv{A}$ and \compisoinv{A}{s}{s'} to the morphisms given by the universal property of the following pullbacks in $\CC$.
    \[\begin{tikzcd}[ampersand replacement=\&]
      {\interp{\Gamma.A}} \\
      \& \interp{{\Gamma.A[1_\Gamma]}} \& \interp{{\Gamma.A}} \\
      \& \interp{\Gamma} \& \interp{\Gamma}
      \arrow[dashed, from=1-1, to=2-2]
      \arrow["{1_\interp{\Gamma.A}}", curve={height=-12pt}, from=1-1, to=2-3]
      \arrow["{\interp{\pi_A}}"', curve={height=12pt}, from=1-1, to=3-2]
      \arrow[from=2-2, to=2-3]
      \arrow["{\interp{\pi_{A[1_\Gamma]}}}"{description}, from=2-2, to=3-2]
      \arrow["{\interp{\pi_A}}"{description}, from=2-3, to=3-3]
      \arrow["{1_\interp{\Gamma}}"', from=3-2, to=3-3]
    \end{tikzcd}
    \begin{tikzcd}[ampersand replacement=\&]
      {\interp{\Gamma.A[s][s']}} \\
      \& {\interp{\Gamma.A[s \circ s']}} \& {\interp{\Theta.A}} \\
      \& {\interp{\Gamma}} \& {\interp{\Theta}}
      \arrow[dashed, from=1-1, to=2-2]
      \arrow["{\interp{s.A \circ s'.A[s]}}", curve={height=-12pt}, from=1-1, to=2-3]
      \arrow["{\interp{\pi_{A[s][s']}}}"', curve={height=12pt}, from=1-1, to=3-2]
      \arrow[from=2-2, to=2-3]
      \arrow["{\interp{\pi_{A[s \circ s']}}}"{description}, from=2-2, to=3-2]
      \arrow["{\interp{\pi_{A}}}"{description}, from=2-3, to=3-3]
      \arrow["{\interp{s \circ s'}}"', from=3-2, to=3-3]
    \end{tikzcd}
    \]
    \item Rule \ruleref{tm-sub-coh} is the following commuting diagram in $\CC$.
    \[\begin{tikzcd}[row sep = large, column sep = large]
	{\interp{\Gamma.A[s]}} & {\interp{\Delta.A}} \\
	{\interp{\Gamma.B[s]}} & {\interp{\Delta.B}}
	\arrow["{\interp{s}.\interp{A}}", from=1-1, to=1-2]
	\arrow["{\chi_0 \interp{t[s]}}"', from=1-1, to=2-1]
	\arrow["{\chi_0 \interp{t}}", from=1-2, to=2-2]
	\arrow["{\interp{s}.\interp{B}}"', from=2-1, to=2-2]
    \end{tikzcd}\]
\end{enumerate}

In Rule \ruleref{sub-cong}, $\subiso{A}{u}{u'}$ is interpreted as the identity morphism $1_{A[u]}$; hence, Rules \ruleref{sub-cong-id}, \ruleref{sub-cong-comp-1} and \ruleref{sub-cong-comp-2} are trivially interpreted. It is easy to see how the rest of the rules in \cref{fig:from-comp-cat-congruence} are interpreted in any comprehension category.

\subsection{Interpretation of Functorial $\Pi$-types }
\label{sec:interp-pi}
The rules presented in \cref{fig:pi-for-terms} regarding $\Pi$-types are interpreted as follows.

The type $\PiT{A}{B}$ in Rule \ruleref{pi-form} is interpreted as the object $\PiT{{\interp{A}}}{\interp{B}}$ in $\TT_{\interp{\Gamma}}$.
The context morphism $\lambda b$ in Rule \ruleref{pi-intro} is interpreted as the morphism $\lambdamor{\interp{A}}{\interp{B}} (\interp{b})$ in $\CC$.
The context morphism $\appmor{A}{B}$ in Rule \ruleref{pi-elim} is interpreted as the morphism $\appmor{\interp{A}}{\interp{B}}$ in $\CC$. The second conclusions of Rules \ruleref{pi-intro} and \ruleref{pi-elim} are validated as these morphisms are sections.
Rule \ruleref{pi-beta} is validated because of the condition $\appmor{A}{B} \circ \lambdamor{A}{B} b[\pi_A] = b$ given in \cref{def:pi-non-full}, where $\interp{p_2(\lambda b \circ \pi_A)}$ is $\interp{\lambda b}[\pi_{\interp{A}}] : \interp{\Gamma.A} \to \interp{\Gamma.A.\PiT{A}{B}[\pi_A]}$.
Rule \ruleref{pi-eta} is the requirement for strongness $\appmor{A}{B} \circ f[\pi_A] = f$ and $\interp{p_2(f \circ \pi_A)}$ is $\interp{f}[\pi_{\interp{A}}]: \interp{\Gamma.A} \to \interp{\Gamma.A.\PiT{A}{B}[\pi_A]}$.
The isomorphism $i_{\PiT{A}{B},s}$ in Rule \ruleref{pi-sub} is interpreted as the isomorphism $i_{\PiT{\interp{A}}{\interp{B}}, \interp{s}}$ in $\TT_{\interp{\Gamma}}$.
Rule \ruleref{sub-lam} is the commuting diagrams given in Point \ref{item:pi-bc-lam} of \cref{def:comp-cat-with-pi}.

In Rule \ruleref{subt-pi}, $\subtypepi{f}{g}$ is interpreted as $\interp{\pi_{\PiT{A}{B}}}.\interp{\PiT{A'}{B'}} \circ \lambda (\chi_0 \interp{g[\appmor{A}{B} [\Gamma.f]]})$, and the second conclusion is the equality given in Point \ref{item:pi-subt-chi} of \cref{def:comp-cat-with-pi}. Rules \ruleref{subt-pi-id} and \ruleref{subt-pi-comp} are the equalities given in Points \ref{item:pi-subt-id} and \ref{item:pi-subt-comp} of \cref{def:comp-cat-with-pi}.

Rules \ruleref{pi-sub-id} and \ruleref{pi-sub-comp} are the commuting diagrams in \cref{sec:iso-coherences}.

\subsection{Interpretation of Functorial $\Sigma$-types}
\label{sec:interp-sigma}
The rules presented in \cref{fig:sigma-for-terms} regarding $\Sigma$-types are interpreted as follows.

The type $\SigmaT{A}{B}$ in Rule \ruleref{sigma-form} is interpreted as the object $\SigmaT{\interp{A}}{\interp{B}}$ in $\TT_{\interp{\Gamma}}$.
The context morphism $\pairmor{A}{B}$ in Rule \ruleref{sigma-intro} is interpreted as the morphism $\pairmor{\interp{A}}{\interp{B}}$ in $\CC$.
The context morphism $\projmor{A}{B}$ in Rule \ruleref{sigma-elim} is interpreted as the morphism $\projmor{\interp{A}}{\interp{B}}$ in $\CC$.
Rules \ruleref{sigma-beta} and \ruleref{sigma-eta} are the conditions $\projmor{A}{B} \circ \pairmor{A}{B} = 1_{\Gamma.A.B}$ and $\pairmor{A}{B} \circ \projmor{A}{B} = 1_{\Gamma.\SigmaT{A}{B}}$ given in \cref{def:sigma-non-full} respectively.
The isomorphism $i_{\SigmaT{A}{B}, s}$ in Rule \ruleref{sigma-sub} is interpreted as the isomorphism $i_{\SigmaT{\interp{A}}{\interp{B}}, \interp{s}}$ in $\TT_{\interp{\Gamma}}$.
Rule \ruleref{sub-pair} is the commuting diagram given in Point \ref{item:sigma-bc-pair} of \cref{def:comp-cat-with-sigma}.

In Rule \ruleref{subt-sigma}, $\subtypesigma{f}{g}$ is interpreted as $\pairmor{\interp{A'}}{\interp{B'}} \circ \chi_0 \interp{f}.{\interp{B'}} \circ \chi_0 \interp{g} \circ \projmor{\interp{A}}{\interp{B}}$, and the second conclusion is the equality given in Point \ref{item:sigma-subt-chi} of \cref{def:comp-cat-with-sigma}. Rules \ruleref{subt-sigma-id} and \ruleref{subt-sigma-comp} are the equalities given in Points \ref{item:sigma-subt-id} and \ref{item:sigma-subt-comp} of \cref{def:comp-cat-with-sigma}.

Rules \ruleref{sigma-sub-id} and \ruleref{sigma-sub-comp} are the commuting diagrams in \cref{sec:iso-coherences}.

\subsection{Interpretation of Functorial $\id$-types}
\label{sec:interp-id}
The rules presented in \cref{fig:id-for-terms} regarding $\id$-types are interpreted as follows.

The type $\id_A$ in Rule \ruleref{id-form} is interpreted as the object $\id_{\interp{A}}$ in $\TT_{\interp{\Gamma.AA[\pi_{A}]}}$.
The context morphism $\reflmor{A}$ in Rule \ruleref{id-intro} is interpreted as the section $\reflmor{\interp{A}}$ of $\pi_{\interp{A}}$ in $\CC$.
The context morphism $\jmor{A}{C}{d}$ in Rule \ruleref{id-elim} is interpreted as the section $\jmor{\interp{A}}{\interp{C}}{\interp{d}}$  of $\pi_{\interp{\id_A}}$ in $\CC$.
Rule \ruleref{id-beta} is the condition $\jmor{A}{C}{d} \circ \reflmor{A} = d$ given in \cref{def:id-non-full}.
The isomorphism $i_{\id_A, s}$ in Rule \ruleref{id-sub} is interpreted as the isomorphism $i_{\id_{\interp{A}}, \interp{s}}$ in $\TT_{\interp{\Gamma.A[{s}].A[s][\pi_{A[s]}]}}$.
Rules \ruleref{sub-refl} and \ruleref{sub-j} are the commuting diagrams given in Point \ref{item:id-bc-rj} of \cref{def:comp-cat-with-id}.

In Rule \ruleref{subt-id}, $\subtypeid{t}$ is interpreted as $\subtypeid{\interp{t}}$. Rules \ruleref{subt-id-i} and \ruleref{subt-id-c} are the equalities given in Points \ref{item:id-subt-id} and \ref{item:id-subt-comp} of \cref{def:comp-cat-with-id}. Rules \ruleref{subt-id-refl} and \ruleref{subt-id-j} are the commuting diagrams given in point \ref{item:id-subt-chi} of \cref{def:comp-cat-with-id}.

Rules \ruleref{id-sub-id} and \ruleref{id-sub-comp} are the commuting diagrams in \cref{sec:iso-coherences}.

\section{Constructions for \cref{def:comp-cat-with-pi,def:comp-cat-with-sigma,def:comp-cat-with-id}}
\label{sec:constructions}

The following construction includes more details about the composition used in Point \ref{item:pi-subt} of \cref{def:comp-cat-with-pi}.
\begin{construction} \label{cons:pi-subtyping}
  Let $(\CC, \TT, p, \chi)$ be a comprehension category with dependent products.
  For each morphism $f : A' \to A$ in $\TT_{\Gamma}$ and $g : B[\chi_0 f] \to B'$ in $\TT_{\Gamma.A'}$, the following composition is in $\CC/\Gamma$:
  \[\begin{tikzcd}[column sep = 8em]
    {\Gamma.\PiT{A}{B}} & {\Gamma.\PiT{A}{B}.\PiT{A'}{B'} [\pi_{\PiT{A}{B}}]} & {\Gamma.\PiT{A'}{B'}}
    \arrow["{{\lambda(\chi_0 g[\appmor{A}{B}[\chi_0 f]])}}", from=1-1, to=1-2]
    \arrow["{\pi_{\PiT{A}{B}}.\PiT{A'}{B'}}", from=1-2, to=1-3]
  \end{tikzcd}\]
  where $\appmor{A}{B}[\chi_0 f]$ and $\chi_0 g[\appmor{A}{B}[\chi_0 f]]$ are given by the universal properties of the following pullback squares.
  \[\begin{tikzcd}
    [column sep = huge]
    {\Gamma.A'.\PiT{A}{B} [\pi_{A'}]} & {\Gamma.A'.\PiT{A}{B}[\pi_A][\chi_0 f]} & {\Gamma.A.\PiT{A}{B} [\pi_A]} \\
    & {\Gamma.A'.B[\chi_0 f]} & {\Gamma.A.B} \\
    & {\Gamma.A'} & {\Gamma.A}
    \arrow["\iso", from=1-1, to=1-2]
    \arrow["{\appmor{A}{B} [\chi_0 f]}"{description}, dashed, from=1-1, to=2-2]
    \arrow["{\pi_{\PiT{A}{B} [\pi_{A'}]}}"', curve={height=12pt}, from=1-1, to=3-2]
    \arrow["{\chi_0 f.\PiT{A}{B}[\pi_A]}", from=1-2, to=1-3]
    \arrow["{\appmor{A}{B}}", from=1-3, to=2-3]
    \arrow[from=2-2, to=2-3]
    \arrow["{\pi_{B[\chi_0 f]}}"', from=2-2, to=3-2]
    \arrow["\lrcorner"{anchor=center, pos=0.125}, draw=none, from=2-2, to=3-3]
    \arrow["{\pi_B}", from=2-3, to=3-3]
    \arrow["{\chi_0 f}"', from=3-2, to=3-3]
  \end{tikzcd}\]
  \[\begin{tikzcd}
    {\Gamma.\PiT{A}{B}.A'[\pi_{\PiT{A}{B}}]} & {\Gamma.A'.B[\chi_0 f]} \\
    & {\Gamma.\PiT{A}{B}.A'[\pi_{\PiT{A}{B}}].B'[\pi_{\PiT{A}{B}}.A']} & {\Gamma.A'.B'} \\
    & {\Gamma.\PiT{A}{B}.A'[\pi_{\PiT{A}{B}}]} & {\Gamma.A'}
    \arrow["{\appmor{A}{B}[\chi_0 f]}", from=1-1, to=1-2]
    \arrow["{\chi_0 g[\appmor{A}{B}[\chi_0 f]]}"{description}, dashed, from=1-1, to=2-2]
    \arrow[equal, curve={height=18pt}, from=1-1, to=3-2]
    \arrow["{\chi_0 g}", from=1-2, to=2-3]
    \arrow[from=2-2, to=2-3]
    \arrow["{\pi_{B'[\pi_{\PiT{A}{B}}.A']}}"', from=2-2, to=3-2]
    \arrow["\lrcorner"{anchor=center, pos=0.125}, draw=none, from=2-2, to=3-3]
    \arrow["{\pi_B'}", from=2-3, to=3-3]
    \arrow["{\pi_{\PiT{A}{B}}.A'}"', from=3-2, to=3-3]
    \end{tikzcd}\]
  The composition is in $\CC/\Gamma$, since $\lambda$ terms are sections.
\end{construction}

The following construction includes more details about the composition used in Point \ref{item:sigma-subt} of \cref{def:comp-cat-with-sigma}.
\begin{construction} \label{cons:sigma-subtyping}
  Let $(\CC, \TT, p, \chi)$ be a comprehension category with strong dependent sums. For each morphism $f : A \to A'$ in $\TT_{\Gamma}$ and $g : B \to B'[\chi_0 f]$ in $\TT_{\Gamma.A}$, the following composite is in $\CC/\Gamma$,
  \[\begin{tikzcd}
	{\Gamma.{\SigmaT{A}{B}}} & {\Gamma.A.B} & {\Gamma.A.B'[\chi_0 f]} & {\Gamma.A'.B'} & {\Gamma.\SigmaT{A'}{B'}}
	\arrow["{\projmor{A}{B}}", from=1-1, to=1-2]
	\arrow["{\chi_0 g}", from=1-2, to=1-3]
	\arrow["{\chi_0 f.B'}", from=1-3, to=1-4]
	\arrow["{\pairmor{A'}{B'}}", from=1-4, to=1-5]
  \end{tikzcd}\]
  since
  \begin{align*}
    \pi_{\SigmaT{A'}{B'}} \circ {\pairmor{A'}{B'}} \circ {\chi_0 f.B'} \circ  & \\
     {\chi_0 g} \circ {\projmor{A}{B}} & =  \pi_{A'} \circ \pi_{B'} \circ {\chi_0 f.B'} \circ {\chi_0 g} \circ {\projmor{A}{B}} & \text{(def. $\pairmor{A}{B}$)}  \\
    & = \pi_{A'} \circ \chi_0 f \circ \pi_{B'[\chi_0 f]} \circ \chi_0 g \circ {\projmor{A}{B}} & \text{($\chi(\chi_0 f)_{B'}$)}\\
    & = \pi_A \circ \pi_{B'[\chi_0 f]} \circ \chi_0 g \circ {\projmor{A}{B}} & \text{($f$ vertical)}\\
    & = \pi_A \circ \pi_{B} \circ {\projmor{A}{B}} & \text{($g$ vertical)} \\
    & = \pi_{\SigmaT{A}{B}}. & \text{(def. $\projmor{A}{B}$)}
  \end{align*}
\end{construction}

\section{Functoriality of $i_{X,s}$ for \cref{def:comp-cat-with-pi,def:comp-cat-with-sigma,def:comp-cat-with-id}}
\label{sec:iso-coherences}
In \cref{def:comp-cat-with-pi}, $i_{\PiT{A}{B},-}$ is functorial in that it preserves $\idiso{\PiT{A}{B}}$ and $\compiso{\PiT{A}{B}}{s}{s'}$. By this we mean that the two following diagrams commute for each suitable $A$, $B$, $s$, and $s'$.
    \[\begin{tikzcd}[ampersand replacement=\&]
      {\PiT{A[1_\Gamma]}{B[1.A]}} \& {\PiT{A}{B}[1_\Gamma]} \\
      {\PiT{A}{B}}
      \arrow["{i_{\PiT{A}{B}, 1_\Gamma}}","{\iso}"', from=1-1, to=1-2]
      \arrow["{\subtypepi{\idisoinv{A}}{i_{B,1_\Gamma.A}}}"', "{\iso}",from=1-1, to=2-1]
      \arrow["{\idiso{\PiT{A}{B}}}","{\iso}"', from=1-2, to=2-1]
    \end{tikzcd}
    \begin{tikzcd}[ampersand replacement=\&]
      {\PiT{A[s \circ s']}{B[(s \circ s').A]}} \& {\PiT{A}{B}[s \circ s']} \\
      {\PiT{A[s][s']}{B[s.A][s'.A[s]]}} \\
      {\PiT{A[s]}{B[s.A]}[s']} \& {\PiT{A}{B}[s][s']}
      \arrow["{i_{\PiT{A}{B},s \circ s'}}","{\iso}"', from=1-1, to=1-2]
      \arrow["{\subtypepi{\compisoinv{A}{s}{s'}}{i_{B,(s \circ s').A}}}"',"{\iso}", from=1-1, to=2-1]
      \arrow["{\compiso{\PiT{A}{B}}{s}{s'}}","{\iso}"', from=1-2, to=3-2]
      \arrow["{i_{\PiT{A[s']}{B[s'.A]}}}"',"{\iso}", from=2-1, to=3-1]
      \arrow["{i_{\PiT{A}{B},s}[s']}"',"{\iso}", from=3-1, to=3-2]
    \end{tikzcd}
    \]
Similarly, in \cref{def:comp-cat-with-sigma,def:comp-cat-with-id}, $i_{\SigmaT{A}{B},-}$ and $i_{\id_A,-}$  are functorial in that the following diagrams commute for each suitable $A$, $B$, $s$, and $s'$.
\[\begin{tikzcd}[ampersand replacement=\&]
      {\SigmaT{A[1_\Gamma]}{B[1.A]}} \& {\SigmaT{A}{B}[1_\Gamma]} \\
      {\SigmaT{A}{B}}
      \arrow["{i_{\SigmaT{A}{B}, 1_\Gamma}}","{\iso}"', from=1-1, to=1-2]
      \arrow["{\subtypesigma{\idiso{A}}{i_{B,1_\Gamma.A}}}"', "{\iso}",from=1-1, to=2-1]
      \arrow["{\idiso{\SigmaT{A}{B}}}","{\iso}"', from=1-2, to=2-1]
    \end{tikzcd}
    \begin{tikzcd}[ampersand replacement=\&]
      {\SigmaT{A[s \circ s']}{B[(s \circ s').A]}} \& {\SigmaT{A}{B}[s \circ s']} \\
      {\SigmaT{A[s][s']}{B[s.A][s'.A[s]]}} \\
      {\SigmaT{A[s]}{B[s.A]}[s']} \& {\SigmaT{A}{B}[s][s']}
      \arrow["{i_{\SigmaT{A}{B},s \circ s'}}","{\iso}"', from=1-1, to=1-2]
      \arrow["{\subtypesigma{\compiso{A}{s}{s'}}{i_{B,(s \circ s').A}}}"',"{\iso}", from=1-1, to=2-1]
      \arrow["{\compiso{\SigmaT{A}{B}}{s}{s'}}","{\iso}"', from=1-2, to=3-2]
      \arrow["{i_{\SigmaT{A[s']}{B[s'.A]}}}"',"{\iso}", from=2-1, to=3-1]
      \arrow["{(i_{\SigmaT{A}{B},s})[s']}"',"{\iso}", from=3-1, to=3-2]
    \end{tikzcd}
    \]
    \[
    \begin{tikzcd}[ampersand replacement=\&]
      {\id_{A[1_\Gamma]}} \& {\id_A[1_\Gamma.A.A[\pi_A] \circ \Gamma.i_{1.A}]} \\
      {\id_A} \& {\id_A [1_\Gamma]}
      \arrow["{i_{\id_A,1_A}}", from=1-1, to=1-2]
      \arrow["{\subtypeid{\idiso{A}}}"', from=1-1, to=2-1]
      \arrow["i^{\mathsf{sub}}", from=1-2, to=2-2]
      \arrow["{\idisoinv{\id_A}}"', from=2-1, to=2-2]
    \end{tikzcd}
    \begin{tikzcd}[ampersand replacement=\&,row sep=large]
      {\id_{A[s \circ s']}} \& {\id_A[s \circ s'.A.A[\pi_A] \circ \Gamma.i_{(s\circ s').A}]} \\
      \& \begin{array}{c} \id_A[s.A.A[\pi_A]\circ \Gamma.i_{s.A}] \\ {[s'.A[s].A[s][\pi_{A[s]}]\circ \Gamma.i_{s'.A[s]}]} \end{array} \\
      {\id_{A[s][s']}} \& {\id_{A[s]}[s'.A[s].A[s][\pi_{A[s]}]\circ \Gamma.i_{s'.A[s]}]}
      \arrow["{i_{\id_A, s\circ s'}}", from=1-1, to=1-2]
      \arrow["{\subtypeid{\compiso{A}{s}{s'}}}"{description}, from=1-1, to=3-1]
      \arrow["{i^{\mathsf{comp}} \circ i^{\mathsf{sub}}}"{description}, from=1-2, to=2-2]
      \arrow["{i^{-1}_{\id_A,s} [s'.A[s].A[s][\pi_{A[s]}]\circ \Gamma.i_{s'.A[s]}]}"{description}, from=2-2, to=3-2]
      \arrow["{i_{\id_{A[s]},s'}}"', from=3-1, to=3-2]
    \end{tikzcd}
    \]
    where $\idiso{}$, $i^{\mathsf{comp}}$ and $i^{\mathsf{sub}}$ are the isomorphisms from 
    \cref{sec:from-comp-cat-judgements-rules}, and we have $i_{B,1_\Gamma.A} : B[1_\Gamma.A][\Gamma.\idiso{}] \iso B$, $i_{B,(s \circ s').A}:B[(s \circ s').A][\Gamma.i^{\mathsf{comp}}] \iso  B[s.A][s'.A[s]]$ and $i_{s.A}: A[s][\pi_{A[s]}] \iso A[\pi_A][s.A]$.
\section{Rules of $\CTTsplit$} 
\label{sec:split-rules}

The judgments of $\CTTsplit$ are as follows:
\begin{enumerate}
    \item $\Gamma \ctx$
    \item $\Gamma \vdash  s : \Delta$
    \item $\Gamma \vdash s \equiv s' : \Delta$
    \item $\Gamma \vdash A \type$
    \item $\Gamma \vdash A \equiv B$ \label{judge:type_eq}
    \item $\Gamma ~|~ A \vdash t : B$
    \item $\Gamma ~|~ A \vdash t \equiv t' : B $
\end{enumerate}
Judgment \ref{judge:type_eq} is not a judgement in $\CTT$.

\begin{figure}[H]
\centering
\tcbset{colframe=black, colback=white, width=\textwidth, boxrule=0.1mm, arc=0mm, auto outer arc}
\begin{tcolorbox}
\centering
\scriptsize
  \[
    \unaryRule{\Gamma \ctx}{\Gamma \vdash 1_\Gamma: \Gamma}{\scriptsize{ctx-mor-id}}
    \binaryRule
    {\Gamma \vdash s : \Delta}{\Delta \vdash s' : \Theta}{\Gamma \vdash s' \circ s : \Theta}{\scriptsize{ctx-mor-comp}}
    \unaryRuleBinaryConcl
    {\Gamma \vdash s : \Delta}{\Gamma \vdash s \circ 1_\Gamma \equiv s: \Delta}{\Gamma \vdash 1_\Delta \circ s \equiv s : \Delta}{\scriptsize{ctx-id-unit}}
    \]
    \[
    \ternaryRule
    {\Gamma \vdash s : \Delta}{\Delta \vdash s' : \Theta}{\Theta \vdash s'' : \Phi}{\Gamma \vdash s'' \circ (s' \circ s) \equiv (s'' \circ s') \circ s : \Phi}{\scriptsize{ctx-comp-assoc}}
    \unaryRule
        {\Gamma \vdash A \type}{\Gamma ~|~ A \vdash 1_A: A}{\scriptsize{ty-mor-id}}
    \]
    \[
    \binaryRule
        {\Gamma ~|~ A \vdash t : B}{\Gamma ~|~B \vdash t' : C}{\Gamma ~|~A  \vdash t' \circ t : C}{\scriptsize{ty-mor-comp}}
    \unaryRuleBinaryConcl
        {\Gamma ~|~ A \vdash t : B}{\Gamma ~|~ A \vdash t \circ 1_A \equiv t: B}{\Gamma ~|~ A \vdash 1_B \circ t \equiv t: B}{\scriptsize{ty-id-unit}}
    \]
    \[
        \ternaryRule{\Gamma ~|~ A \vdash t : B}{\Gamma ~|~ B \vdash t' : C}{\Gamma ~|~ C \vdash t'' : D}{\Gamma ~|~ A \vdash t'' \circ (t' \circ t) \equiv (t'' \circ t') \circ t : D}{\scriptsize{ty-comp-assoc}}
    \]
   \[
       \unaryRule
       {\Gamma \vdash A \type}{\Gamma.A \ctx}{\scriptsize{ext-ty}}
       \unaryRule
       {\Gamma ~|~ A \vdash t : B}{\Gamma . A \vdash \Gamma . t : \Gamma.B }{\scriptsize{ext-tm}}
    \unaryRule
    {\Gamma \vdash A \type}{\Gamma . A \vdash \Gamma . 1_A \equiv 1_{\Gamma.A} : \Gamma. A}{\scriptsize{ext-id}}
\]
  \[
     \binaryRule
     {\Gamma ~|~ A \vdash t : B}{\Gamma ~|~ B \vdash t' : C}{\Gamma . A \vdash \Gamma . (t' \circ t) \equiv \Gamma. t' \circ \Gamma . t : \Gamma . B}{\scriptsize{ext-comp}}
      \unaryRule
        {\Gamma \vdash A \type}{\Gamma.A \vdash \pi_A : \Gamma}{\scriptsize{ext-proj}}
      \unaryRule
        {\Gamma ~|~ A \vdash t : B}{\Gamma . A \vdash \pi_B \circ \Gamma . t \equiv \pi_A : \Gamma}{\scriptsize{ext-c}}
        \]
    \[
       \binaryRule
       {\Gamma \vdash s : \Delta}{\Delta \vdash A \type}{\Gamma \vdash A[s] \type}{\scriptsize{sub-ty}}
       \binaryRule
       {\Gamma \vdash s : \Delta}{\Delta ~|~ A \vdash t: B}{\Gamma ~|~ A[s] \vdash t[s] : B[s]}{\scriptsize{sub-tm}}
    \]
    \[
    \binaryRule
    {\Gamma \vdash s : \Delta}{\Delta \vdash A \type}{\Gamma ~|~ A[s] \vdash 1_A[s] \equiv 1_{A[s]}: A[s]}{\scriptsize{sub-prs-id}}
    \ternaryRule
    {\Gamma \vdash s : \Delta}
    {\Delta ~|~ A \vdash t : B}{\Delta ~|~ B \vdash t' : C}{\Gamma ~|~ A[s] \vdash (t' \circ t)[s] \equiv t'[s] \circ t[s] : C[s]}{\scriptsize{sub-prs-comp}}
    \]
    \begin{tabular}{cc}
    $\unaryRule
    {\Gamma \vdash A \type}{\Gamma \vdash A[1_\Gamma] \equiv A}{\scriptsize{sub-id}}$ &
    $\ternaryRule
    {\Gamma \vdash s : \Delta}{\Delta \vdash s' : \Theta}{\Theta \vdash A \type}{\Gamma A[s' \circ s] \equiv A[s'][s]}{\scriptsize{sub-comp}}$
    \end{tabular}
    \[
    \unaryRule
    {\Gamma ~|~ A \vdash t: B}{\Gamma ~|~ A \vdash t[1_\Gamma] \equiv t : B}{sub-tm-id}
    \ternaryRule
    {\Gamma \vdash s : \Delta}{\Delta \vdash s' : \Theta}{\Theta ~|~ A \vdash t: B}
    {\Gamma ~|~ A[s' \circ s] \vdash t[s' \circ s] \equiv t[s'][s] : B[s' \circ s] }{sub-tm-comp}
    \]
    \[
    \binaryRule
    {\Gamma \vdash s : \Delta}{\Gamma \vdash t : A[s]}
    {\Gamma \vdash (s,t) : \Delta.A}{\scriptsize{sub-ext}}
    \unaryRule
    {\Gamma \vdash s : \Delta.A}{
      \Gamma \vdash p_2(s) : A[\pi_A \circ s]}
      {\scriptsize{sub-proj}}
    \unaryRule
    {\Gamma \vdash s : \Delta.A}{\Gamma \vdash (\pi_A \circ s ,p_2(s)) \equiv s : \Delta.A}{\scriptsize{sub-eta}}
    \]
    \[
    \binaryRuleBinaryConcl
    {\Gamma \vdash s : \Delta}{\Gamma \vdash t : A[s]}
    {\Gamma \vdash \pi_A \circ (s,t) \equiv s : \Delta}{\Gamma \vdash p_2(s,t) \equiv t : \Gamma.A[s]}{\scriptsize{sub-beta}}
\binaryRule
{\Delta ~|~ A \vdash t : B}{\Gamma \vdash s : \Delta}
{\Gamma.A[s] \vdash s.B \circ \Gamma.t[s] \equiv \Delta.t \circ s.A : \Delta.B}{\scriptsize{tm-sub-coh}}
\]
\[
\unaryRule{\Gamma \vdash A \type}{\Gamma.A \vdash \pi_A.A[1_\Gamma] \circ p_2(1_{\Gamma.A}) \equiv \Gamma.1_A : \Gamma.A}{\scriptsize{sub-proj-id}}
\]
\[
\ternaryRule{\Gamma \vdash s' : \Delta}{\Delta \vdash s : \Theta}{\Theta \vdash A  \type}
{
  \Gamma.A[s][s'] \vdash \pi_{A[s][s']}.A[s \circ s'] \circ p_2(s.A \circ s'.A[s]) \equiv \Gamma.1_A[s][s'] : \Gamma.A[s][s'] 
}
{\scriptsize{sub-proj-comp}}
\]
\[
\unaryRule
  {\Gamma \vdash s : \Delta}
  {\Gamma \vdash s \equiv s : \Delta}
  {ctx-eq-refl}
\unaryRule
  {\Gamma \vdash s_1 \equiv s_2 : \Delta}
  {\Gamma \vdash s_2 \equiv s_1 : \Delta}
  {ctx-eq-sym}
\binaryRule
  {\Gamma \vdash s_1 \equiv s_2 : \Delta}
  {\Gamma \vdash s_2 \equiv s_3 : \Delta}
  {\Gamma \vdash s_1 \equiv s_3 : \Delta}
  {ctx-eq-trans}
\]
\[
\binaryRule
  {\Delta \vdash s' : \Theta}
  {\Gamma \vdash s_1 \equiv s_2 : \Delta}
  {\Gamma \vdash s' \circ s_1 \equiv s' \circ s_2 : \Theta}
  {ctx-comp-cong-1}
\binaryRule
  {\Gamma \vdash s' : \Delta}
  {\Delta \vdash s_1 \equiv s_2 : \Theta}
  {\Gamma \vdash  s_1 \circ s' \equiv s_2 \circ s': \Theta}
  {ctx-comp-cong-2}
\]
\[
\unaryRule
  {\Gamma ~|~ A \vdash t : B}
  {\Gamma ~|~ A \vdash t \equiv t : B}
  {ty-eq-refl}
\unaryRule
  {\Gamma ~|~ A \vdash t_1 \equiv t_2 : B}
  {\Gamma ~|~ A \vdash t_2 \equiv t_1 : B}
  {ty-eq-sym}
\binaryRule
  {\Gamma ~|~ A \vdash t_1 \equiv t_2 : B}
  {\Gamma ~|~ A \vdash t_2 \equiv t_3 : B}
  {\Gamma ~|~ A \vdash t_1 \equiv t_3 : B}
  {ty-eq-trans}
\]
\[
\binaryRule
  {\Gamma ~|~ B \vdash t' : C}
  {\Gamma ~|~ A \vdash t_1 \equiv t_2 : B}
  {\Gamma ~|~ A \vdash t' \circ t_1 \equiv t' \circ t_2 : C}
  {ty-comp-cong-1}
\binaryRule
  {\Gamma ~|~ A \vdash t' : B}
  {\Gamma ~|~ B \vdash t_1 \equiv t_2 : C}
  {\Gamma ~|~ A \vdash t_1 \circ t' \equiv t_2 \circ t' : C}
  {ty-comp-cong-2}
\]
\[
\unaryRule
  {\Gamma ~|~ A \vdash t_1 \equiv t_2 : B}
  {\Gamma.A \vdash \Gamma.t_1 \equiv \Gamma.t_2 : \Gamma.B}
  {ext-cong}
\binaryRule{\Delta \vdash A \type}{\Gamma \vdash s \equiv s' : \Delta}{\Gamma \vdash A[s] \equiv A[s']}{\scriptsize{sub-cong}}
\binaryRule
  {\Gamma \vdash s : \Delta}
  {\Delta ~|~ A \vdash t_1 \equiv t_2 : B}
  {\Gamma ~|~ A[s] \vdash t_1[s] \equiv t_2[s] : B[s]}
  {sub-cong-tm}
\]
\[
\ternaryRule
{\Delta \vdash A \type}
{\Gamma \vdash s_1 \equiv s_2 : \Delta}
{\Gamma \vdash t_1 \equiv t_2 : A[s_1]}
{\Gamma \vdash (s_1,t_1) \equiv (s_2,t_2) : \Delta.A}{\scriptsize{sub-ext-cong}}
\binaryRuleBinaryConcl
{\Delta \vdash A \type}
{\Gamma \vdash s_1 \equiv s_2 : \Delta.A}
{\Gamma \vdash p_1(s_1) \equiv p_1(s_2) : \Delta}{\Gamma \vdash p_2(s_1) \equiv p_2(s_2) : A[s_1]}{\scriptsize{sub-proj-cong}}
\]
\end{tcolorbox}
\caption{Structural Rules of $\CTTsplit$. Note that Rule \protect\ruleref{tm-sub-coh} uses the notation introduced in \cref{lemma:s.A}.}
\label{fig:from-comp-cat-congruence-split}
\end{figure}

\begin{figure}[!htb]
  \centering
  \tcbset{colframe=black, colback=white, width=\textwidth, boxrule=0.1mm, arc=0mm, auto outer arc}
  \begin{tcolorbox}
  \centering
  \scriptsize
  \[
  \binaryRule{\Gamma \vdash A \type}{\Gamma.A \vdash B \type}{\Gamma \vdash \PiT{A}{B} \type}{\scriptsize{pi-form}}
  \ternaryRule{\Delta \vdash A \type}{\Delta.A \vdash B \type}{\Gamma \vdash s : \Delta}{\Gamma \vdash \PiT{A[s]}{B[s.A]} \equiv \PiT{A}{B}[s]}{\scriptsize{pi-sub}}
  \unaryRule
  {\Gamma.A \vdash b : B}
  {\Gamma \vdash \lambda b : \PiT{A}{B}}
  {\scriptsize{pi-intro}}
  \]
  \[
  \binaryRuleBinaryConcl{\qquad \qquad \quad \Gamma \vdash A \type}{\Gamma.A \vdash B \type \qquad \qquad \quad}{\Gamma.A.\PiT{A}{B}[\pi_A] \vdash \appmor{A}{B} : \Gamma.A.B}{\Gamma.A.\PiT{A}{B}[\pi_A] \vdash \pi_B \circ \appmor{A}{B} \equiv \pi_{\PiT{A}{B}[\pi_A]} : \Gamma.A}{\scriptsize{pi-elim}}
  \ternaryRule{\Gamma \vdash A \type}{\Gamma.A \vdash B \type}{\Gamma.A \vdash b : B}
  {\Gamma. A \vdash \appmor{A}{B} \circ p_2(\lambda b \circ \pi_A) \equiv b : \Gamma.A.B}{\scriptsize{pi-beta}}
  \]
  \[
  \unaryRule
  {\Gamma \vdash f : \PiT{A}{B}}
  {\Gamma \vdash \lambda (\appmor{A}{B} \circ p_2(f \circ \pi_A)) \equiv f : \Gamma.\PiT{A}{B}}{\scriptsize{pi-eta}}
  \]
  \[
  \binaryRule
  {\Gamma \vdash s : \Delta}{\Gamma \vdash b : \PiT{A}{B}}
  {\Gamma \vdash \lambdamor{A}{B}(b) \circ s \equiv s.\PiT{A}{B} \circ \lambdamor{A[s]}{B[s.A]} (p_2(b \circ s.A)): \Delta.\PiT{A}{B}}{\scriptsize{sub-lam}}
  \]
  \begin{tabular}{cc}
    $\binaryRule{\Gamma \vdash A \type}{\Gamma.A \vdash B \type}{\Gamma \vdash \SigmaT{A}{B} \type}{\scriptsize{sigma-form}}$ &
    $ \binaryRuleBinaryConcl{\Gamma \vdash A \type}{\Gamma.A \vdash B \type}{\Gamma.A.B \vdash \pairmor{A}{B} : \Gamma.\SigmaT{A}{B}}{\Gamma \vdash \pi_{\SigmaT{A}{B}} \circ \pairmor{A}{B} \equiv \pi_A \circ \pi_B : \Gamma}{\scriptsize{sigma-intro}} $
  \end{tabular}
  \begin{tabular}{cc}
    $ \binaryRule{\Gamma \vdash A \type}{\Gamma.A \vdash B \type}{\Gamma.\SigmaT{A}{B} \vdash \projmor{A}{B} : \Gamma.A.B}{\scriptsize{sigma-elim}}$ &
    $ \binaryRule{\Gamma \vdash A \type}{\Gamma.A \vdash B \type}{\Gamma.A.B \vdash \projmor{A}{B} \circ \pairmor{A}{B} \equiv 1_{\Gamma.A.B} : \Gamma.A.B}{\scriptsize{sigma-beta}} $
  \end{tabular}
  \[\binaryRule{\Gamma \vdash A \type}{\Gamma.A \vdash B \type}{\Gamma.\SigmaT{A}{B} \vdash \pairmor{A}{B} \circ \projmor{A}{B} \equiv 1_{\Gamma.\SigmaT{A}{B}} : \Gamma.\SigmaT{A}{B}}{\scriptsize{sigma-eta}}\] 
  \[ \ternaryRule{\Delta \vdash A \type}{\Delta.A \vdash B \type}{\Gamma \vdash s : \Delta}{\Gamma \vdash \SigmaT{A[s]}{B[s.A]} \equiv \SigmaT{A}{B}[s]}{\scriptsize{sigma-sub}}
  \]
  \[ \ternaryRule{\Delta \vdash A \type}{\Delta.A \vdash B \type}{\Gamma \vdash s : \Delta}{\Gamma.A[s].B[s.A] \vdash s.\SigmaT{A}{B} \circ \pairmor{A[s]}{B[s.A]} \equiv \pairmor{A}{B} \circ s.A.B : \Delta.\SigmaT{A}{B}}{\scriptsize{sub-pair}} \]
    \begin{tabular}{cc}
    $\unaryRule
    {\Gamma \vdash A \type}{\Gamma.A.A[\pi_A] \vdash \id_A \type}{\scriptsize{id-form}} $ &
    $\unaryRuleBinaryConcl
    {\Gamma \vdash A \type}{\quad \qquad \Gamma.A \vdash \reflmor{A} : \Gamma.A.A[\pi_A].\id_A \qquad \quad}{\Gamma.A \vdash \pi_{\id_A} \circ \reflmor{A} \equiv p_2(1_{\Gamma.A}) : \Gamma.A.A[\pi_A]}{\scriptsize{id-intro}} $
  \end{tabular}
    \[
    \ternaryRule
    {\Gamma.A.A[\pi_A].\id_A \vdash C \type}{\Gamma.A \vdash d : \Gamma.A.A[\pi_A].\id_A.C}
    {\Gamma.A \vdash \pi_C \circ d \equiv \reflmor{A} : \Gamma.A.A.\id_A}
    {\Gamma.A.A[\pi_A].\id_A \vdash \jmor{A}{C}{d} : C}
    {\scriptsize{id-elim}}
    \]
    \[
    \ternaryRule
    {\Gamma.A.A[\pi_A].\id_A \vdash C \type}{\Gamma.A \vdash d : \Gamma.A.A[\pi_A].\id_A.C}{\Gamma.A \vdash \pi_C \circ d \equiv \reflmor{A} : \Gamma.A.A.\id_A}{\Gamma.A \vdash \jmor{A}{C}{d} \circ \reflmor{A} \equiv d : \Gamma.A.A[\pi_A].\id_A.C}{\scriptsize{id-beta}}
    \]
    \[
    \binaryRule{\Delta \vdash A \type}{\Gamma \vdash s : \Delta}{\Gamma.A[s].A[s][\pi_{A[s]}] \vdash \id_{A[s]} \equiv \id_A [s.A.A[\pi_A] \circ \Gamma.A[s].i^{\mathsf{comp}}_A]}{\scriptsize{id-sub}}
    \]
    \[
    \binaryRule{\Delta \vdash A \type}{\Gamma \vdash s : \Delta}{\Gamma.A[s].A[s][\pi_{A[s]}] \vdash s.A.A[\pi_A].\id_A  \circ \reflmor{A[s]}  \equiv \reflmor{A} \circ s.A.A[\pi_A] : \Delta.A.A[\pi_A].\id_A}{\scriptsize{sub-refl}}
    \]
    \[
    \binaryRuleBinaryConcl{\Delta \vdash A \type}{\Gamma \vdash s : \Delta}
    {\Gamma.A[s].A[s][\pi_{A[s]}].\id_{A[s]} \vdash s.A.A[\pi_A].\id_A.C \circ \jmor{A[s]}{C[s.A.A[\pi_A].\id_A]}{d[s.A.A[\pi_A]]} \equiv}
    {\jmor{A}{C}{d} \circ s.A.A[\pi_A].\id_A : \Delta.A.A[\pi_A].\id_A.C}{\scriptsize{sub-j}}
    \]
\end{tcolorbox}
\caption{Rules for $\Pi$-, $\Sigma$- and $\id$-types in $\CTTsplit$. Rules \protect\ruleref{pi-sub}, \protect\ruleref{sub-lam}, \protect\ruleref{sigma-sub}, \protect\ruleref{sub-pair}, \protect\ruleref{id-sub}, \protect\ruleref{sub-refl} and \protect\ruleref{sub-j} use the notation introduced in \cref{lemma:s.A}. For example, in Rule \protect\ruleref{pi-sub}, $s.A$ is $(s \circ \pi_{A[s]}, p_2(1_{\Gamma.A[s]}))$.}
\label{fig:type-formers-split}
\end{figure}

\begin{figure}[!htb]
  \centering
  \tcbset{colframe=black, colback=white, width=\textwidth, boxrule=0.1mm, arc=0mm, auto outer arc}
  \begin{tcolorbox}
  \centering
  \scriptsize
  \[
  \quadRuleBinaryConcl
  {\qquad \qquad \qquad \Gamma.A \vdash B \type}{\Gamma.A' \vdash B' \type}{\Gamma ~|~ A' \vdash f : A}{\Gamma.A' ~|~ B[\Gamma.f] \vdash g : B' \qquad \qquad \qquad}
  {\Gamma ~|~ \PiT{A}{B} \vdash \subtypepi{f}{g} : \PiT{A'}{B'}}
  {\Gamma. \PiT{A}{B} \vdash \Gamma.\subtypepi{f}{g} \equiv \Gamma.g \circ \lambda(p_2(\Gamma.g \circ (\pi_{\PiT{A}{B}[\pi_{A'}]}.B[\chi_0 f] \circ p_2 (\appmor{A}{B} \circ \Gamma.f.\PiT{A}{B}[\pi_A])))) : \Gamma. \PiT{A'}{B'}}
  {\scriptsize{subt-pi}}
  \]
  \[
  \binaryRule
  {\Gamma \vdash A \type}{\Gamma.A \vdash B \type}
  {\Gamma ~|~ \PiT{A}{B} \vdash \subtypepi{1_{A}}{1_B} \equiv 1_{\PiT{A}{B}} : \PiT{A}{B}}
  {\scriptsize{subt-pi-id}}
  \]
  \[
  \senRule{\Gamma.A \vdash B \type}{\Gamma.A' \vdash B' \type}
  {\Gamma ~|~ A' \vdash f : A}
  {\Gamma.A ~|~ B[\Gamma.f] \vdash g : B'}
  {\Gamma ~|~ A'' \vdash f' : A'}
  {\Gamma.A' ~|~ B'[\Gamma.f'] \vdash g' : B''}
  {\Gamma ~|~ \PiT{A}{B} \vdash \subtypepi{f \circ f'}{g' \circ g[\Gamma.f']} = \subtypepi{f'}{g'} \circ \subtypepi{f}{g} : \PiT{A''}{B''}}
  {\scriptsize{subt-pi-comp}}
  \]
  \[
  \quadRuleBinaryConcl
  {\Gamma.A \vdash B \type}{\Gamma.A' \vdash B' \type}{\Gamma ~|~ A \vdash f : A'}{\Gamma.A ~|~ B \vdash g : B'[\Gamma.f]}
  {\Gamma ~|~ \SigmaT{A}{B} \vdash \subtypesigma{f}{g} : \SigmaT{A'}{B'}}
  {\Gamma.\SigmaT{A}{B} \vdash \Gamma.\subtypesigma{f}{g} \equiv \pairmor{A'}{B'} \circ (\Gamma.f).B' \circ \Gamma.g \circ \projmor{A}{B} : \Gamma.\SigmaT{A'}{B'}}
  {\scriptsize{subt-sigma}}
  \]
  \[
  \binaryRule
  {\Gamma \vdash A \type}{\Gamma.A \vdash B \type}
  {\Gamma ~|~ \SigmaT{A}{B} \vdash \subtypesigma{1_{A}}{1_B} \equiv 1_{\SigmaT{A}{B}} : \SigmaT{A}{B}}
  {\scriptsize{subt-sigma-id}}
  \]
  \[
  \senRule{\Gamma.A \vdash B \type}{\Gamma.A' \vdash B' \type}{\Gamma ~|~ A \vdash f : A'}{\Gamma.A ~|~ B \vdash g : B'[\Gamma.f]}{\Gamma ~|~ A' \vdash f' : A''}{\Gamma.A' ~|~ B' \vdash g' : B''[\Gamma.f']}
  {\Gamma ~|~ \SigmaT{A}{B} \vdash \subtypesigma{f' \circ f}{g'[\Gamma.f] \circ g} \equiv \subtypesigma{f'}{g'} \circ \subtypesigma{f}{g} : \SigmaT{A''}{B''}}
  {\scriptsize{subt-sigma-comp}}
  \]
  \[
\unaryRule
{\Gamma ~|~ A \vdash t : B}
{\Gamma.A.A[\pi_A] ~|~ \id_A \vdash \subtypeid{t} : \id_B[\Gamma.t.t]}
{\scriptsize{subt-id}}
\unaryRule
{\Gamma \vdash A \type}{\Gamma.A.A[\pi_A].\id_A \vdash \subtypeid{1_{\Gamma.A}} \equiv 1_{\id_A} : \Gamma.A.A[\pi_A].\id_A[1_{\Gamma.A.A[\pi_A]}]}{\scriptsize{subt-id-i}}
\]
\[
\binaryRule
{\Gamma ~|~ A \vdash t : B}{\Gamma ~|~ B \vdash t' : C}
{\Gamma.A.A[\pi_A] ~|~ \id_A \vdash \subtypeid{t'}[\Gamma.t.t] \circ \subtypeid{t} \equiv \subtypeid{t' \circ t} : \id_C [\Gamma.t'.t'][\Gamma.t.t]}{\scriptsize{subt-id-c}}
\]
\[
\unaryRule
{\Gamma ~|~ A \vdash t : B}{\Gamma.A \vdash (\Gamma.t.t).\id_B \circ \Gamma.A.A.\subtypeid{t} \circ \reflmor{A} \equiv \reflmor{B} \circ \Gamma.t : \Gamma.B.B.\id_B}{\scriptsize{subt-id-refl}}
\]
\[
\unaryRule
{\Gamma ~|~ A \vdash t : B}
{\Gamma.A.A.\id_A \vdash ((\Gamma.t.t).\id_B \circ \Gamma.A.A.\subtypeid{t}).C \circ \jmor{A}{C[\Gamma.t.t]}{d[\Gamma.t.t]} \equiv \jmor{B}{C}{d} \circ (\Gamma.t.t).\id_B \circ \Gamma.A.A.\subtypeid{t} : \Gamma.B.B.\id_B.C}{\scriptsize{subt-id-j}}
\]
where $\Gamma.t.t \coloneqq (\Gamma.t \circ \pi_{A[\pi_A]}).B[\pi_B] \circ p_2(\Gamma.t \circ \pi_A.A)$
\end{tcolorbox}
\caption{Rules for subtyping for $\Pi$-, $\Sigma$- and $\id$-types in $\CTTsplit$. Rule \protect\ruleref{subt-sigma} uses the notation introduced in \cref{lemma:s.A}. By this notation, $(\Gamma.f).B'$ is $(\Gamma.f \circ \pi_{B'[\Gamma.f]}, p_2(1_{\Gamma.B'[\Gamma.f]}))$}
\label{fig:subty-split}
\end{figure}

\end{document}